\pgfplotsset{compat=1.12}
\definecolor{DarkGreen}{rgb}{0.1,0.5,0.1}
\definecolor{DarkRed}{rgb}{0.5,0.1,0.1}
\definecolor{DarkBlue}{rgb}{0.1,0.1,0.5}
\definecolor{Gray}{rgb}{0.2,0.2,0.2}
\definecolor{c1}{RGB}{38, 70, 83}
\definecolor{c2}{RGB}{42, 157, 143}
\definecolor{c3}{RGB}{233, 196, 106}
\definecolor{c5}{RGB}{231, 111, 81}
\definecolor{c4}{RGB}{244, 162, 97}
\definecolor{c1}{RGB}{38, 70, 83}
\definecolor{c2}{RGB}{42, 157, 143}
\definecolor{c3}{RGB}{233, 196, 106}
\definecolor{c5}{RGB}{231, 111, 81}
\definecolor{c4}{RGB}{244, 162, 97}
\lstdefinestyle{mystyle}{
    commentstyle=\color{DarkBlue},
    keywordstyle=\color{DarkRed},
    numberstyle=\tiny\color{Gray},
    stringstyle=\color{DarkGreen},
    basicstyle=\footnotesize,
    breakatwhitespace=false,         
    breaklines=true,                 
    captionpos=b,                    
    keepspaces=true,                 
    numbers=left,                    
    numbersep=5pt,                  
    showspaces=false,                
    showstringspaces=false,
    showtabs=false,                  
    tabsize=2
}
\def\draft{1}
\def\submit{0}
    \def\ShowAuthNotes{1}
    \def\ShowAuthNotes{0}
\newcommand{\forsubmit}[1]{#1}
\newcommand{\forreals}[1]{}
\newcommand{\forreals}[1]{#1}
\newcommand{\forsubmit}[1]{}
\newtheorem{theorem}{Theorem}[section]
\newtheorem{lemma}[theorem]{Lemma}
\newtheorem{proposition}[theorem]{Proposition}
\theoremstyle{definition}
\newtheorem{definition}[theorem]{Definition}
\newcommand{\chapterref}[1]{\hyperref[ch:#1]{Chapter~\ref{ch:#1}}}
\newcommand{\claimref}[1]{\hyperref[claim:#1]{Claim~\ref{claim:#1}}}
\newcommand{\corollaryref}[1]{\hyperref[cor:#1]{Corollary~\ref{cor:#1}}}
\newcommand{\definitionref}[1]{\hyperref[def:#1]{Definition~\ref{def:#1}}}
\newcommand{\equationref}[1]{\hyperref[eq:#1]{Equation~\ref{eq:#1}}}
\newcommand{\factref}[1]{\hyperref[fact:#1]{Fact~\ref{fact:#1}}}
\newcommand{\figureref}[1]{\hyperref[fig:#1]{Figure~\ref{fig:#1}}}
\newcommand{\tableref}[1]{\hyperref[tab:#1]{Table~\ref{tab:#1}}}
\newcommand{\itemref}[1]{\hyperref[item:#1]{Item~(\ref{item:#1})}}
\newcommand{\lemmaref}[1]{\hyperref[lem:#1]{Lemma~\ref{lem:#1}}}
\newcommand{\propref}[1]{\hyperref[prop:#1]{Proposition~\ref{prop:#1}}}
\newcommand{\propositionref}[1]{\hyperref[prop:#1]{Proposition~\ref{prop:#1}}}
\newcommand{\remarkref}[1]{\hyperref[rem:#1]{Remark~\ref{rem:#1}}}
\newcommand{\sectionref}[1]{\hyperref[sec:#1]{Section~\ref{sec:#1}}}
\newcommand{\theoremref}[1]{\hyperref[thm:#1]{Theorem~\ref{thm:#1}}}
\newcommand{\Esymb}{\mathbb{E}}
\newcommand{\Psymb}{\mathbb{P}}
\DeclareMathOperator*{\E}{\Esymb}
\DeclareMathOperator*{\ProbOp}{\Psymb r}
\renewcommand{\Pr}{\ProbOp}
\renewcommand{\hat}{\widehat}
\newcommand{\cD}{{\cal D}}
\newcommand{\cH}{{\cal H}}
\newcommand{\cX}{{\cal X}}
\newcommand{\defeq}{\stackrel{\small \mathrm{def}}{=}}
\renewcommand{\leq}{\leqslant}
\renewcommand{\le}{\leqslant}
\renewcommand{\geq}{\geqslant}
\renewcommand{\ge}{\geqslant}
\newcommand{\Abs}[1]{\left\lvert#1\right\rvert}
\newcommand{\R}{\mathbb{R}}
\newcommand{\D}{\mathcal D}
\newcommand{\ignore}[1]{}
\newcommand{\OPT}{\mathrm{OPT}}
\DeclareMathOperator*{\argmin}{arg\,min}
\renewcommand{\epsilon}{\varepsilon}
\newcommand{\eps}{\epsilon}
\newcommand{\remove}[1]{}
\def\mydatewithyear{\leavevmode\hbox{\the\year/\twodigits\month/\twodigits\day}}
\def\twodigits#1{\ifnum#1<10 0\fi\the#1}
\def\twodigits#1{\ifnum#1<10 0\fi\the#1}
\newtheorem*{rep@theorem}{\rep@title}
\newcommand{\newreptheorem}[2]{%
\newenvironment{rep#1}[1]{%
 \def\rep@title{#2 \ref{##1}}%
 \begin{rep@theorem}}%
 {\end{rep@theorem}}}
\newcounter{protocol}
\newenvironment{protocol}[1][htb]{%
  \let\c@algorithm\c@protocol
  \renewcommand{\ALG@name}{Procedure}
  \begin{algorithm}[#1]%
  }{\end{algorithm}
}
\newcommand{\Dtilde}{\tilde{\mathcal{D}}}
\newcommand{\Dstar}{\mathcal{D}^\ast}
\newcommand{\vx}{\mathbf x}
\newcommand{\vE}{\mathbf E}
\newcommand{\Rplus}{\mathbb{R}_{+}}
\newcommand{\vv}{\mathbf v}
\newcommand{\vb}{\mathbf b}
\newcommand{\vp}{\mathbf p}
\newcommand{\vD}{\mathcal{\mathbf{D}}}
\newcommand{\vDstar}{\mathcal{\mathbf{D}}^\ast}
\newcommand{\vDtilde}{\mathcal{\mathbf{\tilde{D}}}}
\newcommand{\REV}{\mathrm{Rev}}
\newcommand{\authnote}[2]{{ \footnotesize \bf{\color{DarkRed}[#1's note:
{\color{DarkBlue}#2}]}}}
\newcommand{\authnote}[2]{}
\newtheorem{counterexample}{Counterexample}
\begin{document}

\begin{center}

  {\bf{\LARGE{Robust Learning of Optimal Auctions}}}

\vspace*{.35in}

{\large{
\begin{tabular}{ccc}
Wenshuo Guo$^{\dagger}$, Michael I. Jordan$^{\dagger, \ddag}$, Manolis Zampetakis$^{\dagger}$
\end{tabular}
}}
\vspace*{.2in}

\begin{tabular}{c}
$^\dagger$Department of Electrical Engineering and Computer Sciences,\\ 
$^\ddag$Department of Statistics, 
\\University of California, Berkeley\\
\end{tabular}

\vspace*{.2in}

\today

\vspace*{.2in}

\begin{abstract}
    We study the problem of learning revenue-optimal multi-bidder auctions from samples when the samples of bidders' valuations can be adversarially corrupted or drawn from distributions that are adversarially perturbed. First, we prove tight upper bounds on the revenue we can obtain with a corrupted distribution under a population model, for both regular valuation distributions and distributions with monotone hazard rate (MHR). We then propose new algorithms that, given only an ``approximate distribution'' for the bidder's valuation, can learn a mechanism whose revenue is nearly optimal simultaneously for all ``true distributions'' that are $\alpha$-close to the original distribution in Kolmogorov-Smirnov distance. The proposed algorithms operate beyond the setting of bounded distributions that have been studied in prior works, and are guaranteed to obtain a fraction  $1-O(\alpha)$ of the optimal revenue under the true distribution when the distributions are MHR.  Moreover, they are guaranteed to yield at least a fraction $1-O(\sqrt{\alpha})$ of the optimal revenue when the distributions are regular. We prove that these upper bounds cannot be further improved, by providing matching lower bounds. Lastly, we derive sample complexity upper bounds for learning a near-optimal auction for both MHR and regular distributions.

\end{abstract}
\end{center}

\section{Introduction}

Optimal auctions play a crucial role in economic theory, with a wide range of applications across various industries, public sectors, and online platforms \citep{myerson1981optimal,bykowsky2000mutually, roth2002last, klemperer2002really, milgrom2004putting,  lahaie2007sponsored}. In such auctions, pricing mechanisms need to be determined by the auction designer so as to satisfy various desired goals, such as revenue maximization and incentive compatibility. Often this determination is made based on information about the buyers that is assumed to be available a priori. For example, in a standard valuation model, each bidder has a valuation over the available items, and if the seller knows the distribution of these valuations, she could design an optimal auction which maximizes her revenue. 

Arguably the fundamental difficulty in the design of optimal auctions is that real valuations are private and unknown to the auction designer. Consider specifically the problem of selling one item to multiple buyers. Suppose that we model the buyers' valuations as arising as independent draws from buyer-specific prior distributions. In this scenario, what is the optimal mechanism in terms of the expected revenue? This problem was solved by \citet{myerson1981optimal} through a characterization of \emph{virtual value functions}. In particular, we can define a virtual value function of each buyer based on their prior distributions. An optimal auction then lets the buyer with the largest non-negative virtual value win the item, and charges the winner a price that equals the threshold value above which she wins.\footnote{More generally, the optimal auction picks the winner based on the virtual value after an ``ironing'' procedure.}

Unfortunately, there is a further fundamental challenge in deploying these theoretical results in practice, which is that in real-world settings the auction designer may not even know the prior distributions on valuations. Instead, what the designer might hope for is that there is a stream of previous transactions, or some other relevant auxiliary data, that is helpful in inferring the buyers' private distributions. This perspective has motivated an active recent literature learning optimal auctions from samples~\citep{cole2014sample, devanur2016sample, morgenstern2015pseudo, morgenstern2016learning, syrgkanis2017sample, dudik2017oracle, gonczarowski2017efficient, huang2018making, roughgarden2016ironing, balcan2018general, guo2019settling, roughgarden2019minimizing, gonczarowski2021sample}. In this line of work, the central question is: suppose we are only able to access the prior distributions in the form of independent samples, how many samples are sufficient and necessary for finding an approximately optimal auction?

While this merging of mechanism design and learning theory is appealing, a further concern arises.  Given the potentially adversarial setting of auction design, do we really believe that the data that we observe are drawn in accord with our assumptions?  More concretely, is the learning of optimal auctions robust to adversarial corruptions of the samples? This problem is arguably at the core of what it means to learn an optimal auction.  It is a challenging problem; indeed, as we show in Counterexample \ref{cex:necessity} in Section \ref{sec:samples}, auction designs that are optimal in the absence of corruptions can become arbitrarily bad even if a small portion of the samples are corrupted.  Building on earlier work by \citet{cai2017learning} and \citet{brustle2020multi}, we tackle a key open problem---what is the best approximation to the optimal revenue for arbitrary levels of corruption for distributions with unbounded support? And what is the mechanism that achieves it?

In summary, in this work we explore the problem of the robust learning of optimal auctions, where the samples of bidders' valuations are subject to corruption and their support is unbounded. In particular, we consider having access to samples that are drawn from some distribution $\Dtilde$ which is within a Kolmogorov-Smirnov (KS) distance $\alpha$ of the true distribution $\Dstar$. Denote $\OPT$ as the maximum revenue we can achieve under the true valuation distributions. Our goal is to design mechanisms that are guaranteed to achieve a revenue of at least $(1-\rho(\alpha)) \cdot \OPT$ for the smallest possible error $\rho(\alpha)$ and with the use of a minimal number of samples. 

\subsection{Our results} 

We study the problem of learning revenue-optimal multi-bidder auctions from samples when the samples of bidders' valuations can be adversarially corrupted or drawn from distributions that are adversarially perturbed. We summarize our main results as follows:
\begin{enumerate}
    \item We derive tight upper bounds on the revenue we can obtain with a corrupted distribution under a population model. For distributions with monotone hazard rate (MHR), and with total corruption $\alpha$, we obtain an approximation ratio of $1 - O(\alpha)$ compared to the optimal revenue under the true distribution (see Theorem~\ref{thm:mhr-inf-sample}). For regular valuation distributions, where for total corruption $\alpha$, we get an approximation ratio of $1 - O(\sqrt{\alpha})$ (see Theorem~\ref{thm:regular-inf-sample}).
    
    \item To achieve these upper bounds, we propose a new \emph{theoretical} algorithm for the population model (see Algorithm~\ref{Alg:inf-sample-single-bidder}) that, given only an ``approximate distribution'' for the bidder's valuation, can learn a mechanism whose revenue is nearly optimal simultaneously for all ``true distributions'' that are $\alpha$-close to the given distribution in Kolmogorov-Smirnov distance. The proposed algorithm operates beyond the setting of bounded distributions that have been studied in prior works; indeed, they apply to general unbounded MHR and regular distributions.
    
    \item We further show that these upper bounds under the population model cannot be further improved (up to constant log factors), by providing matching lower bounds for both the MHR and regular distributions (see Theorem~\ref{thm:mhr-inf-sample-LB} and Theorem~\ref{thm:regular-inf-sample-LB}).
    
    \item Lastly, we derive sample complexity upper bounds for learning a near-optimal auction for both MHR and regular distributions with multiple bidders (Theorem~\ref{thm:regular-sample-UB} and Theorem~\ref{thm:mhr-sample-UB}), and propose a \emph{practical} algorithm (see Algorithm~\ref{alg:emp-single-bidder}) which takes samples as input. We also provide accompanying sample complexity lower bounds (Theorem~\ref{thm:sample-LB}), and demonstrate a small gap relative to the corresponding upper bounds.
\end{enumerate}

\subsection{Related work} \label{sec:related_work}

Designing revenue optimal auctions is a classic problem in economic theory that has attracted much research attention. We survey the most closely related work  in two main areas.

\paragraph{Learning optimal auctions from samples.}

Recent work has explored settings of learning approximately optimal auction from samples, both for  single-item auctions~\citep{cole2014sample}, and multi-item auctions~\citep{balcan2018general, balcan2016sample, morgenstern2015pseudo, syrgkanis2017sample}. Most recently,~\citet{guo2019settling} provide a complete set of sample complexity bounds for single-item auctions, by deriving matching upper and lower bounds up to a poly-logarithmic factor. While these approaches have obtained fruitful results on the sample complexity of learning optimal auctions, a key assumption that is commonly made in this work is that the samples are independently and identically drawn from the bidders' valuation distributions, with the goal of learning an auction which maximizes the expected revenue on the underlying, unknown distribution over bidder valuations. A major difference in our work is that we consider that the samples can suffer from potential corruptions, which is a significantly more challenging setting.

\paragraph{Robustness of learning optimal auctions.}
Our paradigm on the robust learning of optimal auctions is closely related to recent work that considers the learning of auctions from mismatched distributions or corrupted samples. \citet{cai2017learning} consider a multi-item auction setting, where there is a given ``approximate distribution,'' and the goal is to compute an auction whose revenue is approximately optimal simultaneously for all ``true distributions" that are close to the given one. They provide an algorithm that achieves a poly-$\alpha$ additive loss compared to the true optimal revenue. More recently, \citet{brustle2020multi} consider learning multi-item auctions where bidders’ valuations are drawn from correlated distributions that can be captured by Markov random fields. However, they make a key simplifying assumption---that the bidders’ valuation for the items lie in some bounded interval. Our results, by contrast, apply to the general setting of unbounded valuation distributions, a setting that requires new theoretical machinery. To the best of our knowledge, our work constitutes the first analysis of the learnability of single-item optimal auctions from corrupted samples for unbounded distributions.

\paragraph{Organization.}
 In Section~\ref{sec:prelim},  we provide background on auction models and formally state our problem. Section~\ref{sec:UB-inf} contains our main theoretical statements for the population model. We propose an algorithm that achieves optimal theoretical upper bounds, by providing matching lower bounds. Section~\ref{sec:samples} contains our main results on learning with finite samples. We provide a practical algorithm that takes samples from the corrupted distribution, and provides sample complexity upper and lower bounds for both the regular and MHR distributions cases. We conclude in Section~\ref{sec:conclu}.

\section{Preliminaries}\label{sec:prelim}

We begin by formally defining the setting we study for robust learning of optimal auctions, which includes the revenue objective and the general classes of valuation distributions that we consider.

\subsection{Auction models}

\noindent \textbf{Single-bidder setting.}\vspace{1mm}
Consider one item for sale to one bidder. The bidder has a private
valuation $v \in \Rplus$ for this item. We assume that $v$ is a random variable
distributed according to the distribution $\Dstar$, with support $\Rplus$, cumulative 
distribution function $F$, and probability density function $f$. 

It is well known that the optimal auction in this setting is a reserve price auction, such that 
 the fask for the seller is to compute a reserve price
$p$ that optimizes revenue \citep{myerson1981optimal}. We assume that the bidder has a quasi-linear utility that is
equal to $u(p) = v-p$ if she decides to buy the item and $u(p) = 0$ otherwise. The seller
aims to set $p$ such that her expected revenue---i.e., the received payment---is maximized. 
We consider the setting where both $v$ and $\Dstar$ are unknown to the seller. However, 
the seller can access i.i.d.\ samples that are drawn from a distribution $\Dtilde$, which is
$\alpha$-close to $\D$ with regard to the Kolmogorov distance: 
\begin{definition}(Kolmogorov-Smirnov distance)
For probability measures $\mu$ and $\nu$ on $\R$, define
\[
d_k(\mu, \nu)  = \sup_{x \in \R} |\mu((-\infty, x)) - \nu((-\infty, x))|.
\]
\end{definition}
It is well known that $d_k(\mu, \nu) \le d_{TV}(\mu, \nu)$, where $d_{TV}$ denotes the 
total variation (TV) distance between $\mu$ and $\mu$. The closeness of $\Dtilde$ to $\Dstar$ is thus formalized as follows:
\begin{align*}
    d_k(\Dstar, \Dtilde) \leq \alpha,
\end{align*}
for some $\alpha > 0$.

\noindent \textbf{Multi-bidder setting.}
Consider one item for sale to $n$ bidders. Each bidder has a private valuation,
$v_i \in \Rplus$, where $v_i$ is independently drawn from the corresponding prior
distribution $\Dstar_i$. Thus, the valuations $\vv =(v_1, v_2, \cdots, v_n)$ follow a
product distribution $\vDstar = \Dstar_1 \times \cdots \times \Dstar_n$. Each bidder
submits a bid $b_i \geq 0$. Denote all the bids as $\vb = (b_1,\cdots, b_n)$. A mechanism
in this setting consists of two rules: the allocation rule $\vx(\vb)$ that takes the bids
$\vb$ and outputs the probability $x_i(\vb)$ that each bidder $i$ will receive the item,
and the payment rule $\vp(\vb)$ that takes the bids $\vb$ and outputs the payment of
bidder $i$. Bidder $i$’s utility is then $u_i(\vb) = v_i \cdot x_i(\vb)-p_i(\vb)$.
The goal of the seller is to find a mechanism that maximizes the expected revenue 
$\E[\sum_{i \in [n]} p_i(\vb)]$, where the expectation is over $\vv \sim \vDstar$, under
the following \textit{Dominant Strategy Incentive Compatibility (DSIC)} and the 
\textit{Individual Rationality (IR)} constraints:
\begin{align}
  u_i(v_i, \vb_{-i}) & \ge u_i(b_i, \vb_{-i}) \quad &\text{for all $v_i, b_i \in \R_+$ and all $\vb_{-i} \in \R_+^{n - 1}$} \tag{DSIC} \label{eq:dsic} \\
  u_i(v_i, \vb_{-i}) & \ge 0 \quad &\text{for all $v_i \in \R_+$ and all $\vb_{-i} \in \R_+^{n - 1}$.} \tag{IR} \label{eq:ir}
\end{align}
We consider the setting in which the valuations and the prior distributions are unknown to 
the seller. Instead, the seller has access to a finite number of i.i.d.\ samples
drawn from the product distribution $\Dtilde = \Dtilde_1 \times \cdots \times \Dtilde_n$, 
where each $\Dtilde_i$ satisfies
\begin{align*}
    d_k(\Dstar_i, \Dtilde_i) \leq \alpha_i,
\end{align*}
for some $\alpha_i > 0, \forall i \in [n]$.

\noindent \textbf{Revenue objective.} Letting $\vD$, $\vD'$ be product or single
bidder distributions as described above, we define $M_{\vD}$ to be the mechanism 
that achieves the optimal revenue for the value distributions $\vD$ and
$\OPT(\vD)$ its expected revenue. Let also $\REV(M_{\vD}, \vD')$ be the expected
revenue of the mechanism $M_{\vD}$ when applied to a setting where the values are
drawn with respect to $\vD'$. 

\subsection{Monotone hazard rate (MHR) and regular distributions}

For any bidder $i$ with a valuation $v_i \sim \D_i$, define the \emph{virtual value function} for this bidder as $\phi_i(v) \defeq v-\frac{1-F_i(v)}{f_i(v)}$, where $F_i$ and $f_i$ are the CDF and PDF of $\D_i$. The \emph{hazard rate} of the distribution $\D_i$ is defined as the function $ \frac{f_i(v)}{1-F_i(v)}$. Then, the distribution $\D_i$ is said to be \textit{regular} if the virtual value $\phi_i(v)$ is monotonically non-decreasing in $v$. Further, distribution $\D_i$ has \textit{monotone hazard rate} (MHR) if $\frac{f_i(v)}{1-F_i(v)}$ is monotone non-decreasing.

\section{The Population Model} \label{sec:UB-inf}

In this section, we study the problem of learning optimal auction
assuming that we have the exact knowledge of the adversarially perturbed
distributions $\vDtilde$. We relax this assumption in Section
\ref{sec:samples} where we show how to learn optimal auctions when we only have
sample access to $\vDtilde$.

We begin in Section \ref{sec:population:algo} with
the description of our mechanism in the population model. Then, in Section 
\ref{sec:population:mhr}, we present our analysis for the population mechanism for 
Monotone Hazard Rate distributions and we also present the sketch of our proof 
for the single-bidder case. Similarly, in Section \ref{sec:population:regular} 
we state our analysis for the population mechanism for regular distributions and
we present a proof sketch for the single-bidder case. Finally, we show that our proposed mechanism achieves 
optimal (up to constants) guarantees among any mechanism in the population model.

\subsection{Robust Myerson auction in the population model} \label{sec:population:algo}

  Our algorithm assumes as an input the exact knowledge of a product distribution,
$\vDtilde = \Dtilde_1 \times \cdots \times \Dtilde_n$, such that the 
$d_k(\Dstar_i, \Dtilde_i) \le \alpha_i$ and its goal is to find a mechanism that 
achieves approximately optimal revenue for $\vDstar$, where $\vDstar = \Pi_i \Dstar_i$. Without further assumptions, this is an impossible task, as we explain in Section~\ref{sec:samples} via an example.
Thus we assume that the 
algorithm possesses some additional knowledge regarding $\Dstar_i$, either that it is MHR or regular,
and the mechanism needs to exploit this additional property.

To utilize the additional property of the distributions $\Dstar_i$, our mechanism
uses the important concept of the \textit{link function} for MHR and regular distributions. 

\begin{definition}[Link Function]
  The link function $h_M(x; F)$ for MHR distributions is defined as 
  $h_M(x; F) = - \ln(1 - F(x))$ and the link function $h_r(x; F)$ for regular distributions is defined as $h_r(x; F) = 1/(1 - F(x))$. We also define the 
  corresponding inverse link functions $h^{-1}_M(x; h) = 1 - \exp(- h(x))$ and
  $h^{-1}_r(x; h) = 1 - 1/h(x)$. Observe that $h^{-1}_M(x; h_M(x; F)) = F(x)$
  and $h^{-1}_r(x; h_r(x; F)) = F(x)$. We may write $h_M(x)$ or $h_r(x)$ when
  $F$ is clear from the context.
\end{definition}

We provide some intuition on the link function. First, by construction, the link function of either an MHR distribution or a regular distribution is convex and non-decreasing. Second, the link function is monotone with regard to $F$. These two properties are important when we define the notion of a minimal MHR/regular distribution in a Kolmogorov ball, momentarily, which will be used as a necessary step in our algorithm.

Importantly, the link function provides a convenient characterization of the optimal reserve price and optimal revenue for a distribution $F$ that is MHR or regular. To see this, first consider a single bidder with a valuation distribution $F$. Denote the optimal reserve price for selling one item to her as $x^\ast$, and the optimal expected revenue as $\OPT(F)$. Then, when $F$ is MHR, we show that $x^\ast$ is also the unique minimizer of $(h_M(x) - \log(x))$. On the other hand, when $F$ is regular, $v^\ast$ is the point where $h_r(x)$ intersects with its tangent line $kx$, with $k = \nicefrac{1}{\OPT(F)}$ (proof details in Appendix). Figure~\ref{fig:link-function-plot} illustrates such a useful property for $h_M$ and $h_r$ explicitly, for a single-item, single-bidder auction.

\begin{figure*}[!ht]
\centering
\begin{tabular}{cc} 
\includegraphics[width=0.35\textwidth, height=0.33\textwidth]{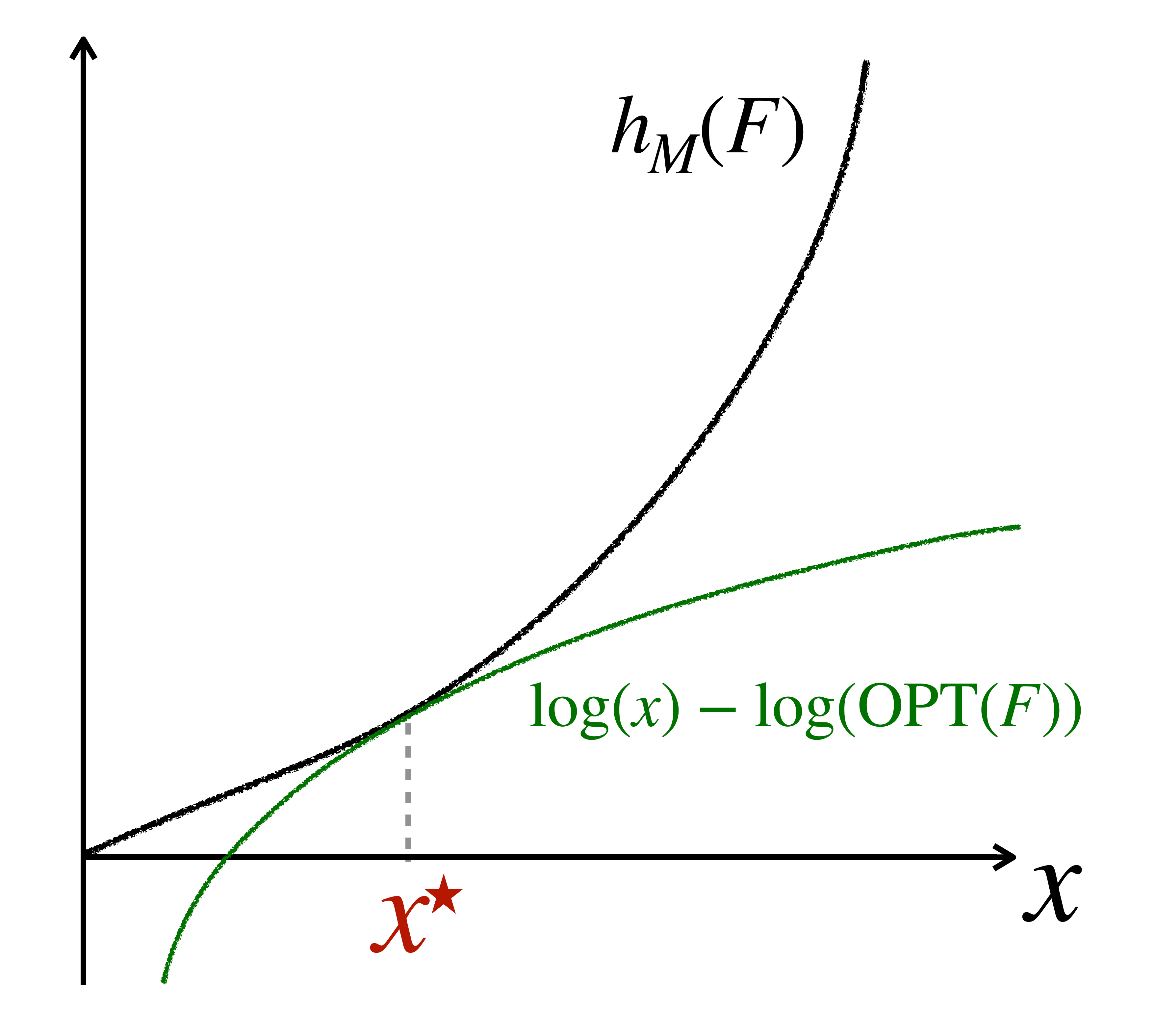} &
\includegraphics[width=0.39\textwidth]{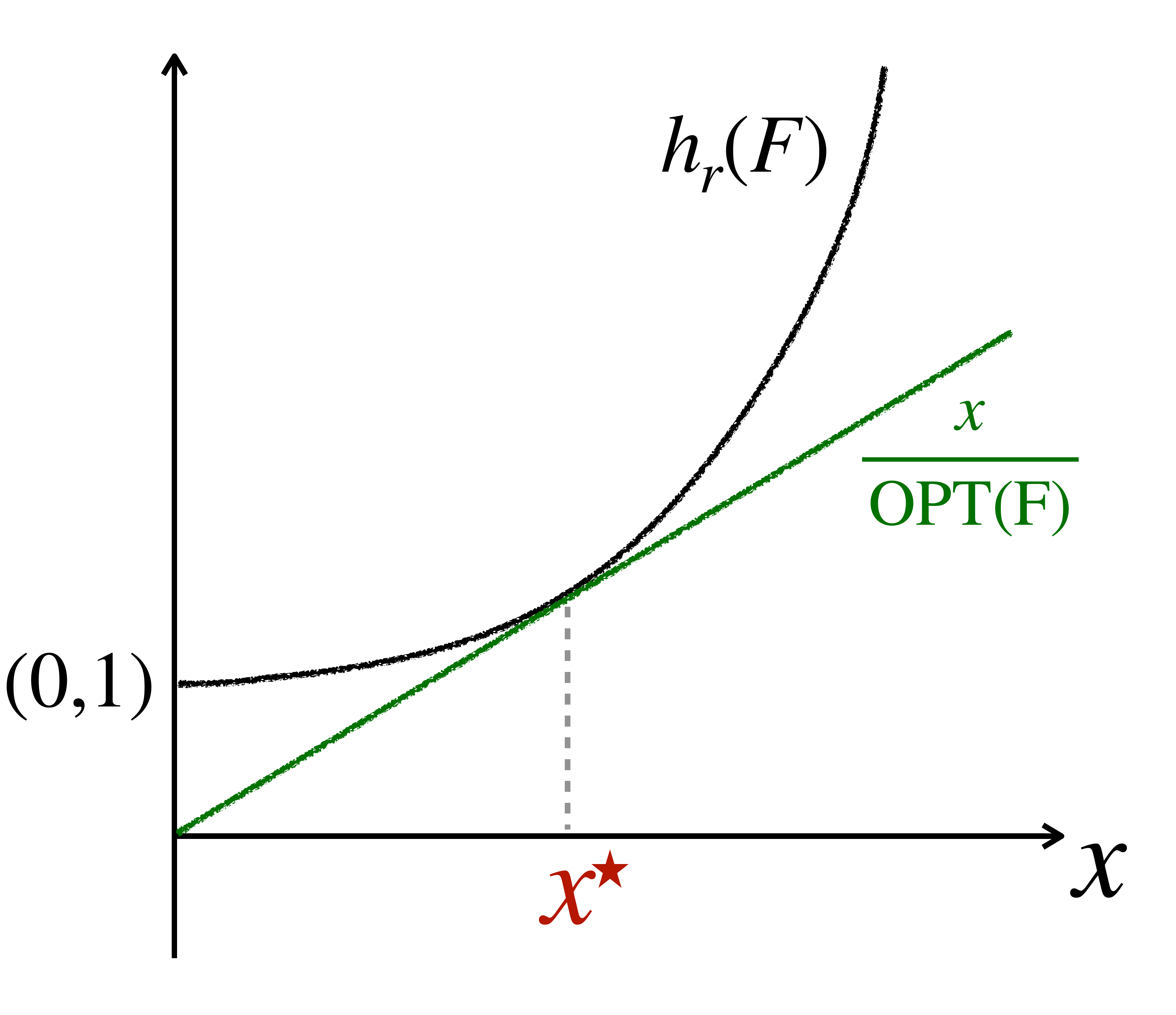} 
\end{tabular}
\caption{Optimal reserve price $x^\ast$ with regard to the link function, for a single-item single-bidder auction with a valuation distribution $F$. \textit{(left)} $F$ is MHR; \textit{(right)} $F$ is regular. }\label{fig:link-function-plot} 
\end{figure*}

Next, we formally define stochastic dominance between two distributions, and state the property of strong revenue monotonicity.

\begin{definition}[Stochastic dominance]\label{def:stochastic-dominance} Given two distributions $\cD_1$ and $\cD_2$ with CDFs as $F_1$ and $F_2$. Then, we say $\cD_1$ (first-order) stochastically dominates $\cD_2$ if for every $x \in \cX$,
\[
F_1(x) \le F_2(x),
\]
denoted as $\cD_1 \succeq \cD_2$. We say a product distribution $\vD = \Pi_i \cD_i$ (component-wise) stochastically dominates another product distribution $\vD' = \Pi_i \cD_i'$ if for every $i$, we have $\cD_i \succeq \cD_i'$.
\end{definition}

\begin{lemma}[Strong revenue monotonicity~\citep{guo2019settling}]\label{lem:strong-rev-monotone}
Let $\vD$, $\vD'$ be two product distributions such that $\vD' \succeq \vD$, then, for $M$ that is the optimal mechanism for $\vD$, we have:
\[
\REV(M, \vD) \leq \REV(M, \vD').
\]

\end{lemma}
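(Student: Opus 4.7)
The plan is to combine the DSIC payment identity with a hybrid argument over bidders, using the Myerson-optimality of $M$ to control cross-bidder effects. The single-bidder case serves as a trivial base case: $M$ is a posted price at the monopoly reserve $r$ of $\cD_i$, so
\[
\REV(M, \cD_i') = r\,(1 - F_i'(r)) \ge r\,(1 - F_i(r)) = \REV(M, \cD_i)
\]
follows immediately from stochastic dominance, which gives $1 - F_i'(r) \ge 1 - F_i(r)$.

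For the multi-bidder case, I would first establish a conditional monotonicity for each bidder's contribution via the DSIC payment identity. Writing $p_i^M(\vv) = \int_0^{v_i} z\,dx_i^M(z, \vv_{-i})$ (which follows from the DSIC monotonicity of $x_i^M(\cdot,\vv_{-i})$ in its first argument) and applying Fubini yields, for any fixed $\vv_{-i}$,
\[
\E_{v_i \sim \cD_i}\!\left[p_i^M(v_i, \vv_{-i})\right] \;=\; \int_0^\infty z\,(1 - F_i(z))\,dx_i^M(z, \vv_{-i}).
\]
Since $1 - F_i'(z) \ge 1 - F_i(z)$ pointwise and $dx_i^M \ge 0$, bidder $i$'s conditional expected payment is weakly larger under $\cD_i'$ than under $\cD_i$. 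I would then run a hybrid argument, defining $\vD^{(0)} = \vD$, $\vD^{(k)} = \cD_1' \times \cdots \times \cD_k' \times \cD_{k+1} \times \cdots \times \cD_n$, and $\vD^{(n)} = \vD'$, and proving $\REV(M, \vD^{(k)}) \le \REV(M, \vD^{(k+1)})$ at each step; the displayed inequality handles bidder $k+1$'s direct contribution at step $k$.

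The main obstacle, and where Myerson-optimality is essential, is the indirect effect at each hybrid step on bidders $j \neq k+1$: raising $v_{k+1}$'s distribution increases bidder $j$'s threshold $t_j(\vv_{-j})$ in distribution, and since $t \mapsto t\,(1 - F_j(t))$ is not globally monotone beyond the monopoly reserve of $\cD_j$, bidder $j$'s individual contribution may decrease. To handle this, I would use the quantile coupling $v_i' = (F_i')^{-1}(F_i(v_i)) \ge v_i$ together with Myerson's allocation form $i^*(\vv) = \argmax_i \phi_i(v_i)$ and threshold $t_{i^*}(\vv_{-i^*}) = \phi_{i^*}^{-1}(\max(0, \max_{j \ne i^*} \phi_j(v_j)))$, and track the winner's threshold payment along the coupling path. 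At each ``winner-switching'' transition, the drop in the outgoing winner's contribution is offset by the incoming winner's threshold payment---pinned to the argmax of the virtual welfare by Myerson's optimality---so that, summing over bidders and integrating over the coupling, the contributions of all hybrid steps combine to give $\REV(M, \vD') \ge \REV(M, \vD)$.
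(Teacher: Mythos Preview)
This lemma is not proved in the paper; it is imported verbatim from \citet{guo2019settling}, so there is no in-paper argument to compare against. That said, your sketch has a real gap.

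The single-bidder base case and the payment-identity bound on bidder $k{+}1$'s own contribution are correct, and you have correctly isolated the obstacle: raising $v_{k+1}$ increases the thresholds $t_j(\vv_{-j})$ for $j\neq k{+}1$, and $t\mapsto t\,(1-F_j(t))$ is decreasing beyond the monopoly reserve. But your proposed fix asserts that at each winner-switching transition ``the drop in the outgoing winner's contribution is offset by the incoming winner's threshold payment.'' This is false pointwise. Take two exponential bidders with means $100$ and $0.01$, so that $\phi_1(v)=v-100$ and $\phi_2(v)=v-0.01$, and let $M=M_{\vD}$. At $(v_1,v_2)=(200,\,0.5)$ bidder~$1$ wins and pays $\phi_1^{-1}(\phi_2(0.5))=100.49$; at $(200,\,200)$ bidder~$2$ wins and pays $\phi_2^{-1}(\phi_1(200))=100.01$. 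The ex-post revenue of $M$ strictly drops along the coupling path $v_2:0.5\to 200$, so quantile coupling does not give a per-realization inequality and cannot close the hybrid step as you describe.

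What does survive is only an \emph{interim} monotonicity in the special case where $\vv_{-i}$ is integrated against the same $\vD_{-i}$ that defines the mechanism's thresholds: a derivative computation then produces the nonnegative factor $v_i-\phi_i(v_i)=(1-F_i(v_i))/f_i(v_i)$. In your hybrid, however, once $k\ge 1$ the coordinates $j\le k$ are drawn from $\cD_j'$ while $M$ still uses $\phi_j$ from $\cD_j$, and this cancellation no longer goes through. Closing that mismatch is exactly what makes strong revenue monotonicity nontrivial; the argument in \citet{guo2019settling} does not proceed via the per-realization coupling you propose.
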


The following lemma illustrates the importance of the link functions as well as 
their connection with first-order stochastic dominance. The proof of this lemma is
given in Appendix~\ref{app:sec:tech-lemmas}.

\begin{lemma} \label{lem:linkFunctions}
    A distribution with CDF $F$ is MHR if and only if $h_M(x; F)$ is a convex 
  function of $x$. Similarly, $F$ is regular if and only if $h_r(x; F)$ is a 
  convex function of $x$. 
  Moreover, for two MHR (resp. regular) distributions $F_1$ and $F_2$, such that $F_1 \succeq F_2$, we have that $h_M(x; F_1) \le h_M(x; F_2)$ (resp. $h_r(x; F_1) \le h_r(x; F_2)$) for all $x$.
\end{lemma}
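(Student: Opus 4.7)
The plan is to handle each of the three claims by direct differentiation, since all three reduce to one-line computations once one writes the link functions in terms of the survival function $\bar F(x) = 1 - F(x)$.

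For the MHR characterization, I would first compute
\[
h_M'(x;F) \;=\; \frac{d}{dx}\bigl[-\ln \bar F(x)\bigr] \;=\; \frac{f(x)}{1-F(x)},
\]
which is precisely the hazard rate of $F$. Since a differentiable function is convex iff its derivative is non-decreasing, convexity of $h_M(\cdot;F)$ is equivalent to monotonicity of the hazard rate, i.e.\ to $F$ being MHR. (For distributions with a density only a.e., the same equivalence follows from the standard fact that a locally absolutely continuous function is convex iff its derivative is non-decreasing a.e.)

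For the regular case, the characterization is less immediate but still just a computation. I would compute
\[
h_r'(x;F) \;=\; \frac{f(x)}{\bar F(x)^2}, \qquad
h_r''(x;F) \;=\; \frac{f'(x)\,\bar F(x) + 2 f(x)^2}{\bar F(x)^3},
\]
and also compute the derivative of the virtual value
\[
\phi'(x) \;=\; 2 + \frac{\bar F(x)\,f'(x)}{f(x)^2} \;=\; \frac{2 f(x)^2 + \bar F(x)\,f'(x)}{f(x)^2}.
\]
Comparing the two expressions, the numerator of $h_r''(x;F)$ equals $f(x)^2 \phi'(x)$, so $h_r''(x;F)$ and $\phi'(x)$ have the same sign (wherever $f(x) > 0$). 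Hence $h_r(\cdot;F)$ is convex iff $\phi$ is non-decreasing, i.e.\ iff $F$ is regular. Points where $f(x) = 0$ are handled by a standard continuity/limiting argument using the fact that $h_r$ is continuous in $x$ and $F$ is assumed to have a density (so the set $\{f = 0\}$ can be ignored for convexity). The main (minor) obstacle is making sure this boundary handling is done carefully; this is where the bulk of the routine work lies.

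Finally, for the monotonicity-in-$F$ statement, stochastic dominance $F_1 \succeq F_2$ means $F_1(x) \le F_2(x)$ pointwise, equivalently $\bar F_1(x) \ge \bar F_2(x) > 0$. Since $t \mapsto -\ln t$ and $t \mapsto 1/t$ are both strictly decreasing on $(0,1]$, applying these to $\bar F_1(x) \ge \bar F_2(x)$ yields $h_M(x;F_1) \le h_M(x;F_2)$ and $h_r(x;F_1) \le h_r(x;F_2)$ pointwise, as required. This last part does not even use the MHR/regular hypothesis, though the lemma only asserts it in that setting.
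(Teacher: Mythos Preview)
Your proposal is correct and, for distributions with a density, follows exactly the paper's approach: the paper also computes $h_M' = f/(1-F)$ and identifies convexity of $h_M$ with monotonicity of the hazard rate, and for the regular case it computes $h_r''$ and $\phi'$ and observes they share the same sign; the stochastic-dominance part is handled identically via monotonicity of the link functions in $F$.

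The one substantive difference is scope: the paper's proof additionally treats the \emph{discrete} case explicitly, verifying that the discrete hazard-rate condition $g(x_{i+1})\ge g(x_i)$ is equivalent to the discrete convexity inequality $h_M(x_{i+2})-h_M(x_{i+1})\ge h_M(x_{i+1})-h_M(x_i)$, and similarly that discrete regularity $\phi(x_{i+1})\ge\phi(x_i)$ is equivalent to discrete convexity of $h_r$. This matters because the lemma is later invoked (implicitly) for empirical distributions and their convex envelopes in Algorithm~\ref{alg:emp-single-bidder}, which are piecewise-constant rather than absolutely continuous. Your parenthetical about a.e.\ derivatives does not cover that setting, so if you want the lemma at the generality the paper actually uses, you would need to add the discrete argument; otherwise your plan is complete for the continuous case and matches the paper.
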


\begin{figure}
\begin{minipage}{0.6\textwidth}
\begin{algorithm}[H]
\begin{algorithmic}[1]
\caption{Robust Myerson Auction in the Population Model}\label{Alg:inf-sample-single-bidder}
\STATE \textbf{Input:} $\alpha_1 \ldots \alpha_n > 0$, link function $h(\cdot)$, possibly corrupted valuation distribution $\tilde F = \Pi_{i=1}^n \tilde F_i$.
\FOR{i = 1\ldots n}
    \STATE Compute a minimal regular / MHR distribution in $B_{d_k, \alpha_i}(\tilde F_i)$ according to Eq~\eqref{eq:minimal-F}, denote as $\hat F_i$.
\ENDFOR
\STATE{Set $\hat F = \Pi_{i=1}^n \hat F_i.$}
\STATE Output Myerson’s optimal auction $M_{\hat{F}}$ w.r.t. the distribution $\hat F$.
\end{algorithmic}
\end{algorithm}
\end{minipage}
\hfill
\begin{minipage}{0.36\textwidth}
    \centering
     \includegraphics[width=0.8\textwidth]{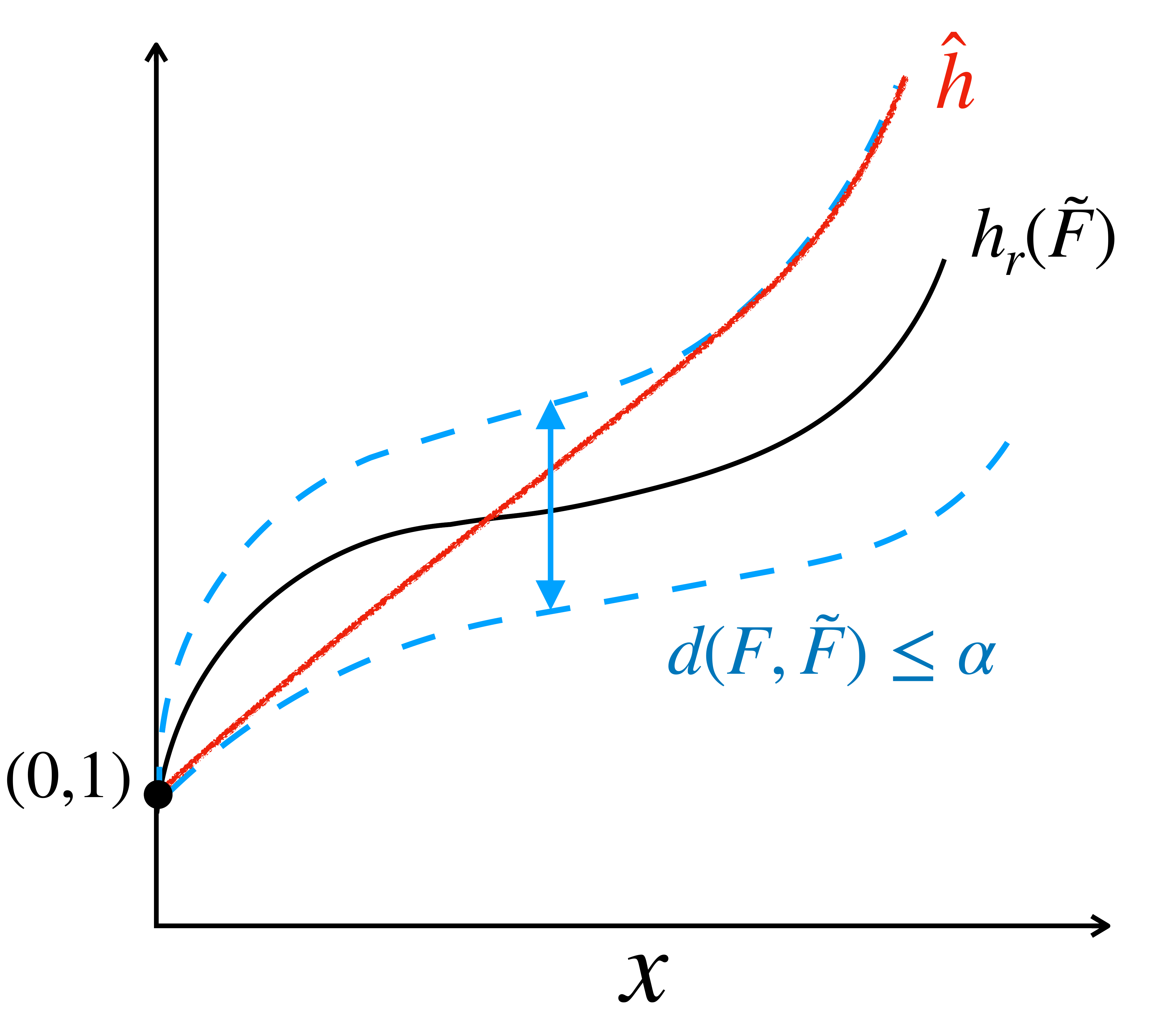}
    \caption{A minimal regular distribution in $B_{d_k,\alpha}$, in the space transformed by applying the link function.}
    \label{fig:minimal-regular}
\end{minipage}
\end{figure}

A key idea used in our algorithm is the minimal MHR/regular distribution within a Kolmogorov distance divergence ball. Formally, 
\begin{definition}\label{def:ks-ball-minimal-F}
For a given distribution $F$, denote the set of all the distributions that are $\alpha$-close to $F$ in Kolmogorov distance as $B_{d_k, \alpha}(F)$:
\[
B_{d_k, \alpha}(F) \defeq \{F': d_k(F', F) \leq F\}.
\]
Further, define a minimal MHR/regular 
distribution within $B_{d_k, \alpha}(F)$ as:
\begin{equation}\label{eq:minimal-F}
\hat F(x) = h^{-1}(x;\hat h), ~~~\text{where}~~~~~
 \hat h(x) \defeq \max_{\substack{\bar F \in B_{d_k, \alpha}(\tilde F)\\ F \text{ is MHR / regular}}} h\left(\bar F(x)\right) ~~~ \forall x \in \R_+.
\end{equation}
\end{definition}

Figure~\ref{fig:minimal-regular} gives an illustration of a minimal regular distribution within $B_{d_k, \alpha}(F)$, in the space transformed by the link function of regular distributions.

\subsection{Analysis for MHR distributions} \label{sec:population:mhr}

  In this section we state the results for the performance of Algorithm 
\ref{Alg:inf-sample-single-bidder} for MHR distributions and we provide a
proof sketch for the single-bidder case. The full proof of the following theorem
can be found in Appendix \ref{sec:multibidderUpperBound}.

\begin{theorem}\label{thm:mhr-inf-sample}
     Let $\vDstar = \Dstar_1 \times \cdots \times \Dstar_n$ be a product 
  distribution where every $\Dstar_i$ is MHR. Let also 
  $\vDtilde = \Dtilde_1 \times \cdots \Dtilde_n$ be any product distribution 
  such that for all $i \in [n]$ it holds that 
  $d_k(\Dstar_i, \Dtilde_i) \le \alpha_i$. If $\tilde{M}$ is the mechanism
  that Algorithm~\ref{Alg:inf-sample-single-bidder} outputs with input $\vDtilde$ then 
  it holds that
  \[ \REV(\tilde{M}, \vDstar) \ge \left(1 - \tilde{O} \left( \sum_{i = 1}^n \alpha_i \right) \right) \cdot \OPT(\vDstar). \]
  In particular for $n = 1$, if $\alpha = \alpha_1$, then we have that
  $\REV(\tilde{M}, \Dstar) \ge \left(1 - O \left( \alpha \right) \right) \cdot \OPT(\Dstar)$.
\end{theorem}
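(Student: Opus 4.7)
The plan is to combine three ingredients: a link-function argument that forces the algorithm's output $\hat F$ to be stochastically dominated by $\vDstar$, the strong revenue-monotonicity lemma to lift a revenue lower bound from $\hat F$ up to $\vDstar$, and an MHR-specific bound on Myerson's monopoly price that turns the resulting additive loss into a $(1-O(\alpha))$ multiplicative guarantee.

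First I would show $\vDstar \succeq \hat F$ componentwise, i.e., $F^*_i(x) \le \hat F_i(x)$ for every $x$. Since each $\Dstar_i$ is MHR and satisfies $d_k(\Dstar_i, \Dtilde_i) \le \alpha_i$, it is itself a feasible candidate in the maximization \eqref{eq:minimal-F} that defines $\hat h_i$. Because $h_M(x;F) = -\ln(1-F(x))$ is monotone increasing in $F$, the maximum there must satisfy $\hat h_i(x) \ge h_M(x;F^*_i)$ and hence $\hat F_i(x) \ge F^*_i(x)$ pointwise. Since $\tilde M = M_{\hat F}$ is optimal for $\hat F$, \lemmaref{strong-rev-monotone} then yields
\[
\REV(\tilde M, \vDstar) \;\ge\; \REV(\tilde M, \hat F) \;=\; \OPT(\hat F),
\]
so the remaining task is to show $\OPT(\hat F) \ge (1 - \tilde O(\sum_i \alpha_i))\,\OPT(\vDstar)$.

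For the single-bidder case ($n = 1$) I would let $p^*$ be Myerson's reserve for $\Dstar$, so $\OPT(\Dstar) = p^*(1 - F^*(p^*))$. Evaluating $\OPT(\hat F)$ at the candidate reserve $p^*$ and applying the triangle inequality $|\hat F(p^*) - F^*(p^*)| \le 2\alpha$ gives $\OPT(\hat F) \ge \OPT(\Dstar) - 2\alpha\,p^*$. To turn this additive loss into the desired multiplicative bound, I would invoke the standard MHR fact $1 - F^*(p^*) \ge 1/e$: the convexity of $h_M(\cdot;F^*)$ with $h_M(0;F^*) = 0$, combined with the first-order optimality condition $h'_M(p^*) = 1/p^*$, implies $h_M(p^*) \le p^*\cdot h'_M(p^*) = 1$. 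This gives $p^* \le e \cdot \OPT(\Dstar)$, whence $\OPT(\hat F) \ge (1 - 2e\alpha)\,\OPT(\Dstar)$, completing this case.

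The main obstacle is the multi-bidder extension. The natural approach is a hybrid argument over coordinates: define the interpolating product distributions $\vD^{(k)} = \hat F_1 \times \cdots \times \hat F_k \times \Dstar_{k+1} \times \cdots \times \Dstar_n$ and bound $\OPT(\vD^{(k-1)}) - \OPT(\vD^{(k)}) \le \tilde O(\alpha_k)\cdot \OPT(\vDstar)$ for each $k$, then telescope. The hard step will be this single-coordinate bound: perturbing one $F^*_k$ shifts bidder $k$'s (possibly ironed) virtual value, which in turn perturbs the winner and the threshold payment as a function of the other bids. I would try to isolate bidder $k$'s conditional ``monopoly-like'' contribution given the other bids, apply the single-bidder MHR argument conditionally, and then integrate; the logarithmic factors hidden in $\tilde O$ should come from this conditional integration and from the need to truncate the range of the conditional virtual values to a polylogarithmic band.
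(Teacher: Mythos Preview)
Your overall skeleton --- establish $\vDstar \succeq \hat{\vD}$ via the link-function maximization, apply strong revenue monotonicity to reduce to bounding $\OPT(\hat{\vD})/\OPT(\vDstar)$, and then use the MHR monopoly-price bound $p^* \le e\cdot\OPT(\Dstar)$ --- matches the paper exactly, and your $n=1$ argument is correct and in fact slightly more direct than the paper's (the paper routes the same idea through $h_M$ and $\log R(x)$, but both rest on $1-F^*(p^*)\ge 1/e$).

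The place where you diverge from the paper is the multi-bidder step, and your proposed route is harder than it needs to be. The paper does \emph{not} do a coordinatewise hybrid with a conditional monopoly analysis. Instead it first truncates all value distributions at a common level $\bar u = c\log(1/\eps)\,\OPT(\vDstar)$ and invokes the Cai--Daskalakis extreme-value theorem for product MHR distributions, which says this truncation costs at most an $\eps$ fraction of the optimum. Once everything is bounded by $\bar u$, the paper applies a single lemma: for any increasing $u:\R^n\to[0,\bar u]$ and product distributions $\vD,\vD'$ with $d_k(\D_i,\D'_i)\le\alpha_i$, one has $|\E_{\vD}u-\E_{\vD'}u|\le \bar u\sum_i\alpha_i$ (the proof of \emph{that} lemma is a simple hybrid over coordinates, but for characteristic functions of increasing sets, not for revenue). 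Since the ex-post revenue of the optimal truncated mechanism is an increasing function of the bids, this gives $\OPT(t_{\bar u}(\hat\vD))\ge \OPT(t_{\bar u}(\vDstar))-\bar u\sum_i\alpha_i$ in one shot; setting $\eps=\sum_i\alpha_i$ yields the $\tilde O(\sum_i\alpha_i)$ loss, with the logarithm coming solely from the truncation level $\bar u$.

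So the ``hard step'' you flag --- controlling the perturbation of ironed virtual values and thresholds conditionally on the other bids --- is avoided entirely: truncate first, then use monotonicity of revenue in bids together with the coordinatewise KS bound. Your instinct that a polylogarithmic truncation is needed is right, but it should be applied to the values up front (via Cai--Daskalakis), not to conditional virtual values inside a hybrid.
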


\begin{proof}[Proof sketch for $n = 1$.]
The first key step in our proof is the observation that, by construction, Algorithm~\ref{Alg:inf-sample-single-bidder} runs the Myerson optimal auction on an MHR distribution $\hat F$, such that $\hat F$ is stochastically dominated by any other MHR distribution that is within $B_{d_k, \alpha}(\tilde F')$. On the other hand we have $d_k(F^\ast(x), \tilde F(x)) \leq \alpha$. Applying the triangle inequality, we have $d_k(F^\ast(x), \hat F(x)) \leq 2\alpha$. It is then sufficient for us to bound the ratio of the optimal revenue for any two MHR distributions $F_1$ and $F_2$, with $d_k(F_1, F_2) \leq 2\alpha$, and where $F_1$ is stochastically dominated by $F_2$.

The key part of our proof then considers such $F_1$, $F_2$, and due to the fact that the ratio of the revenues, $\OPT_{F_1}/\OPT_{F_2}$, is scale invariant, we assume without loss of generality that $\OPT_{F_1} = 1$. We then prove that this leads to $h(P_{F_1}^\ast) \leq 1$. The result then follows from two further key lemmas. First, for any reserve price $x < P_{F_1}^\ast$, $ |h_1(x) - h_2(x)|= \left|\log \left(\frac{1-F_2(x)}{1-F_1(x)}\right)\right|$. Further applying the fact that by assumption $|F_1(x) - F_2(x)| \leq \alpha$ we show that $ |h_1(x) - h_2(x)| = O(\alpha)$ for any reserve price $x < P_{F_1}^\ast$. Second, using the fact that $F_1$ is stochastically dominated by $F_2$, we derive that $P_{F_2}^\ast \leq P_{F_1}^\ast$. The conclusion then follows from bounding the ratio of $s_1(x) = h_1(x) - \log(x)$, and $s_2(x) = h_2(x) - \log (x)$, based on the definition of $P_{F_1}^\ast$ and $P_{F_2}^\ast$.
\end{proof}

Next we show that the information-theoretic Algorithm
\ref{Alg:inf-sample-single-bidder} is optimal up to constants for MHR
distributions. We provide the proof of the
following theorem in Appendix \ref{sec:population-lb-examples}.

\begin{theorem}\label{thm:mhr-inf-sample-LB}
    Let $M$ be any DSIC and IR mechanism that takes as input a product 
  distribution $\vDtilde = \Dtilde_1 \times \cdots \times \Dtilde_n$. Then there
  exists a product distribution 
  $\vDstar = \Dstar_1 \times \cdots \times \Dstar_n$ such that $d_k(\Dstar_i, \Dtilde_i) \le \alpha$, $\Dstar_i$ is MHR for every $i$, and
  \[ \REV(M, \vDstar) \le (1 - \tilde{\Omega}(n \cdot \alpha)) \cdot \OPT(\vDstar). \]
\end{theorem}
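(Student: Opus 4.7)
The strategy is to first exhibit a single-bidder ``gadget'': a triple $(\Dtilde, \Dstar^L, \Dstar^H)$ of MHR distributions with $d_k(\Dtilde, \Dstar^L), d_k(\Dtilde, \Dstar^H) \le \alpha$ such that the optimal reserves under $\Dstar^L$ and $\Dstar^H$ are separated enough that no single mechanism $M(\tilde F)$ can be $(1 - \tilde o(\alpha))$-optimal for both. Then I would lift this to $n$ bidders by placing $n$ independent copies of the gadget on geometrically separated scales, so that the adversary can act independently on each bidder and the losses add.

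For the single-bidder gadget I would start from $\Dtilde$ a truncated exponential, so that $h_M(\cdot; \tilde F)$ is linear and $\tilde F$ is MHR by construction, with a clean closed-form optimal reserve $\tilde x^*$. The two candidates $\Dstar^L$ and $\Dstar^H$ are built by adding a small convex ``bump'' or ``dip'' to $h_M$ inside a window of width $\Theta(\log(1/\alpha))$ around $\tilde x^*$, chosen so that the perturbed link function remains convex and non-decreasing (preserving MHR by Lemma~\ref{lem:linkFunctions}) and so that the two perturbations shift the maximizer of $h(x) - \log(x)$ in opposite directions. A short computation using Myerson's revenue formula $R(x) = x(1-F(x))$ around each optimum shows that
\[
\OPT(\Dstar^L) - \REV(M_{\Dtilde}, \Dstar^L) + \OPT(\Dstar^H) - \REV(M_{\Dtilde}, \Dstar^H) \ge \tilde\Omega(\alpha) \cdot \bigl(\OPT(\Dstar^L) + \OPT(\Dstar^H)\bigr),
\]
so that any fixed mechanism $M(\tilde F)$ must be $\tilde\Omega(\alpha)$-suboptimal on at least one of the two candidates. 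This handles the case $n=1$.

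To lift to $n$ bidders I would rescale the gadget by $B^i$ for bidder $i$, where $B$ is a sufficiently large constant (say $B = \mathrm{poly}(1/\alpha, n)$), so that the values from different bidders occupy essentially disjoint regions with probability at least $1 - O(\alpha)$. Under this scale separation the Myersonian revenue decomposes additively, $\REV(M(\vDtilde), \vDstar) = \sum_i r_i(M, \Dstar_i) + o(\OPT(\vDstar))$, since with overwhelming probability each bidder competes only against her own reserve and the residual cross-scale events contribute a polynomially small fraction of $\OPT(\vDstar)$ (this is where MHR tail decay is used). The adversary then chooses $\Dstar_i \in \{\Dstar_i^L, \Dstar_i^H\}$ independently for each $i$ to maximize $M$'s loss on bidder $i$, giving a total multiplicative shortfall of $\tilde\Omega(n\alpha) \cdot \OPT(\vDstar)$.

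The main obstacle I expect is the construction of the single-bidder gadget under the MHR constraint. Because $h_M = -\ln(1 - F)$ must remain convex and non-decreasing, local perturbations of amplitude $\alpha$ in $F$ cannot be concentrated arbitrarily sharply: smoothing them over a window of width $\Omega(\log(1/\alpha))$ is essentially forced, and this is precisely the origin of the $\log$ factor hidden in $\tilde\Omega$. Verifying quantitatively that the revenue gap between $M(\tilde F)$ and the two candidate optima is $\tilde\Theta(\alpha)$ rather than, say, $\alpha^2$, is the technical crux; the multi-bidder lifting is comparatively routine bookkeeping once the scale-separation tail bounds have been established.
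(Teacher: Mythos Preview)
The scale-separation lifting does not produce the factor of $n$. With bidder $i$'s values at scale $B^i$, bidder $i$ contributes to the Myerson revenue only on the event that every higher-indexed bidder falls below her reserve; writing $r_i$ for this contribution, one has $\sum_i r_i = \OPT(\vDstar)$ up to the cross-scale error you already budget for. Even granting that the adversary can force a $\tilde\Omega(\alpha)$ multiplicative loss on each $r_i$ separately, the total shortfall is $\sum_i \tilde\Omega(\alpha)\, r_i = \tilde\Omega(\alpha)\cdot \OPT(\vDstar)$, with no $n$. The underlying reason is that a reserve-price error on bidder $i$ can cost at most a fraction of that bidder's own share of $\OPT$, and those shares partition $\OPT$. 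The paper's construction avoids this by \emph{not} separating scales: bidder $1$ is a point mass at a value $v_0$, and bidders $2,\dots,n$ all live at the same scale $\Theta(\log n)$, each drawn from one of two MHR distributions $\D^h,\D^\ell$ with $d_k(\D^h,\D^\ell)\le\alpha$ whose virtual-value functions lie on opposite sides of $v_0$ on an event of probability $\Theta(\alpha)$. The per-bidder loss is now a discrete \emph{misallocation} (bidder $i$ versus the yardstick bidder $1$), each costing $\Theta(\alpha)$ in absolute virtual surplus; these add to $\Theta(n\alpha)$, while $\OPT(\vDstar)=O(\log n)$ because all bidders share one scale, yielding $1-\Omega(n\alpha/\log n)$.

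There is also a quantitative issue with the single-bidder gadget as you describe it. For MHR $F$ the revenue $p\mapsto p(1-F(p))=\exp(\log p - h_M(p))$ is strictly log-concave, so displacing the reserve by $\delta$ costs only $\Theta(\delta^2)$; and convexity of $h_M$ together with $|h_M-\tilde h_M|=O(\alpha)$ on a unit neighborhood of the optimum pins $h'_M$, and hence the optimal reserve, to within $O(\alpha)$. A two-candidate construction based on smooth bumps around a (truncated) exponential therefore yields only a $\Theta(\alpha^2)$ gap, not $\tilde\Theta(\alpha)$. One can recover a first-order single-bidder gap by choosing $\Dtilde$ so that its revenue curve is flat at the top (for instance, equal masses at two points, where an $\alpha$-shift breaks the tie), but this does not rescue the lifting: even with an $\Omega(\alpha)$ single-bidder gadget, scale separation still caps the aggregate loss at $\tilde\Omega(\alpha)\cdot\OPT(\vDstar)$.
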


\subsection{Analysis for regular distributions} \label{sec:population:regular}

  In this section we state the results for the performance of Algorithm 
\ref{Alg:inf-sample-single-bidder} for regular distributions and we provide a
proof sketch for the single-bidder case. The full proof of the following theorem
can be found in Appendix \ref{sec:multibidderUpperBound}.

\begin{theorem}\label{thm:regular-inf-sample}
     Let $\vDstar = \Dstar_1 \times \cdots \times \Dstar_n$ be a product 
  distribution where every $\Dstar_i$ is regular. Let also 
  $\vDtilde = \Dtilde_1 \times \cdots \Dtilde_n$ be any product distribution 
  such that for all $i \in [n]$ it holds that 
  $d_k(\Dstar_i, \Dtilde_i) \le \alpha_i$. If $\tilde{M}$ is the mechanism
  that Algorithm~\ref{Alg:inf-sample-single-bidder} outputs with input $\vDtilde$ then 
  it holds that
  \[ \REV(\tilde{M}, \vDstar) \ge \left(1 - 5 \cdot \sqrt{\sum_{i = 1}^n \alpha_i} \right) \cdot \OPT(\vDstar). \]
\end{theorem}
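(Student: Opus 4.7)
The plan is to follow the same three-step template as in the MHR proof of Theorem~\ref{thm:mhr-inf-sample}, but to replace the direct link-function estimate (which is tight only because $h_M$ is logarithmic in $1-F$) by a quantile-space concavity argument that yields the weaker $\sqrt{\alpha}$ loss while remaining valid in the heavy-tailed regular regime. By the Kolmogorov triangle inequality, $d_k(\Dstar_i, \hat F_i) \le 2\alpha_i$ for every $i$, and the minimality of $\hat F_i$ in $B_{d_k,\alpha_i}(\tilde F_i)$ amongst regular distributions, combined with the regularity of $\Dstar_i$, yields the component-wise stochastic dominance $\Dstar_i \succeq \hat F_i$. Strong revenue monotonicity (Lemma~\ref{lem:strong-rev-monotone}) then gives $\REV(\tilde M,\vDstar) \ge \REV(\tilde M,\hat F) = \OPT(\hat F)$, so the theorem reduces to showing $\OPT(\hat F) \ge (1 - O(\sqrt{\sum_i \alpha_i}))\, \OPT(\vDstar)$ for product distributions of regular measures that are component-wise dominated and $2\alpha_i$-close in KS distance.

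For the single-bidder case, write $F_1 = \hat F$ and $F_2 = \Dstar$, let $p_2^\ast$ be the optimal reserve of $F_2$, and set $q_2^\ast = 1 - F_2(p_2^\ast)$. Since $|F_1(x) - F_2(x)| \le 2\alpha$, pricing $F_1$ at $p_2^\ast$ already gives $\OPT(F_1) \ge p_2^\ast(1 - F_1(p_2^\ast)) \ge \OPT(F_2) - 2\alpha\, p_2^\ast$. I then split on the magnitude of $q_2^\ast$. In the \emph{high-quantile case} $q_2^\ast \ge \sqrt{\alpha}$, the identity $p_2^\ast q_2^\ast = \OPT(F_2)$ forces $p_2^\ast \le \OPT(F_2)/\sqrt{\alpha}$, and the previous inequality yields $\OPT(F_1) \ge (1 - 2\sqrt{\alpha})\,\OPT(F_2)$. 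In the \emph{low-quantile case} $q_2^\ast < \sqrt{\alpha}$, I invoke the standard reformulation of regularity of $F_2$ as concavity of the revenue curve $R_2(q) = q\, F_2^{-1}(1-q)$ on $[0,1]$, with $R_2(0) = R_2(1) = 0$ and maximum attained at $q_2^\ast$; concavity then forces $R_2(\sqrt{\alpha}) \ge (1 - \sqrt{\alpha})\,\OPT(F_2)$. Pricing $F_1$ at $p' = F_2^{-1}(1 - \sqrt{\alpha})$ and reapplying the pointwise $2\alpha$-bound, together with $p' \le \OPT(F_2)/\sqrt{\alpha}$, yields $\OPT(F_1) \ge (1 - 3\sqrt{\alpha})\,\OPT(F_2)$, which handles both cases up to constants.

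The main technical obstacle is the multi-bidder extension, where a naive bidder-by-bidder hybrid argument would only yield the weaker bound $1 - O(\sum_i \sqrt{\alpha_i})$ instead of the advertised $1 - O(\sqrt{\sum_i \alpha_i})$. I plan to proceed through the hybrid chain $\vec F^{(k)} = \hat F_1 \times \cdots \times \hat F_k \times \Dstar_{k+1} \times \cdots \times \Dstar_n$ and express the revenue gap $\OPT(\vec F^{(k)}) - \OPT(\vec F^{(k-1)})$ via Myerson's virtual-value characterization as a single expectation over bidder quantiles. By choosing a single aggregate quantile threshold of order $\sqrt{\sum_j \alpha_j}$ rather than a per-bidder threshold $\sqrt{\alpha_i}$, the tail event on which any bidder's quantile falls below the threshold should contribute at most $O(\sqrt{\sum_j \alpha_j})\,\OPT(\vDstar)$ to the revenue loss, while on the complementary event the single-bidder quantile-space estimate applies component-wise and the per-bidder deviations accumulate additively inside the square root. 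Making this aggregation rigorous is the hard step; the rest is a direct lifting of the single-bidder analysis combined with the reduction in the first paragraph.
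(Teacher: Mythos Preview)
Your reduction in the first paragraph is correct and matches the paper exactly: minimality of $\hat F_i$ plus regularity of $\Dstar_i$ gives $\Dstar_i \succeq \hat F_i$ and $d_k(\Dstar_i,\hat F_i)\le 2\alpha_i$, and strong revenue monotonicity reduces the theorem to $\OPT(\hat F)\ge (1-O(\sqrt{\sum_i\alpha_i}))\,\OPT(\vDstar)$. Your single-bidder argument via the quantile split $q_2^\ast \gtrless \sqrt{\alpha}$ and concavity of the revenue curve is also fine and is essentially the paper's proof sketch written in slightly different coordinates.

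The genuine gap is in your multi-bidder step. You correctly diagnose that a naive bidder-by-bidder hybrid only gives $1-O(\sum_i\sqrt{\alpha_i})$, and you propose to fix this by expressing each hybrid increment via Myerson's virtual-value formula and using a single aggregate quantile threshold $\tau=\sqrt{\sum_j\alpha_j}$. But you do not actually carry this out, and it is not clear it can be made to work as stated: in the multi-bidder Myerson auction, whether bidder $k$ is allocated, and at what threshold value, depends on all other bidders' bids, so there is no single ``bidder-$k$ quantile'' at which to cut. The low-quantile concavity argument you use for $n=1$ does not transfer, because the relevant revenue contribution of bidder $k$ is not $q\mapsto q\,F_k^{-1}(1-q)$ but an expectation over random thresholds determined by $\vv_{-k}$. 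Your final sentence (``Making this aggregation rigorous is the hard step'') is an honest admission that this is where the proof currently stops.

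The paper sidesteps the hybrid entirely. It truncates every coordinate at a common value $\bar u = \OPT(\vDstar)/\varepsilon$; by the Devanur--Huang--Psomas tail bound for regular product distributions this costs at most a $4\varepsilon$ fraction of $\OPT(\vDstar)$. On the truncated (hence $[0,\bar u]$-bounded) distributions, the ex-post revenue of the optimal mechanism is an increasing function of the bid vector, and a direct coupling lemma (Lemma~\ref{lem:increasingFunctionKolmogorov}) shows that any increasing function bounded by $\bar u$ has expectation differing by at most $\bar u\sum_i\alpha_i$ under two product laws that are coordinate-wise $\alpha_i$-close in KS. Combining and optimizing $\varepsilon=\sqrt{\sum_i\alpha_i}$ gives the stated $1-5\sqrt{\sum_i\alpha_i}$. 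This is exactly the ``single aggregate threshold'' idea you were reaching for, but executed in \emph{value} space via truncation rather than in quantile space via a hybrid, which is what makes the aggregation trivial.
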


\begin{proof}[Proof sketch for $n = 1$.]

We first prove a general result that for two regular distributions $F$ and $\bar F$, such that $d_k(F, \bar F) \leq \alpha$, where $F(x)$ is stochastically dominated by $F(x)$ for $x \in \R_{+}$. The optimal revenue of these two distributions is close, formally  $\frac{\OPT({F})}{\OPT({\bar F})}  \geq 1-O(\sqrt{\alpha})$. The first key step replies on using the link function $h_r(x) = \frac{1}{1-F(x)}$ for regular distributions. Since $h_r(x)$ preserves the same monotonicity property as $F(x)$, we first derive a lower bound on $\bar h_r(x, \bar F)$ that is $\bar h_r(x, \bar F) \geq h_r(x, F) - \alpha h_r^2(x, F) $, using the fact that $d_k(F, \bar F) \leq \alpha$. This bound gives us useful constraints to discuss in different cases in the following part of the proof. Denote the corresponding optimal reserve prices for $F$ and $\bar F$ as $P$ and $\bar P$. We discuss separately two cases for $h(\bar P)$, where, for case 1 we have $h(\bar P) \leq \frac{1}{\sqrt{\alpha}}$, and for case 2, we have $h(\bar P) > \frac{1}{\sqrt{\alpha}}$. Using the connection from the link function to the revenue (see Figure~\ref{fig:link-function-plot}), case 1 directly leads to the conclusion that $ \frac{\OPT({F})}{\OPT({\bar F})} \geq 1 - \sqrt{\alpha}$. Case 2 is more subtle and requires a more careful argument.
Lastly, by construction, Algorithm~\ref{Alg:inf-sample-single-bidder} runs the Myerson optimal auction on a regular distribution $\hat F$, such that $\hat F \geq \hat F'(x)$ for all $x \in \R_{+}$, for any other regular distribution $F'(x)$ such that $d_k(F'(x), \tilde F(x)) \leq \alpha$. Applying the triangle inequality and combining with the conclusions  obtained from the two cases concludes the proof.
\end{proof}

Finally, we show that the information-theoretic Algorithm
\ref{Alg:inf-sample-single-bidder} is optimal up to constants for regular
distributions. We provide the proof of the
following theorem in Appendix \ref{sec:population-lb-examples}.

\begin{theorem}\label{thm:regular-inf-sample-LB}
    Let $M$ be any DSIC and IR mechanism that takes as input a product 
  distribution $\vDtilde = \Dtilde_1 \times \cdots \times \Dtilde_n$. Then there
  exists a product distribution 
  $\vDstar = \Dstar_1 \times \cdots \times \Dstar_n$ such that $d_k(\Dstar_i, \Dtilde_i) \le \alpha$, $\Dstar_i$ is regular for every $i$, and
  \[ \REV(M, \vDstar) \le (1 - \Omega(\sqrt{n \cdot \alpha})) \cdot \OPT(\vDstar). \]
\end{theorem}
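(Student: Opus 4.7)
The plan is to exhibit, for every DSIC--IR mechanism $M$, an input $\vDtilde$ and a regular product distribution $\vDstar$ with $d_k(\Dstar_i, \Dtilde_i) \le \alpha$ for every $i$, such that $M(\vDtilde)$ incurs a $(1 - \Omega(\sqrt{n\alpha}))$-factor loss against $\OPT(\vDstar)$. The argument divides into a single-bidder construction and a multi-bidder amplification, in analogy with \theoremref{mhr-inf-sample-LB} but exploiting the heavy-tailed character of regular distributions.

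For the single-bidder case, I would center the construction on the truncated equal-revenue distribution $\tilde F(v) = 1 - 1/v$ on $[1, 1/\alpha]$ (with an atom of mass $\alpha$ at $1/\alpha$), which is regular and has $\OPT(\tilde F) = 1$ with every reserve in its support achieving this value. Because the revenue curve is exactly flat, small KS perturbations can shift the revenue at reserve $r$ by as much as $\alpha \cdot r$; choosing $r = \Theta(1/\sqrt\alpha)$ makes this shift $\Theta(\sqrt\alpha)$, comparable to the optimal revenue itself. Concretely, I would construct two regular distributions $F_A, F_B \in B_{d_k,\alpha}(\tilde F)$ for which the revenue is maximized at widely separated reserves $r_A = \Theta(1)$ and $r_B = \Theta(1/\sqrt\alpha)$, with $\OPT(F_A), \OPT(F_B) \ge 1 + \Omega(\sqrt\alpha)$ but where the reserve optimal for one yields at most $1$ on the other. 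Since any mechanism $M(\tilde F)$ must commit to a single auction, it loses an $\Omega(\sqrt\alpha)$ fraction on at least one of $F_A, F_B$. The main care is verifying regularity of the perturbations by checking monotonicity of the virtual value functions, which pins down exactly how mass may be redistributed.

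For general $n$, I would amplify via a product construction. Setting $\vDtilde = \tilde F^{\otimes n}$, the adversary independently chooses each coordinate of $\vDstar$ from $\{F_A, F_B\}$. Because the $n$ coordinate choices are independent and each contributes an $\Omega(\sqrt\alpha)$ gap that $M(\vDtilde)$ cannot anticipate, a two-point Le Cam argument applied coordinatewise (together with a coupling over the adversary's random choices) shows that the total expected revenue deficit accumulates with the typical deviation of the per-coordinate losses, i.e.\ $\Omega(\sqrt{n}) \cdot \Omega(\sqrt\alpha) = \Omega(\sqrt{n\alpha})$ in absolute terms. Because $\OPT(\vDstar) = \Theta(1)$ up to logarithmic factors for these equal-revenue-type distributions (the per-coordinate optimum is $\Theta(1)$ and the $n$-bidder optimum grows only polylogarithmically in $n$), the absolute $\sqrt{n\alpha}$ deficit translates into the same relative loss, giving the claimed bound.

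The main technical obstacle is obtaining $\sqrt{n}$, rather than $n$, in the multi-bidder step. In the MHR case the per-coordinate gaps can simply be summed, but for regular distributions $\OPT(\vDstar)$ can grow mildly with $n$, so a naive $n\alpha$ absolute deficit would collapse to a trivially small relative loss; one must exploit that independently chosen corruptions partially cancel to yield the sharper $\sqrt{n\alpha}$, and this requires a careful coupling between $M$'s induced allocation rule and the adversary's random coordinate choices. A secondary nuisance is ensuring that the product-form corruption preserves regularity coordinatewise; this is automatic once $F_A, F_B$ are individually regular, but the choice of perturbations in the single-bidder step must be smooth enough for the virtual value functions to remain monotone.
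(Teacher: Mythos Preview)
Your single-bidder construction around the truncated equal-revenue distribution is sound and close in spirit to the paper's. The multi-bidder step, however, has a genuine gap: the $\sqrt{n}$ in $\sqrt{n\alpha}$ does \emph{not} come from any Le~Cam or CLT-type ``typical deviation'' of independent per-coordinate losses, and I do not see how such an argument can be made to work. You need a single fixed $\vDstar$ on which $M$ underperforms, not an average over adversary coin flips; in a single-item auction the revenue depends on the maximum virtual value rather than a sum, so per-coordinate reserve errors do not accumulate additively (or as $\sqrt{n}$); and with $n$ i.i.d.\ truncated equal-revenue bidders the interior virtual values vanish identically, which makes both the optimal mechanism and your claim $\OPT(\vDstar)=\Theta(1)$ degenerate for this family.

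The paper's route is structurally different. It uses an asymmetric family (borrowed from Guo--Huang--Zhang): bidder~$1$ is a point mass at $v_0=3/2$, and each bidder $i\ge 2$ has one of two shifted equal-revenue distributions $\D^h,\D^{\ell}$ with $d_k(\D^h,\D^{\ell})\le \beta^2/n$ whose virtual values straddle $v_0$ on a set of probability $p=\beta/n$. Taking the per-bidder KS budget $\alpha=\beta^2/n$ gives $\beta=\sqrt{n\alpha}$. Because $M$ cannot tell whether bidder~$i$'s virtual value exceeds $v_0$, each $i\ge 2$ contributes a loss $p\cdot\Delta=\Theta(\beta/n)$, and these sum \emph{linearly} over the $n-1$ bidders to $\Omega(\beta)=\Omega(\sqrt{n\alpha})$; the construction keeps $\OPT(\vDstar)=O(1)$, so this is also the relative loss. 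In other words, exactly as in the MHR lower bound, the per-bidder gaps are simply summed; the square root enters solely through the single-pair relationship between KS distance and revenue gap for regular distributions---which is precisely your single-bidder observation---and not through any $\sqrt{n}$ fluctuation mechanism.
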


\section{Finite Samples} \label{sec:samples}

We provide a practical algorithm that takes samples from the 
corrupted distribution $\vDtilde$ as an input. We show that this algorithm achieves
almost optimal sample complexity for the MHR distribution case and the
single-bidder regular distribution case, whereas for the multi-bidder regular distributions there is a small gap between our upper and lower bounds.

An important notion to explain our algorithm for the finite-sample case is the 
following notion of the convex envelope.
\begin{definition}[Convex Envelope]
    The convex envelope $Conv(f)$ of a function $f$ is a function with the following
  property
  \[ Conv(f)(x)=\sup\{g(x)\mid g{\text{ is convex and }}g\leq f{\text{ over }}\R_{+}\}. \]
  In words, $Conv(f)$ is the maximum convex function that is below $f$.
\end{definition}

  For our algorithm one important property of the convex envelope is expressed in the
following lemma whose proof is presented in Appendix \ref{app:sec:tech-lemmas}.

\begin{lemma}
    Let $f$ be a non-decreasing piecewise constant function with $k$ pieces, then $Conv(f)$
  can be computed in time $\mathrm{poly}(k)$ and is a piecewise linear function
  with $O(k)$ pieces.\label{lem:compute-envelope}
\end{lemma}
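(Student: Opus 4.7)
My plan is to reduce the computation of $Conv(f)$ to the lower convex hull of an explicit set of $O(k)$ points in the plane, and then appeal to a standard linear-time convex-hull algorithm.

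First I would fix notation: let the pieces of $f$ be $[x_0, x_1), [x_1, x_2), \ldots, [x_{k-1}, x_k]$ with values $v_1 \le v_2 \le \cdots \le v_k$. By the standard duality $g = \sup\{\ell : \ell \text{ affine}, \ell \le g\}$ for any closed convex $g$, together with the defining property $Conv(f) \le f$, we get $Conv(f)(x) = \sup\{a x + b : ay + b \le f(y) \text{ for all } y\}$. The remaining task is to reduce the continuum of inequalities "$\ell \le f$" to a finite system.

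Because $\ell$ is affine, its supremum on any interval is attained at the endpoints, so the constraint $\ell(y) \le v_j$ for $y \in [x_{j-1}, x_j)$ is equivalent to the pair $\ell(x_{j-1}) \le v_j$ and $\lim_{y \uparrow x_j} \ell(y) = \ell(x_j) \le v_j$. Combining the two constraints meeting at each internal breakpoint $x_j$ ($1 \le j \le k-1$) from the adjacent pieces yields $\ell(x_j) \le \min(v_j, v_{j+1}) = v_j$, while the two boundary constraints give $\ell(x_0) \le v_1$ and $\ell(x_k) \le v_k$. Thus the feasible region in $(a, b)$-space is cut out by exactly $k+1$ constraints, one per planar point $P_0 = (x_0, v_1)$ and $P_j = (x_j, v_j)$ for $j = 1, \ldots, k$. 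I would then invoke the standard correspondence that the pointwise supremum of affine minorants of a finite planar point set equals the lower convex hull of that set, viewed as a piecewise linear function. Hence $Conv(f)$ restricted to $[x_0, x_k]$ is exactly the lower convex hull of $\{P_0, P_1, \ldots, P_k\}$.

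For the algorithmic statement, the $k+1$ points are already sorted by $x$-coordinate, so their lower convex hull can be constructed by a single pass of the monotone-chain (Andrew's) algorithm in $O(k)$ time, and the hull is a subset of the input points, giving a piecewise linear function with at most $k$ segments. The main obstacle is the second step above: justifying that the continuum of pointwise constraints $\ell(y) \le f(y)$ collapses, with no loss, to exactly these $k+1$ point-constraints---this is where the linearity of $\ell$ and the piecewise-constant, non-decreasing structure of $f$ are crucially used. A minor technicality, should $f$ be viewed as defined on all of $\R_+$ with a terminal ray $[x_{k-1}, \infty)$ carrying value $v_k$, is that the tail constraint $\ell(y) \le v_k$ for all $y \ge x_{k-1}$ additionally forces $a \le 0$; this can be absorbed by adjoining a virtual point $(+\infty, v_k)$ to the hull computation, which leaves both the $O(k)$ bound on the number of pieces and the $\poly(k)$ running time unchanged.
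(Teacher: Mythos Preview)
Your proposal is correct and takes a genuinely different, more conceptual route than the paper. The paper gives an explicit greedy procedure---essentially gift-wrapping on the breakpoints $(x_i, f(x_i))$: starting from the leftmost point, repeatedly select the next breakpoint of minimum slope, then prove by hand that the slopes are non-decreasing and that the result is maximal among convex minorants. This yields an $O(k^2)$ algorithm and the correctness argument is carried out from scratch. You instead first reduce the continuum of constraints $\ell \le f$ to $k+1$ point constraints (this is the one place where the non-decreasing, piecewise-constant structure is used, and you handle it more carefully than the paper, which is silent on which corner of each step is binding), and then invoke the standard identification of the pointwise supremum of affine minorants with the lower convex hull, computed by Andrew's monotone chain in $O(k)$ time. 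What your approach buys is a cleaner proof and a better running time; what the paper's approach buys is self-containment, since it does not appeal to any convex-hull machinery. Both land squarely within the claimed $\mathrm{poly}(k)$ time and $O(k)$ pieces.
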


\begin{algorithm}[H]
	\begin{algorithmic}[1]
	\caption{Robust Empirical Myerson Auction}
	\label{alg:emp-single-bidder}	
		\STATE \textbf{Input:} $m$ i.i.d.\ samples from (possibly corrupted) value distribution $\vD = \Pi_{i=1}^n \cD_i$, link function $h(\cdot)$.
		\STATE Let $\vE = \Pi_{i=1}^n E_i$ be the empirical distribution, i.e., the uniform distribution over the samples.\\\vspace{2mm}
		
		\FOR{$i=1\ldots n$} \vspace{2mm}
		
		\STATE Construct $\hat{E_i}$ as following: let $q^{E_i}(v)$ be the quantile of $E_i$; the quantile of $\hat{E_i}$ is as follows:
		\begin{equation*}
		q^{\hat{E}_i}(v)=
		\begin{cases}
		\max \left\{ 0, q^{E_i}(v) - \sqrt{\frac{2 q^{E_i}(v) \left( 1-q^{E_i}(v) \right) \ln (2mn \delta^{-1})}{m}} - \frac{4 \ln (2mn \delta^{-1})}{m} - \alpha_i\right\} & \text{if $v > 0$} \\
		1 & \text{if $v = 0$} 
		\end{cases}
		\end{equation*}
		
	    \STATE Construct $\tilde{E_i}$ such that $h\left(\tilde E_i(\cdot)\right)$ is the convex envelope of $h\left(\hat E(\cdot)\right)$ , i.e. 
	    \begin{equation*}
	    \tilde E_i(\cdot) = h^{-1}(Conv(\hat E_i(\cdot))
	    \end{equation*}
	    \ENDFOR \vspace{2mm}
	    \STATE Set $\tilde \vE = \Pi_{i=1}^n \tilde E_i$
		\STATE Output Myerson's optimal auction $M_{\tilde{\vE}}$ w.r.t.\ $\tilde{\vE}$.
	\end{algorithmic}
\end{algorithm}

The above algorithm resembles the main algorithm of \cite{guo2019settling} with
the addition of step 5. We first show that step 5 is necessary if we wish to 
obtain any non-trivial result in the robust auction learning setting that we
explore in this paper.

\begin{counterexample} \label{cex:necessity}
Imagine we have just one agent, i.e., $n = 1$,
with true distribution $\Dstar$ equal to an exponential distribution with 
parameter $\lambda = 1$. Also, to strengthen our counterexample imagine that we
have available an infinite number of samples, i.e., $m \to \infty$.
Now consider $\Dtilde$ to be the corrupted distribution where probability mass
$\alpha$ is removed from the mass closer to $0$ and it is placed as a point mass
at the point $c/\alpha$ for some number $c$. In this case, running Algorithm 
\ref{alg:emp-single-bidder} without step 5 will result is implementing an 
auction with reserve price that is very close to $c/\alpha$. The probability 
though that the true agent with distribution $\Dstar$ will buy this item goes to
zero with a rate $\exp(-c/\alpha)$ as $c \to \infty$. Hence, the total revenue
will be at most $(c/\alpha) \cdot \exp(-c/\alpha)$ and therefore we can make the 
total revenue to go to zero as we increase $c \to \infty$. Observe that this 
counterexample works even though we assumed that the initial distribution 
$\Dstar$ is MHR.
\end{counterexample}

We next provide the analysis of the performance of Algorithm 
\ref{alg:emp-single-bidder} for MHR and regular distributions. The proof of the following result can be found in Appendix~\ref{sec:sample-complexity-ub-proofs}.

\begin{theorem}[Finite samples, Regular distribution]\label{thm:regular-sample-UB}
    Let $\vDstar = \Dstar_1 \times \cdots \times \Dstar_n$ be a product 
  distribution where every $\Dstar_i$ is regular. Let also 
  $\vDtilde = \Dtilde_1 \times \cdots \Dtilde_n$ be any product distribution 
  such that for all $i \in [n]$ it holds that 
  $d_k(\Dstar_i, \Dtilde_i) \le \alpha_i$. If $\tilde{M}$ is the mechanism
  that Algorithm~\ref{alg:emp-single-bidder} outputs with input $m$ samples from
  $\vDtilde$ and assume that 
  $m = \tilde{\Omega}\left(\max_{i \in [n]} \left\{ \log(\frac{1}{\delta})/\alpha_i^{2} \right\} \right)$ then it holds that
  \[ \Pr\left( \REV(\tilde{M}, \vDstar) \ge \left(1 - O \left( \sqrt{\sum_{i = 1}^n \alpha_i} \right) \right) \cdot \OPT(\vDstar)\right) \ge 1 - \delta. \]
  Additionally, in the single-bidder case with $n = 1$ and $\alpha = \alpha_1$ the
  sample requirement becomes $m = \tilde{\Omega}\left(\log(\frac{1}{\delta})/\alpha^{3/2}\right)$.
\end{theorem}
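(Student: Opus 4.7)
The plan is to reduce the finite-sample claim to the population-model analysis of Theorem~\ref{thm:regular-inf-sample} by showing that, with probability $\ge 1-\delta$, the distribution $\tilde\vE=\Pi_{i=1}^n\tilde E_i$ output by Algorithm~\ref{alg:emp-single-bidder} satisfies three properties: (i) each $\tilde E_i$ is regular; (ii) $\Dstar_i\succeq\tilde E_i$ for every $i$; and (iii) $d_k(\tilde E_i,\Dstar_i)=O(\alpha_i)$ under the stated sample size. Given (i)--(iii), combining strong revenue monotonicity (Lemma~\ref{lem:strong-rev-monotone}) with the key closeness lemma underlying the proof of Theorem~\ref{thm:regular-inf-sample}---two regular distributions that are componentwise stochastically ordered and $\alpha'$-close in $d_k$ have optimal revenues within a factor of $1-O(\sqrt{\alpha'})$---immediately delivers the theorem.

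First, I would use a VC/empirical-Bernstein tail bound together with a union bound over the $n$ bidders to control $q^{\Dtilde_i}-q^{E_i}$ uniformly in $v$. The quantile-dependent correction subtracted in the definition of $\hat E_i$ is designed precisely so that, once augmented with the triangle inequality $|q^{\Dtilde_i}(v)-q^{\Dstar_i}(v)|\le\alpha_i$ coming from $d_k(\Dstar_i,\Dtilde_i)\le\alpha_i$, we obtain $q^{\hat E_i}(v)\le q^{\Dstar_i}(v)$ for every $v$ with probability $\ge 1-\delta$; equivalently $\Dstar_i\succeq\hat E_i$. The matching upper bound on $q^{\Dstar_i}-q^{\hat E_i}$ then gives $d_k(\hat E_i,\Dstar_i)=O(\alpha_i+\sqrt{\log(1/\delta)/m})$, which is $O(\alpha_i)$ as soon as $m=\tilde\Omega(\log(1/\delta)/\alpha_i^2)$.

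Next, I would argue that Step~5 (the convex envelope in link-function space) preserves both properties. Regularity of $\tilde E_i$ is immediate from Lemma~\ref{lem:linkFunctions}, since by construction $h_r(\cdot;\tilde E_i)=Conv(h_r(\cdot;\hat E_i))$ is convex. To preserve stochastic dominance, note that $\Dstar_i$ is regular, so $h_r(\cdot;\Dstar_i)$ is a convex function; moreover, by Lemma~\ref{lem:linkFunctions} applied to $\Dstar_i\succeq\hat E_i$, we have $h_r(\cdot;\Dstar_i)\le h_r(\cdot;\hat E_i)$. Since a convex lower bound on $h_r(\cdot;\hat E_i)$ must lie below the convex envelope, $h_r(\cdot;\Dstar_i)\le h_r(\cdot;\tilde E_i)$, and applying the monotone $h_r^{-1}$ yields $\Dstar_i(x)\le\tilde E_i(x)$, i.e., $\Dstar_i\succeq\tilde E_i$. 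Finally, $h_r(\cdot;\tilde E_i)\le h_r(\cdot;\hat E_i)$ gives $\tilde E_i(x)\le\hat E_i(x)$, so we sandwich $\Dstar_i(x)\le\tilde E_i(x)\le\hat E_i(x)$ and conclude $d_k(\tilde E_i,\Dstar_i)\le d_k(\hat E_i,\Dstar_i)=O(\alpha_i)$.

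Assembling the pieces, strong revenue monotonicity gives $\REV(\tilde M,\vDstar)=\REV(M_{\tilde\vE},\vDstar)\ge\REV(M_{\tilde\vE},\tilde\vE)=\OPT(\tilde\vE)$, and the multi-bidder version of the population closeness argument then yields $\OPT(\tilde\vE)\ge(1-O(\sqrt{\sum_i\alpha_i}))\OPT(\vDstar)$. The main technical obstacle, and the reason the theorem allows the stronger single-bidder rate $m=\tilde\Omega(\log(1/\delta)/\alpha^{3/2})$ rather than the naive $1/\alpha^2$, is that the target revenue loss is already $O(\sqrt\alpha)$, so the sampling error need only match this weaker scale at the \emph{critical} quantile $q^\star$ near the optimal reserve. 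Exploiting the quantile-localized empirical Bernstein term $\sqrt{q^\star(1-q^\star)\log(1/\delta)/m}$ in the construction of $\hat E_i$, and feeding the resulting localized Kolmogorov bound into the two-case link-function analysis from the sketch of Theorem~\ref{thm:regular-inf-sample} (separating $h(\bar P)\le 1/\sqrt\alpha$ from $h(\bar P)>1/\sqrt\alpha$), is what I expect to require the most care in order to land exactly at the advertised $\alpha^{-3/2}$ rate.
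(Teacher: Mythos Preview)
Your proposal is correct and follows essentially the same approach as the paper. For $n>1$ the paper's proof is extremely terse---it literally just invokes the DKW inequality to bound $d_k(E_i,\Dtilde_i)$ by $O(\sqrt{\log(1/\delta)/m})$ and then appeals to Theorem~\ref{thm:regular-inf-sample}---whereas you spell out the intermediate structural facts (regularity of $\tilde E_i$ from the convex envelope, preservation of stochastic dominance because $h_r(\cdot;\Dstar_i)$ is a convex minorant of $h_r(\cdot;\hat E_i)$ and hence lies below its envelope, and the sandwich $F^*_i\le\tilde E_i\le\hat E_i$ giving the $d_k$ bound) that the paper leaves implicit; these are exactly the right ingredients. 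For the $n=1$ improvement to $\alpha^{-3/2}$, the paper does precisely what you anticipate: it uses the quantile-localized Bernstein bound (Lemma~5 of \cite{guo2019settling}) in place of DKW, proves a lemma giving $\OPT_F/\OPT_{\bar F}\ge 1-\beta\,h_r(\bar P)$, and then splits on whether $\tilde h(P^\ast)\le 1/\sqrt{\alpha}$ or $>1/\sqrt{\alpha}$, balancing the Bernstein term against the target $\sqrt{\alpha}$ loss in each case.
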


The corresponding theorem for MHR distributions is the following, whose proof can be
found in Appendix \ref{sec:sample-complexity-ub-proofs}.

\begin{theorem}[Finite samples, MHR distribution]\label{thm:mhr-sample-UB}
    Let $\vDstar = \Dstar_1 \times \cdots \times \Dstar_n$ be a product 
  distribution where every $\Dstar_i$ is MHR. Let also 
  $\vDtilde = \Dtilde_1 \times \cdots \Dtilde_n$ be any product distribution 
  such that for all $i \in [n]$ it holds that 
  $d_k(\Dstar_i, \Dtilde_i) \le \alpha_i$. If $\tilde{M}$ is the mechanism
  that Algorithm~\ref{alg:emp-single-bidder} outputs with input $m$ samples from
  $\vDtilde$ and assume that 
  $m = \tilde{\Omega}\left(\max_{i \in [n]} \left\{ \log\left(\frac{1}{\delta}\right)/\alpha_i^{2} \right\} \right)$ then it holds that
  \[ \Pr\left( \REV(\tilde{M}, \vDstar) \ge \left(1 - \tilde{O} \left( \sum_{i = 1}^n \alpha_i \right) \right) \cdot \OPT(\vDstar)\right) \ge 1 - \delta. \]
\end{theorem}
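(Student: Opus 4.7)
The plan is to reduce the finite-sample claim to the population-model guarantee of Theorem~\ref{thm:mhr-inf-sample} by showing that, with probability at least $1 - \delta$, the distribution $\tilde{\vE} = \prod_i \tilde E_i$ that Algorithm~\ref{alg:emp-single-bidder} constructs satisfies three structural properties: each $\tilde E_i$ is MHR, $\Dstar_i \succeq \tilde E_i$, and $d_k(\tilde E_i, \Dstar_i) = O(\alpha_i)$ whenever $m = \tilde\Omega(\log(1/\delta)/\alpha_i^2)$. Given these, Lemma~\ref{lem:strong-rev-monotone} yields $\REV(M_{\tilde\vE}, \vDstar) \ge \REV(M_{\tilde\vE}, \tilde\vE) = \OPT(\tilde\vE)$, and then the MHR analysis underlying Theorem~\ref{thm:mhr-inf-sample} compares $\OPT(\tilde\vE)$ to $\OPT(\vDstar)$ with loss $\tilde O(\sum_i \alpha_i)$, since the two product distributions are simultaneously MHR, within KS distance $O(\alpha_i)$ coordinate-wise, and ordered by stochastic dominance, which is exactly the configuration handled by the population proof sketch.

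The first step establishes the properties of $\hat E_i$ produced in step~4. Using the variance-adaptive Bernstein-type deviation for empirical proportions, uniformly in $v$ via the VC-class of threshold indicators, and a union bound over $i \in [n]$, the subtracted term in step~4 dominates $|q^{E_i}(v) - q^{\Dtilde_i}(v)|$ for all $v$ and all $i$ with probability $\ge 1-\delta$. Together with $d_k(\Dtilde_i, \Dstar_i) \le \alpha_i$, the inclusion of the additional $-\alpha_i$ correction gives $q^{\hat E_i}(v) \le q^{\Dstar_i}(v)$ for all $v$, i.e., $\Dstar_i \succeq \hat E_i$. Simultaneously, one has $0 \le F_{\hat E_i}(v) - F_{\Dstar_i}(v) \le O(\alpha_i)$ for all $v$ once $m = \tilde\Omega(\log(1/\delta)/\alpha_i^2)$, so $d_k(\hat E_i, \Dstar_i) = O(\alpha_i)$.

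The second step argues that the convex-envelope operation in step~5 preserves all three properties. By Lemma~\ref{lem:linkFunctions}, MHR is equivalent to $h_M(\cdot; F)$ being convex, so $\tilde E_i = h_M^{-1}(\mathrm{Conv}(h_M(\cdot; \hat E_i)))$ is MHR by construction, and Lemma~\ref{lem:compute-envelope} makes this efficient. Because $\Dstar_i$ is MHR, $h_M(\cdot; F_{\Dstar_i})$ is a convex function that lies pointwise below $h_M(\cdot; F_{\hat E_i})$ (using monotonicity of $h_M$ in $F$ together with $F_{\Dstar_i} \le F_{\hat E_i}$), and therefore it lies below the convex envelope $h_M(\cdot; F_{\tilde E_i})$. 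Inverting through the monotonicity of $h_M^{-1}$ gives the sandwich $F_{\Dstar_i} \le F_{\tilde E_i} \le F_{\hat E_i}$ pointwise, which yields both $\Dstar_i \succeq \tilde E_i$ and $d_k(\tilde E_i, \Dstar_i) \le d_k(\hat E_i, \Dstar_i) = O(\alpha_i)$.

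Plugging these properties into the proof of Theorem~\ref{thm:mhr-inf-sample} completes the argument: the single-bidder bound $|h_M(x; F_{\Dstar_i}) - h_M(x; F_{\tilde E_i})| = O(\alpha_i)$ below the respective optimal reserve prices follows from the KS bound via the $-\log(1-F)$ identity, and comparing the scale-invariant optima $s(x) = h_M(x) - \log(x)$ yields a multiplicative $(1 - \tilde O(\alpha_i))$ per-bidder comparison that telescopes across the $n$ bidders under strong revenue monotonicity. The main obstacle is the first step: the uniform quantile concentration must be tight in the tails so that the resulting sample complexity scales like $1/\alpha_i^2$ rather than a weaker power, which is exactly why step~4 uses the Bernstein-style $\sqrt{q(1-q)\log(\cdot)/m}$ correction instead of a coarser DKW-type constant, since the effective variance of the indicators controlling $q^{E_i}(v)$ in the relevant regime is $O(\alpha_i)$ rather than $O(1)$.
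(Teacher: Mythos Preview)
Your overall reduction is correct and is in fact more detailed than the paper's own argument: the paper simply says the result ``follows easily from Theorem~\ref{thm:mhr-inf-sample} and the DKW inequality,'' whereas you spell out why the convex-envelope step preserves MHR-ness, stochastic domination by $\Dstar_i$, and KS-closeness. That sandwich argument ($F_{\Dstar_i}\le F_{\tilde E_i}\le F_{\hat E_i}$ via convexity of $h_M(\cdot;F_{\Dstar_i})$) is the right way to fill the gap the paper leaves implicit.

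Two claims in your last paragraph are off, though. First, you say the Bernstein-style correction in step~4 is essential because the relevant variance is $O(\alpha_i)$; for \emph{this} theorem that is not the case. At the stated sample size $m=\tilde\Omega(\log(1/\delta)/\alpha_i^2)$ the plain DKW inequality already yields $d_k(E_i,\Dtilde_i)=O(\alpha_i)$, which is all that is needed, and this is exactly what the paper uses. The variance-adaptive bound is what buys the improved $1/\alpha^{3/2}$ complexity in the single-bidder \emph{regular} case (Theorem~\ref{thm:regular-sample-UB}), not the MHR result. Second, the multi-bidder comparison in the population proof does not ``telescope'' single-bidder link-function bounds across the $n$ bidders. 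The paper instead truncates via the Cai--Daskalakis extreme-value theorem for MHR product distributions and then applies Lemma~\ref{lem:increasingFunctionKolmogorov} (increasing functions under coordinate-wise KS perturbation) to compare $\OPT(\tilde\vE)$ with $\OPT(\vDstar)$ in one step; the per-bidder $s(x)=h_M(x)-\log x$ argument you describe is only the $n=1$ analysis. Since you are ultimately invoking Theorem~\ref{thm:mhr-inf-sample}, you can just cite it rather than re-deriving it, but you should not mischaracterize its proof.
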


We make a few remarks about the sample complexity upper bounds in the sequel. 

First, in both Theorem~\ref{thm:regular-sample-UB} and Theorem~\ref{thm:mhr-sample-UB}, the sample complexity upper bounds depend in a simple way on the sum of all the fractions of corruptions for each bidder; i.e., $\sum_{i = 1}^n \alpha_i$, indicating the important effect of the \emph{total} amount of corruption. Second, for regular distributions, in Theorem~\ref{thm:regular-sample-UB} we obtain a tight sample complexity bound for the single-bidder case, with $m = \tilde{\Omega}\left(\log(\frac{1}{\delta})/\alpha^{3/2}\right).$ For multi-bidder settings, our upper bound contains a small gap, with $m = \tilde{\Omega}\left(\max_{i \in [n]} \left\{ \log(\frac{1}{\delta})/\alpha_i^{2} \right\} \right)$. Whether such a gap can be matched is an interesting open question for future work. Lastly, comparing Theorem~\ref{thm:regular-sample-UB} and Theorem~\ref{thm:mhr-sample-UB}, it appears that for the multi-bidder settings the sample complexity bounds are of the same order, but we emphasize the key difference that for regular distributions this sample size is needed to provide a much \emph{weaker} guarantee on the revenue objective, which is a $ \left(1-O \left( \sqrt{\sum_{i = 1}^n \alpha_i} \right) \right)$ fraction of the optimal revenue, while the guarantee for MHR distributions is  a  $\left(1-O \left( \sum_{i = 1}^n \alpha_i \right) \right)$ fraction of the optimal revenue.

We next provide an information-theoretic lower bound that establishes the 
tightness of our upper bounds for the single-bidder single-item case with regular and MHR
distributions. 

\begin{theorem}[Sample complexity lower bounds]\label{thm:sample-LB}
Let $M$ be any DSIC and IR mechanism for a single-item 
single-buyer setting that takes as input $m$ samples from a
distribution $\Dtilde$. If 
\[ \REV(M, \Dstar) \ge (1 - O(\sqrt{\alpha})) \cdot \OPT(\Dstar), \]
for all distributions $\Dstar$ such that
$d_k(\Dstar, \Dtilde) \le \alpha$, where $\Dstar$ is 
regular, then
$m \ge \tilde \Omega\left(\log(\frac{2}{\delta})/\alpha^{3/2}\right)$. Additionally, if
\[ \REV(M, \Dstar) \ge (1 - O(\alpha)) \cdot \OPT(\Dstar), \]
for all distributions $\Dstar$ such that 
$d_k(\Dstar, \Dtilde) \le \alpha$, where $\Dstar$ is MHR, we have
$m \ge \tilde \Omega\left(\log(\frac{2}{\delta})/\alpha^{3/2}\right)$.
\end{theorem}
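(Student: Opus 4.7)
The plan is to prove the lower bound by reduction to standard (non-robust) sample complexity lower bounds for learning revenue-optimal single-bidder auctions. The key observation is that the robustness guarantee must hold \emph{uniformly} over every true distribution $\Dstar$ in the Kolmogorov ball $B_{d_k,\alpha}(\Dtilde)$; in particular, it must hold for the degenerate choice $\Dstar = \Dtilde$, which collapses the robust problem to an ordinary sample-based revenue-approximation problem.

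First I would specialize the hypothesis to $\Dstar = \Dtilde$, which trivially satisfies $d_k(\Dstar,\Dtilde) = 0 \le \alpha$. Under this choice the $m$ samples fed to $M$ are i.i.d.\ draws from $\Dstar$ itself, so the stated guarantee becomes: with probability at least $1-\delta$, the algorithm outputs a mechanism $M$ with $\REV(M,\Dstar) \ge (1 - c\sqrt{\alpha})\OPT(\Dstar)$ in the regular case, and with $\REV(M,\Dstar) \ge (1 - c\alpha)\OPT(\Dstar)$ in the MHR case, for every regular (resp.\ MHR) single-bidder distribution $\Dstar$. Thus the hypothetical robust algorithm is in particular a standard $(1-c\sqrt{\alpha})$- (resp.\ $(1-c\alpha)$-)approximation algorithm using $m$ i.i.d.\ samples.

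Next I would invoke the matching sample complexity lower bounds in the non-robust setting. For regular distributions, \citet{guo2019settling} establish that any $(1-\eps)$-approximation with confidence $1-\delta$ requires $m \ge \tilde\Omega(\log(1/\delta)/\eps^{3})$ samples; substituting $\eps = c\sqrt{\alpha}$ yields $m \ge \tilde\Omega(\log(1/\delta)/\alpha^{3/2})$. For MHR distributions, the analogous lower bound is $\tilde\Omega(\log(1/\delta)/\eps^{3/2})$, matching the known upper bound of \citet{huang2018making}; substituting $\eps = c\alpha$ again produces $m \ge \tilde\Omega(\log(1/\delta)/\alpha^{3/2})$, giving the claim in both cases.

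The main obstacle is ensuring that the MHR lower bound of $\tilde\Omega(1/\eps^{3/2})$ is available in precisely the form needed. If it cannot be cited off-the-shelf, my fallback plan is a direct two-point Le Cam argument: construct two MHR distributions (e.g.\ two truncated exponentials whose rate parameters differ by $\Theta(\eps)$, with supports chosen to keep both MHR) whose total-variation distance is $O(\sqrt{\eps})$ but whose optimal reserve prices are far enough apart that any mechanism achieving $(1-c\eps)$-revenue on one is forced to lose a $\Omega(\eps)$ fraction on the other. The standard information-theoretic distinguishability bound then implies $m \ge \tilde\Omega(1/\eps^{3/2})$. The analogous construction using near-equal-revenue-style regular distributions (which are regular but not MHR, so the gap between the two optimal reserves is larger per unit of TV distance) recovers the $\tilde\Omega(1/\eps^{3})$ lower bound for regular distributions. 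Feeding these two instances into the reduction above completes the proof.
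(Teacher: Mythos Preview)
Your proposal is correct and follows essentially the same route as the paper: the paper's proof simply observes that any robust algorithm is in particular a non-robust algorithm (by taking $\Dstar=\Dtilde$), so the single-bidder lower bounds of \citet{guo2019settling} apply directly, yielding $\tilde\Omega(1/\eps^{3})$ for regular and $\tilde\Omega(1/\eps^{3/2})$ for MHR, which become $\tilde\Omega(1/\alpha^{3/2})$ after substituting $\eps=\Theta(\sqrt{\alpha})$ and $\eps=\Theta(\alpha)$ respectively. Your fallback Le Cam construction is unnecessary since the needed bounds are already in \citet{guo2019settling}, but it is a correct alternative.
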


Theorem~\ref{thm:sample-LB} provides a general sample complexity lower bound on learning a near-optimal auction with at least a $ (1 - O(\sqrt{n \cdot \alpha}))$ fraction of the optimal revenue under the true valuation distribution. In comparison to our upper bounds (see Theorem~\ref{thm:regular-sample-UB} and Theorem~\ref{thm:mhr-sample-UB}), there is a small gap and we leave the nature of this gap as an open question for future work.

\section{Conclusions} \label{sec:conclu}

We have studied the learning of revenue-optimal auctions for multiple bidders, in a setting in which the samples can be corrupted adversarially. We first consider the information-theoretic limit in a population model, assuming  exact knowledge of the adversarially perturbed valuation distribution. We develop a theoretical algorithm which obtains a tight upper bound on the revenue for the MHR and regular distributions, obtaining the information-theoretic limit of the robustness guarantee. We then relax the population model and derive sample complexity bounds for learning optimal auctions from samples. We propose a practical algorithm which takes the corrupted samples as input, and provide the sample complexity upper bounds for the MHR distribution case and the single-bidder regular distribution case.  We also provide accompanying sample complexity lower bounds, and demonstrate a small gap relative to the corresponding upper bounds.
\subsection*{Acknowledgments}

This work was supported in part by the Mathematical Data Science program of the Office of Naval Research under grant number N00014-18-1-2764.

\bibliography{ref}

\begin{thebibliography}{25}
\providecommand{\natexlab}[1]{#1}
\providecommand{\url}[1]{\texttt{#1}}
\expandafter\ifx\csname urlstyle\endcsname\relax
  \providecommand{\doi}[1]{doi: #1}\else
  \providecommand{\doi}{doi: \begingroup \urlstyle{rm}\Url}\fi

\bibitem[Balcan et~al.(2016)Balcan, Sandholm, and Vitercik]{balcan2016sample}
M.-F. Balcan, T.~Sandholm, and E.~Vitercik.
\newblock Sample complexity of automated mechanism design.
\newblock \emph{arXiv preprint arXiv:1606.04145}, 2016.

\bibitem[Balcan et~al.(2018)Balcan, Sandholm, and Vitercik]{balcan2018general}
M.-F. Balcan, T.~Sandholm, and E.~Vitercik.
\newblock A general theory of sample complexity for multi-item profit
  maximization.
\newblock In \emph{Proceedings of the 2018 ACM Conference on Economics and
  Computation}, pages 173--174, 2018.

\bibitem[Brustle et~al.(2020)Brustle, Cai, and Daskalakis]{brustle2020multi}
J.~Brustle, Y.~Cai, and C.~Daskalakis.
\newblock Multi-item mechanisms without item-independence: Learnability via
  robustness.
\newblock In \emph{Proceedings of the 21st ACM Conference on Economics and
  Computation}, pages 715--761, 2020.

\bibitem[Bykowsky et~al.(2000)Bykowsky, Cull, and
  Ledyard]{bykowsky2000mutually}
M.~M. Bykowsky, R.~J. Cull, and J.~O. Ledyard.
\newblock Mutually destructive bidding: The {FCC} auction design problem.
\newblock \emph{Journal of Regulatory Economics}, 17\penalty0 (3):\penalty0
  205--228, 2000.

\bibitem[Cai and Daskalakis(2011)]{cai2011extreme}
Y.~Cai and C.~Daskalakis.
\newblock Extreme-value theorems for optimal multidimensional pricing.
\newblock In \emph{2011 IEEE 52nd Annual Symposium on Foundations of Computer
  Science}, pages 522--531. IEEE, 2011.

\bibitem[Cai and Daskalakis(2017)]{cai2017learning}
Y.~Cai and C.~Daskalakis.
\newblock Learning multi-item auctions with (or without) samples.
\newblock In \emph{2017 IEEE 58th Annual Symposium on Foundations of Computer
  Science (FOCS)}, pages 516--527. IEEE, 2017.

\bibitem[Cole and Roughgarden(2014)]{cole2014sample}
R.~Cole and T.~Roughgarden.
\newblock The sample complexity of revenue maximization.
\newblock In \emph{Proceedings of the Forty-Sixth Annual ACM Symposium on
  Theory of Computing}, pages 243--252, 2014.

\bibitem[Devanur et~al.(2016)Devanur, Huang, and Psomas]{devanur2016sample}
N.~R. Devanur, Z.~Huang, and C.-A. Psomas.
\newblock The sample complexity of auctions with side information.
\newblock In \emph{Proceedings of the Forty-Eighth Annual ACM Symposium on
  Theory of Computing}, pages 426--439, 2016.

\bibitem[Dud{\'\i}k et~al.(2017)Dud{\'\i}k, Haghtalab, Luo, Schapire,
  Syrgkanis, and Vaughan]{dudik2017oracle}
M.~Dud{\'\i}k, N.~Haghtalab, H.~Luo, R.~E. Schapire, V.~Syrgkanis, and J.~W.
  Vaughan.
\newblock Oracle-efficient online learning and auction design.
\newblock In \emph{2017 IEEE 58th Annual Symposium on Foundations of Computer
  Science}, pages 528--539. IEEE, 2017.

\bibitem[Dvoretzky et~al.(1956)Dvoretzky, Kiefer, and
  Wolfowitz]{dvoretzky1956asymptotic}
A.~Dvoretzky, J.~Kiefer, and J.~Wolfowitz.
\newblock Asymptotic minimax character of the sample distribution function and
  of the classical multinomial estimator.
\newblock \emph{The Annals of Mathematical Statistics}, pages 642--669, 1956.

\bibitem[Gonczarowski and Nisan(2017)]{gonczarowski2017efficient}
Y.~A. Gonczarowski and N.~Nisan.
\newblock Efficient empirical revenue maximization in single-parameter auction
  environments.
\newblock In \emph{Proceedings of the 49th Annual ACM SIGACT Symposium on
  Theory of Computing}, pages 856--868, 2017.

\bibitem[Gonczarowski and Weinberg(2021)]{gonczarowski2021sample}
Y.~A. Gonczarowski and S.~M. Weinberg.
\newblock The sample complexity of up-to-$\varepsilon$ multi-dimensional
  revenue maximization.
\newblock \emph{Journal of the ACM}, 68\penalty0 (3):\penalty0 1--28, 2021.

\bibitem[Guo et~al.(2019)Guo, Huang, and Zhang]{guo2019settling}
C.~Guo, Z.~Huang, and X.~Zhang.
\newblock Settling the sample complexity of single-parameter revenue
  maximization.
\newblock In \emph{Proceedings of the 51st Annual ACM SIGACT Symposium on
  Theory of Computing}, pages 662--673, 2019.

\bibitem[Huang et~al.(2018)Huang, Mansour, and Roughgarden]{huang2018making}
Z.~Huang, Y.~Mansour, and T.~Roughgarden.
\newblock Making the most of your samples.
\newblock \emph{SIAM Journal on Computing}, 47\penalty0 (3):\penalty0 651--674,
  2018.

\bibitem[Klemperer(2002)]{klemperer2002really}
P.~Klemperer.
\newblock What really matters in auction design.
\newblock \emph{Journal of Economic Perspectives}, 16\penalty0 (1):\penalty0
  169--189, 2002.

\bibitem[Lahaie et~al.(2007)Lahaie, Pennock, Saberi, and
  Vohra]{lahaie2007sponsored}
S.~Lahaie, D.~M. Pennock, A.~Saberi, and R.~V. Vohra.
\newblock Sponsored search auctions.
\newblock \emph{Algorithmic Game Theory}, 1:\penalty0 699--716, 2007.

\bibitem[Massart(1990)]{massart1990tight}
P.~Massart.
\newblock The tight constant in the {D}voretzky-{K}iefer-{W}olfowitz
  inequality.
\newblock \emph{The Annals of Probability}, pages 1269--1283, 1990.

\bibitem[Milgrom and Milgrom(2004)]{milgrom2004putting}
P.~Milgrom and P.~R. Milgrom.
\newblock \emph{Putting Auction Theory to Work}.
\newblock Cambridge University Press, 2004.

\bibitem[Morgenstern and Roughgarden(2015)]{morgenstern2015pseudo}
J.~Morgenstern and T.~Roughgarden.
\newblock The pseudo-dimension of near-optimal auctions.
\newblock \emph{arXiv preprint arXiv:1506.03684}, 2015.

\bibitem[Morgenstern and Roughgarden(2016)]{morgenstern2016learning}
J.~Morgenstern and T.~Roughgarden.
\newblock Learning simple auctions.
\newblock In \emph{Conference on Learning Theory}, pages 1298--1318. PMLR,
  2016.

\bibitem[Myerson(1981)]{myerson1981optimal}
R.~B. Myerson.
\newblock Optimal auction design.
\newblock \emph{Mathematics of Operations Research}, 6\penalty0 (1):\penalty0
  58--73, 1981.

\bibitem[Roth and Ockenfels(2002)]{roth2002last}
A.~E. Roth and A.~Ockenfels.
\newblock Last-minute bidding and the rules for ending second-price auctions:
  Evidence from ebay and amazon auctions on the internet.
\newblock \emph{American Economic Review}, 92\penalty0 (4):\penalty0
  1093--1103, 2002.

\bibitem[Roughgarden and Schrijvers(2016)]{roughgarden2016ironing}
T.~Roughgarden and O.~Schrijvers.
\newblock Ironing in the dark.
\newblock In \emph{Proceedings of the 2016 ACM Conference on Economics and
  Computation}, pages 1--18, 2016.

\bibitem[Roughgarden and Wang(2019)]{roughgarden2019minimizing}
T.~Roughgarden and J.~R. Wang.
\newblock Minimizing regret with multiple reserves.
\newblock \emph{ACM Transactions on Economics and Computation}, 7\penalty0
  (3):\penalty0 1--18, 2019.

\bibitem[Syrgkanis(2017)]{syrgkanis2017sample}
V.~Syrgkanis.
\newblock A sample complexity measure with applications to learning optimal
  auctions.
\newblock \emph{arXiv preprint arXiv:1704.02598}, 2017.

\end{thebibliography}
\bibliographystyle{abbrvnat}

\clearpage
\appendix

\begin{center}

  {\bf{\LARGE{Appendix}}}
\end{center}

\section{Proofs of Technical Lemmas}\label{app:sec:tech-lemmas}

\begin{replemma}{lem:linkFunctions}
  A distribution with CDF $F$ is MHR if and only if $h_M(x; F)$ is a convex 
  function of $x$. Similarly, $F$ is regular if and only if $h_r(x; F)$ is a convex function of $x$. 
  Moreover, for two MHR (resp. regular) distributions $F_1$ and $F_2$, such that $F_1 \succeq F_2$, then we have that $h_M(x; F_1) \le h_M(x; F_2)$ (resp. $h_r(x; F_1) \le h_r(x; F_2)$) for all $x$.
\end{replemma}

\begin{proof}
We first show that given the CDF of any MHR distribution $F(x): \R_{+} \rightarrow [0,1]$, $h_M(x) \defeq -\log(1-F(x))$ is a convex, non-decreasing function with $h(0) = 0$.  (Without loss of generality, we consider $x \in [0, \infty]$, i.e. $\argmin_x h(x) = 0$.) We first present the analysis for the case when the distribution is continuous and smooth, and then generalize the same statement to discrete distributions. 

\noindent \underline{MHR continuous distributions:}\\
Denote the corresponding PDF of $F(x)$ as $f(x)$, and $g(x) \defeq \frac{f(v)}{1-F(v)}$. By definition, $F(0) = 0$ implies $h_M(0) = 0$.  Then, given that $F(x)$ is MHR, we have that $g(x)$ is monotone non-decreasing. By construction, 
\begin{align*}
   (h_M(x))''= \left(\frac{f(v)}{1-F(v)}\right)' = g'(x) \geq 0.
\end{align*}
Therefore, $h_M(x)$ is convex. Moreover, since $F(x)$ is a CDF thus non-decreasing, $h_M(x)=-\log(1-F(x))$ is also non-decreasing. We show that given any $h_M(x): \R_{+} \rightarrow \R_{+} $, such that $h_M(x)$ is convex, non-decreasing, $h_M(0) = 0$, and $\max_{x} h_M(x) = \infty$. Then, $F(x) \defeq 1-\exp(-h_M(x))$ is CDF of an MHR distribution.

By construction, $h_M(0) = 0$ implies $F(0)=0$, and $\max_x h_M(x)$ implies $\max_x F(x) =1$. Also given that $h_M(x)$ is convex, $g'(x) =  \left(\frac{f(v)}{1-F(v)}\right)' = (h_M(x))'' \geq 0$, which by definition implies $F(x)$ is MHR.

\noindent \underline{MHR discrete distributions:}\\
The lemma statement generalizes to the case when the valuation is discrete. We assume that the valuation can take a discrete set of values $\{x_i\}, i=1, \cdots, n$. Without loss of generality, we will restrict these values to the set $\mathbb{N}_0$ with probability mass function $P(x=i) = p_i; i =0\cdots n$. We define the \textit{discrete} hazard rate as: 
\[g(x_i) = \frac{P(x=i)}{P(x\geq i)}.\]
Then, the valuation distribution is MHR iff the discrete hazard rate is non-decreasing:
\begin{equation}\label{eq:mhr-discrete-def}
    g(x_{i+1}) \geq g(x_i),
\end{equation}
for all $i=0\cdots n$.

In this case, our link function will also be discrete. Further, denote $s_i \defeq P(x\geq i)$, then
\[
h(x_i) = -\log(P(x \geq x_i)) = -\log(s_i).
\]
Then $h(x)$ is convex if and only if for any $i\geq 0$,  
\begin{equation}\label{eq:mhr-discrete-h-convex}
h(x_{i+2}) - h(x_{i+1}) \geq h(x_{i+1} - h(x_i).
\end{equation}

We show that Eq~\eqref{eq:mhr-discrete-def} and Eq~\eqref{eq:mhr-discrete-h-convex} are equivalent. Notice that
\begin{align*}
   &h(x_{i+2}) - h(x_{i+1}) \geq h(x_{i+1} - h(x_i) \\
   &\iff \frac{s_{i+1}}{s_{i+1}-p_{i+1}} \geq \frac{s_i}{s_i-p_i} \\
   &\iff p_{i+1}s_i \geq p_i s_{i+1}\\
   &\iff \frac{p_{i+1}}{s_{i+1}} \geq \frac{p_i}{s_i}\\
    &\iff g(x_{i+1}) \geq g(x_i),
\end{align*}
which completes the proof.

\noindent \underline{Regular continuous distributions:}\\
We further prove a similar statement for regular distributions. First, given a CDF of a regular distribution $F(x)$, 
\[
\left(\frac{1}{1-F(x)}\right)'' = \frac{(1-F(x))f(x)'+ 2f(x)^2}{(1-F(x))^3}.
\]
By definition, the virtual value function is $\phi(x) \defeq v-\frac{1-F(x)}{f(x)}$, and
\[
\phi'(x) = \frac{(1-F(x))f(x)'+ 2f(x)^2}{f(x)^2}.
\]
Therefore, $\left(\frac{1}{1-F(x)}\right)''$ and $\phi'(x)$ share the same sign. Moreover, the distribution with CDF as $F(x)$ is regular if and only if the virtual value $\phi(x)$ is monotonically non-decreasing, which is $\phi'(x) \geq 0$. Hence the regularity of $F(x)$ implies that $h_r(x) \defeq \frac{1}{1-F(x)}$ is convex. Since $F(x)$ is a CDF thus non-decreasing, $h_r(x)=\frac{1}{1-F(x)}$ is also non-decreasing.

\noindent \underline{Regular discrete distributions:}\\
Similar to the MHR distributions, the lemma statement generalizes to the case when the valuation is discrete for regular distributions. Assume that the valuation can take a discrete set of values $\{x_i\}, i=1, \cdots, n$. Without loss of generality, we will restrict these values to the set $\mathbb{N}_0$ with probability mass function $P(x=i) = p_i; i =0\cdots n$. Further, consistent with the proof for MHR distributions, we denote $s_i \defeq P(x\geq i)$. 

The \textit{discrete} virtual value function is defined as:
\[
\phi(x_i) = x_i - \frac{s_{i}}{p_{i}},
\]
and the valuation distribution is regular iff $\phi(x)$ is non-decreasing:
\begin{equation}\label{eq:regular-discrete-def}
    \phi(x_{i+1}) \geq \phi(x_i),
\end{equation}
for all $i=0\cdots n$. 

In this case, our link function will again be discrete:
\[
h(x_i) = \frac{1}{P(x\geq x_i)} = \frac{1}{s_i}.
\]
and $h(x)$ is convex if and only if for any $i\geq 0$,  
\begin{equation}\label{eq:regular-discrete-h-convex}
h(x_{i+2}) - h(x_{i+1}) \geq h(x_{i+1}) - h(x_i).
\end{equation}
We show that Eq~\eqref{eq:regular-discrete-def} and Eq~\eqref{eq:regular-discrete-h-convex} are equivalent. 
\begin{align}\label{eq:regularDiscretePart1}
\begin{split}
    & h(x_{i+2}) - h(x_{i+1}) \geq h(x_{i+1}) - h(x_i) \\
    &\iff \frac{1}{s_{i+2}} + \frac{1}{s_i} \geq \frac{2}{s_{i+1}}
    \\
    &\iff \frac{1}{s_{i+1} - p_{i+1}} + \frac{1}{s_i} \geq \frac{2}{s_{i+1}}
    \\
    &\iff s_{i+1}^2 + p_ip_{i+1} \geq s_i s_{i+1} - s_i p_{i+1}. \\
    &\iff p_ip_{i+1} + p_{i+1}s_i + s_{i+1}(s_{i+1} - s_i) \geq 0\\
    &\iff p_ip_{i+1} + p_{i+1}s_i - s_{i+1}p_i \geq 0
    \end{split}
\end{align}
Moreover, from the regularity condition Eq~\eqref{eq:regular-discrete-def}, we have
\begin{align}\label{eq:regularDiscretePart2}
\begin{split}
&\phi(x_{i+1}) \geq \phi(x_i) \\
&\iff  i + 1 - \frac{s_{i+1}}{p_{i+1}} \geq i - \frac{s_i}{p_i} \\
&\iff 1 - \frac{s_{i+1}}{p_{i+1}} + \frac{s_i}{p_i} \geq 0 \\
&\iff p_ip_{i+1} + p_{i+1}s_i - s_{i+1}p_i \geq 0.
\end{split}
\end{align}
Combining \eqref{eq:regularDiscretePart1} and \eqref{eq:regularDiscretePart2} together completes the proof.

\noindent \underline{Stochastic dominance:}\\
Lastly, we show that for two MHR (resp. regular) distributions $F_1$ and $F_2$, such that $F_1 \succeq F_2$, then we have that $h_M(x; F_1) \le h_M(x; F_2)$ (resp. $h_r(x; F_1) \le h_r(x; F_2)$) for all $x$. This follows directly from the monotonicity of the link functions and the definition of stochastic dominance (see Definition~\ref{def:stochastic-dominance}).

Recall that the link function $h_M(x; F)$ for MHR distributions is defined as $h_M(x; F) = - \ln(1 - F(x))$, and the link function $h_r(x; F)$ for regular distributions is defined as $h_r(x; F) = 1/(1 - F(x))$. Therefore, for two MHR (resp. regular) distributions $F_1$ and $F_2$, $F_1(x) < F_2(x)$ implies $h_M(x, F_1) < h_M(x, F_2)$ (resp. $h_r(x, F_1) < h_r(x, F_2)$), which completes the proof.

\end{proof}

\begin{replemma}{lem:compute-envelope}
     Let $f$ be a non-decreasing piecewise constant function with $k$ pieces, then $Conv(f)$ can be computed in time $\mathrm{poly}(k)$ and is a piecewise linear function
  with $O(k)$ pieces.
\end{replemma}

\begin{proof}
Given that $f(x)$ is a non-decreasing piecewise constant function with $k$ pieces, we show that the following iterative procedure outputs its lower convex envelope $Conv(f)$
, which can be computed in time $\mathrm{poly}(k)$ and is a piecewise linear function with $O(k)$ pieces. Figure~\ref{fig:conv-env} provides an illustration of the construction according to this procedure.

\begin{protocol}[H]
\begin{algorithmic}[1]
\caption{Computing lower convex envelope for non-decreasing piecewise constant functions}
\STATE{\textbf{Input: a piecewise constant function $f(x): \R\rightarrow \R$ with $k$ pieces.} Denote the left starting point of each piece and the end point as $x_0, \ldots , x_k$.}
\STATE{\textbf{Initialize: }$i\gets 0, i' \gets 0$.}
\WHILE{$i\leq k-1$}
\STATE{$\bar x_{i'} \gets x_i, g(\bar x_{i'}) \gets f(x_i)$.}\label{alg:step:update-g}
\STATE{$i' \gets i'+1$.}
\STATE{Compute $i \gets \argmin_{i < j \leq k} \frac{f(x_j) - f(x_i)}{x_j - x_i} $.}\label{alg:step:conv-env}
\ENDWHILE
\STATE{$\bar x_{i'} \gets x_i, g(\bar x_{i'}) \gets f(x_i)$; $k' \gets i'$.}
\STATE{\textbf{Return: a piecewise linear function $g(x): \R\rightarrow \R$ with $k' < k$ pieces.} The left starting points of each piece and the end points are $\bar x_0, \ldots , \bar x_{i'}$, with the corresponding function values as specified in the procedure.}
\end{algorithmic}
\end{protocol}

\begin{figure*}[!ht]
\centering
\begin{tabular}{c} 
\includegraphics[width=0.5\textwidth]{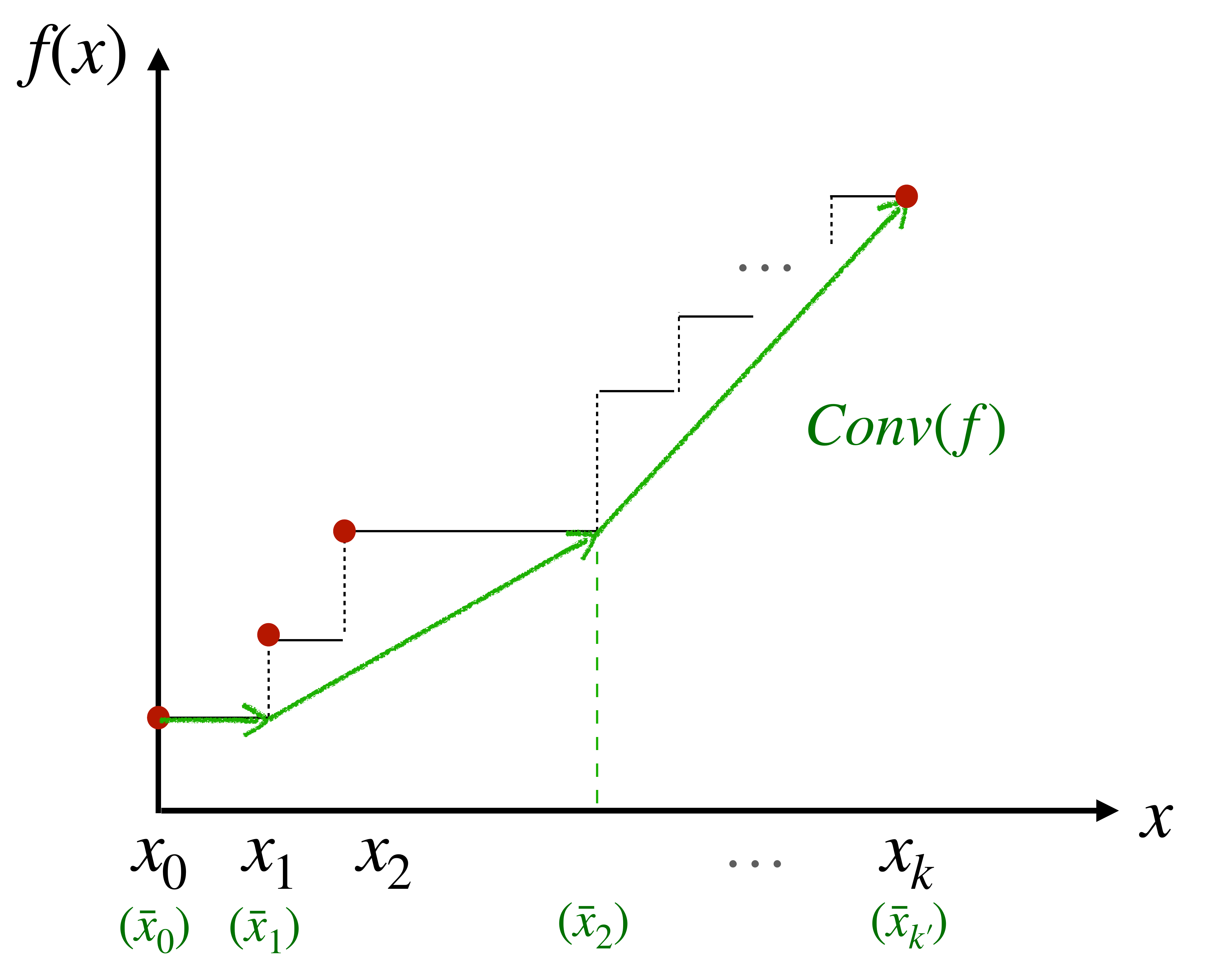} 
\end{tabular}
\caption{Lower convex envelope of a non-decreasing piecewise constant function $f(x)$ .}\label{fig:conv-env}
\end{figure*}

First, the above procedure requires at most $k^2$ rounds. We show that its output, $g(x)$, is the lower convex envelope for $f(x)$. It is clear from construction that $g(x)$ is piecewise linear, with vertices at $\bar x_0, \ldots , \bar x_{k'}$. Moreover, $g(x)\leq f(x)$ for all $x$ by construction.

Next we show that $g(x)$ is convex. Consider at a round $t$ with $i=i_t, 1 < 1 <k$. Then, step~\eqref{alg:step:conv-env} computes $i_{t+1} = \argmin_{i_t < j \leq k} \frac{f(x_j) - f(x_{i_t})}{x_j - x_{i_t}} $. Further denote $\min_{i_t < j \leq k} \frac{f(x_j) - f(x_{i_t})}{x_j - x_{i_t}}$ as $s(i_t)$. We show that $s(i_{t+1}) \geq s_{i_t}$. 

Suppose that $s(i_{t+1}) < s_{i_t}$. Then there exists $j^\ast > i_{t+1}> i_t$, such that 
\[
\frac{f(x_{j^\ast}) - f(x_{i_{t+1}})}{x_{j^\ast} - x_{i_{t+1}}} < \frac{f(x_{i_{t+1}}) - f(x_{i_t})}{x_{i_{t+1}} - x_{i_t}},
\]
which further implies that 
\[
\frac{f(x_{j^\ast}) - f(x_{i_{t}})}{x_{j^\ast} - x_{i_{t}}} < \frac{f(x_{i_{t+1}}) - f(x_{i_t})}{x_{i_{t+1}} - x_{i_t}}.
\]
Since $j^\ast > i_{t+1}> i_t$, this contradicts the fact that $i_{t+1} = \argmin_{i_t < j \leq k} \frac{f(x_j) - f(x_{i_t})}{x_j - x_{i_t}}$. Therefore  $s(i_{t+1}) \geq s_{i_t}$, which means that the slope of each piece for $g(x)$ is non-decreasing. Thus $g(x)$ is convex. Lastly, since $g(x)$ has all vertices with the same function values as $f(x)$, i.e. $g(x) = f(x)$ at all its vertices, and given that $g(x) \leq f(x)$ for all $x$, the values at these vertices are maximized and cannot be further improved. This completes the proof.
\end{proof}

We further provide two lemmas which present useful properties of the link functions in connection to the revenue.

\begin{lemma}\label{app:single-mhr-h-lem}
Given an MHR distribution with the CDF as $F(x): \R_{+} \rightarrow [0,1]$. Define $h(x) \defeq -\log(1-F(x))$. Then, at any reserve price $x$, the expected revenue $R(x) = \exp(-h(x) + \log(x))$. Moreover, the optimal reserve price $P^\ast_{F}$ is the minimizer of $(h(x) - \log(x))$.
\end{lemma}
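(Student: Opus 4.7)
The lemma is essentially a direct identity together with a change of variables, so the proof will be short and the main task is laying out the algebra cleanly.

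The plan is as follows. First I would establish the elementary fact that for a single bidder with valuation $v \sim F$, posting a reserve price $x$ yields expected revenue
\[
R(x) \;=\; x \cdot \Pr[v \ge x] \;=\; x \cdot (1 - F(x)),
\]
since the bidder buys exactly when $v \ge x$ and pays $x$ upon buying; this uses only the DSIC/IR structure for a single-buyer posted-price mechanism.

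Second, I would substitute the definition $h(x) = -\log(1 - F(x))$, which rearranges to $1 - F(x) = \exp(-h(x))$. Plugging this into the previous display gives
\[
R(x) \;=\; x \cdot \exp(-h(x)) \;=\; \exp\bigl(\log x - h(x)\bigr) \;=\; \exp\bigl(-(h(x) - \log x)\bigr),
\]
which is exactly the claimed formula for $R(x)$.

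Third, I would derive the characterization of $P^\ast_F$. Since $\exp$ is strictly increasing, maximizing $R(x)$ over $x$ is equivalent to maximizing $\log x - h(x)$, or equivalently minimizing $s(x) \defeq h(x) - \log x$. By Myerson's theorem the optimal single-bidder mechanism is a posted price, so $P^\ast_F = \argmax_x R(x) = \argmin_x s(x)$, which is exactly the claim. Moreover (although not strictly required by the statement) one can note that for an MHR distribution Lemma~\ref{lem:linkFunctions} gives that $h$ is convex, and since $-\log x$ is also convex, $s$ is convex on $\R_+$, so the minimizer is unique whenever it lies in the interior.

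There is no real obstacle: once the identity $1 - F(x) = \exp(-h(x))$ is written down, the two claims reduce to a rewriting of the revenue formula and the monotonicity of $\exp$. The only delicate point worth flagging is that the argmin/argmax might occur at the boundary (e.g., $x \to \infty$ with vanishing probability of sale), but standard MHR tail arguments guarantee that $s(x) \to \infty$ as $x \to \infty$, so the minimum is attained at a finite $P^\ast_F$; I would mention this briefly to justify that the minimizer exists.
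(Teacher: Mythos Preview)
Your proposal is correct and follows essentially the same approach as the paper: both start from $R(x) = x(1 - F(x))$, rewrite it via the identity $h(x) - \log x = -\log R(x)$ (equivalently $R(x) = \exp(-(h(x) - \log x))$), and then use the monotonicity of $\exp$/$\log$ to reduce revenue maximization to minimizing $h(x) - \log x$. The convexity and boundary remarks you add are extra and not present in the paper's brief proof, but they do no harm.
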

\begin{proof}
First by construction, $h(x) - \log(x) = -\log(R(x))$. By definition, the optimal reserve price maximizes the revenue $R(x) = x(1-F(x))$, thus
\begin{align*}
    &\max \quad  x(1-F(x)) \\
    &\iff  \min  \quad  -\log(x(1-F(x)) ) \\
     &\iff  \min  \quad  - \log (x) - \log (1-F(x)) \\
    &\iff \min  \quad   h(x) - \log(x),
\end{align*}
which completes the proof.
\end{proof}

\begin{lemma}\label{app:lem:single-mhr-price-location}
Consider a valuation distribution $\cD$ with CDF as $F(x)$. Denote the optimal reserve price as $P^\ast_F$ and the optimal expected revenue at $P^\ast_F$ as $\OPT_F$. Then $P^\ast_F$ should be $P^\ast_F \leq e$, assuming that  $\OPT_F \leq 1$ and $F(x)$ is MHR.
\end{lemma}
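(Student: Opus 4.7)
The plan is to combine the first-order optimality characterization from \lemmaref{app:single-mhr-h-lem} with the convexity of the link function $h \defeq -\log(1-F(\cdot))$. By that lemma, the optimal reserve $P^\ast_F$ minimizes $s(x) = h(x) - \log(x)$ on $\R_+$, and by \lemmaref{lem:linkFunctions} the MHR assumption makes $h$ convex with $h(0) = 0$. First I would extract the first-order condition at $P^\ast_F$: the slope of $h$ at $P^\ast_F$ must equal $1/P^\ast_F$, or more generally $1/P^\ast_F \in \partial h(P^\ast_F)$ when $h$ is not differentiable. Second, I would invoke the secant inequality for a convex function through the origin: for any subgradient $g \in \partial h(P^\ast_F)$,
\[
  h(P^\ast_F) \;\le\; g \cdot P^\ast_F,
\]
which, applied with $g = 1/P^\ast_F$, yields the clean bound $h(P^\ast_F) \le 1$.

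From this bound the conclusion follows in two short lines. Exponentiating gives $1 - F(P^\ast_F) = \exp(-h(P^\ast_F)) \ge e^{-1}$. Combining with the revenue identity $\OPT_F = P^\ast_F\,(1 - F(P^\ast_F))$ from \lemmaref{app:single-mhr-h-lem} gives $\OPT_F \ge P^\ast_F / e$, and the hypothesis $\OPT_F \le 1$ immediately forces $P^\ast_F \le e$, as desired.

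The main obstacle is not mathematical depth but a careful formalization: for a general MHR distribution the link function $h$ need not be differentiable at $P^\ast_F$, so I would write the first-order condition as $0 \in \partial s(P^\ast_F) = \partial h(P^\ast_F) - 1/P^\ast_F$ and use the subgradient version of the secant inequality, which is the standard convex-analytic generalization of the tangent-line characterization. A secondary sanity check is that the revenue-maximizing reserve is attained at a finite, strictly positive $P^\ast_F$; this is immediate because $R(x) = x\exp(-h(x))$ vanishes at $x = 0$ and decays to $0$ as $x \to \infty$ (convexity of $h$ through the origin forces at least linear growth, dominating $\log x$), so the minimizer of $s$ lies in the interior of $\R_+$ and the subgradient argument applies.
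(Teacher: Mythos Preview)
Your argument is correct and, in fact, tidier than the paper's. Both proofs rest on the same two ingredients --- $P^\ast_F$ minimizes $s(x)=h(x)-\log x$ (Lemma~\ref{app:single-mhr-h-lem}) and $h$ is convex with $h(0)=0$ (Lemma~\ref{lem:linkFunctions}) --- but they combine them differently. The paper proceeds geometrically: it writes $h(P^\ast_F)=\log(P^\ast_F)+b$ with $b=-\log\OPT_F\ge 0$, finds the tangent from the origin to $\log(x)+b$ (touching at $x^\ast=e^{1-b}\le e$), and then argues by contradiction that if $P^\ast_F>x^\ast$ the chord from the origin to $(P^\ast_F,h(P^\ast_F))$ would force a second point where $h$ meets $\log(x)+b$, contradicting uniqueness of the minimizer. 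You instead extract the first-order condition $1/P^\ast_F\in\partial h(P^\ast_F)$ and apply the subgradient inequality at $0$ to obtain $h(P^\ast_F)\le 1$ directly; combining this with $\OPT_F=P^\ast_F\,e^{-h(P^\ast_F)}$ gives $P^\ast_F\le e\cdot\OPT_F\le e$ without any contradiction step. Your route isolates the scale-free fact $h(P^\ast_F)\le 1$ (equivalently, the sale probability at the optimal reserve is at least $1/e$) --- which is exactly what the paper needs later in Proposition~\ref{prop:single-mhr-ball} --- and only invokes $\OPT_F\le 1$ at the very end, whereas the paper's geometric proof entangles the normalization throughout. Your handling of nondifferentiability via subgradients and the check that the minimizer is interior are both sound.
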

\begin{proof}
By Lemma~\ref{app:single-mhr-h-lem}, $\OPT_F \leq 1$ implies that, 
\[
h(P_F^\ast) = \log(P_F^\ast) + b,
\]
for some $b \geq 0$. Also by Lemma~\ref{lem:linkFunctions}, $h$ is convex. Combined with the fact that $\OPT_F$ is the optimal reserve price and the concavity of $\log(x)$, $\OPT_F$ is the only point where $h(P_F^\ast) = \log(P_F^\ast) + b$ holds.

Now consider a linear function $y = ax, a>0$, which is a tangent line of the function $\log(x) + b$. Denote the tangent point as $x^\ast$. Solving the equation that $a = (\log(x))'= \frac{1}{x}$, and $ax = \log(x) + b$ give that:
\[
 x^\ast = e^{1-b} \leq e.
\]

Suppose that $P_{F}^\ast > x^\ast$. Consider the linear function $g(x) = \frac{h(P_{F}^\ast)}{P_{F}^\ast} x$.
Since $x^\ast$ is the tangent point, there exists a point $\bar x < P_{F}^\ast$, such that $g(\bar x) = \log(\bar x) + b$. Further, since $h$ is convex, for any point $0 < x < P_{F}^\ast$, we have $h(x) < g(x)$. By the continuity of $\log(x)$ and $h(x)$, there exists $\bar x' < P_{F}^\ast$, such that $h(\bar x') = \log(\bar x) + b$. This implies that $\bar x'$ achieves a larger revenue than $P_F^\ast$, and contradicts the fact that $P_F^\ast$ is the optimal reserve price. Hence, $P_F^\ast < x^\ast \leq e$, which completes the proof.
\end{proof}

\section{Proof of Upper Bounds for the Population Model} \label{sec:multibidderUpperBound}

We first prove the following technical lemma that connects the coordinate 
Kolmogorov distance with the difference in expectation of increasing
functions.

\begin{definition}[Increasing Functions and Sets] \label{def:increasingFunctions}
    Let $u : \R^n \to \R$, we say that $u$ is increasing if for 
  every $\vv = (v_1, \dots, v_n)$, $\vv' = (v'_1, \dots, v'_n)$ such that
  $v'_i \ge v_i$, it holds that $u(\vv') \ge u(\vv)$. We say that the subset 
  $A \subseteq \R^n$ is increasing if and only if its characteristic function
  $\mathbf{1}_A(\vx)$ is an increasing function of $\vx$.
\end{definition}

\begin{lemma} \label{lem:increasingFunctionKolmogorov}
    Let $\vD = \D_1 \times \cdots \times \D_n$, 
  $\vD' = \D'_1 \times \cdots \times \D'_n$ be product $n$-dimensional 
  distributions with $d_k(\D_i, \D'_i) \le \alpha_i$. Then for every 
  increasing function $u : \R^n \to [0, \bar{u}]$ it holds that
  \[ \Abs{\E_{\vv \sim \vD}[u(\vv)] - \E_{\vv' \sim \vD'}[u(\vv')]} \le \bar{u} \cdot \left( \sum_{i = 1}^n \alpha_i \right). \]
\end{lemma}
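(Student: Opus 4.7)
The plan is to handle the one-dimensional case first and then lift it to the product setting by a standard hybrid argument over coordinates.

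For the one-dimensional base case ($n = 1$), I would start from the layer-cake identity
\[ \E_{v \sim \D}[u(v)] = \int_0^{\bar{u}} \Pr_{v \sim \D}[u(v) > t]\, dt, \]
and likewise for $\D'$. Because $u$ is increasing and takes values in $[0, \bar{u}]$, every superlevel set $S_t = \{v \in \R : u(v) > t\}$ is upward-closed in $\R$, hence a right half-line of the form $(a_t, \infty)$ or $[a_t, \infty)$. I would then observe that the $\D$- and $\D'$-measures of any such right half-line differ by at most $\alpha$: this follows from $d_k(\D, \D') \le \alpha$, that is $\sup_x |\D((-\infty, x)) - \D'((-\infty, x))| \le \alpha$, by taking complements and, where needed, one-sided limits of CDFs at $a_t$ to deal with both the open and the closed endpoint cases. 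Integrating the pointwise bound $|\D(S_t) - \D'(S_t)| \le \alpha$ over $t \in [0, \bar{u}]$ then yields $|\E_{\D}[u] - \E_{\D'}[u]| \le \bar{u}\, \alpha$.

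For general $n$, I would introduce the hybrid product distributions
\[ \vD^{(k)} = \D'_1 \times \cdots \times \D'_k \times \D_{k+1} \times \cdots \times \D_n, \qquad k = 0, 1, \ldots, n, \]
so that $\vD^{(0)} = \vD$ and $\vD^{(n)} = \vD'$, and then apply the triangle inequality to the telescoping sum
\[ \E_{\vD'}[u] - \E_{\vD}[u] = \sum_{k = 1}^n \bigl(\E_{\vD^{(k)}}[u] - \E_{\vD^{(k-1)}}[u]\bigr). \]
Consecutive hybrids differ only in the $k$-th marginal, so Fubini lets me condition on the other coordinates $\vv_{-k}$, distributed under the shared marginals of $\vD^{(k-1)}$ and $\vD^{(k)}$, and reduce the $k$-th difference to an outer expectation of a one-dimensional gap in the $k$-th coordinate alone. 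For each fixed $\vv_{-k}$, the section $v_k \mapsto u(v_k; \vv_{-k})$ is a coordinatewise restriction of an increasing function, hence itself increasing in $v_k$ and valued in $[0, \bar{u}]$; the base case therefore gives the bound $\bar{u}\, \alpha_k$ uniformly in $\vv_{-k}$. Summing over $k$ produces the claimed $\bar{u} \sum_i \alpha_i$ bound.

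The main obstacle is confined to the base case: one must verify that even though Kolmogorov distance is defined here in terms of open lower intervals, it still controls the measure of every half-line arising as a superlevel set of an increasing $u$, including closed half-lines created by jump discontinuities. This is handled by the one-sided limit argument above, after which the hybrid step is routine.
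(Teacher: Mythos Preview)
Your proposal is correct and uses the same two ingredients as the paper's proof---a coordinate-by-coordinate hybrid and a layer-cake (superlevel-set) decomposition---but in the opposite order. The paper first establishes the bound for indicators $\mathbf{1}_A$ of increasing sets in $\R^n$ via the hybrid (observing that each section of $\mathbf{1}_A$ in the swapped coordinate is a single threshold, so the Kolmogorov bound applies directly), and only afterward passes to general $u$ by approximating it uniformly with simple functions $\tfrac{\bar u}{k}\sum_i \mathbf{1}_{A_{i,k}}$. Your route---layer-cake in one dimension first, then hybrid with Fubini---is slightly more economical because it avoids the separate uniform-approximation step; conversely, the paper's ordering makes the role of the Kolmogorov distance a bit more transparent, since it only ever compares probabilities of a single threshold event rather than integrating over them. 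Your care about closed versus open half-lines is a detail the paper glosses over, and your one-sided-limit argument handles it correctly.
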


\begin{proof}
    Our first step is to prove that the lemma holds for any function $u$ that
  is a characteristic function of an increasing set $A$ and then we extend to 
  all increasing functions.
  
    Let $u = \mathbf{1}_A$ we have that 
  $\E_{\vv \sim \vD}[u(\vv)] = \Pr_{\vv \sim \vD}(\vv \in A)$. We define the 
  sequence of distributions 
  $\vD_j = \D'_1 \times \cdots \times \D'_j \times \D_{j + 1} \times \cdots \times \D_n$ for $j = 0, \dots, n$,
  where obviously $\vD_0 = \vD$ and $\vD_n = \vD'$. Now via triangle inequality
  we have that
  \begin{align} \label{eq:subadditiveMultiBidderProof}
    \Abs{\Pr_{\vv \sim \vD}(\vv \in A) - \Pr_{\vv \sim \vD'}(\vv \in A)} \le \sum_{j = 1}^n \Abs{\Pr_{\vv \sim \vD_j}(\vv \in A) - \Pr_{\vv \sim \vD_{j - 1}}(\vv \in A)}.
  \end{align}
  \noindent Let $b_j(\vv_{-j})$ be the threshold of the step function 
  $\mathbf{1}_A(v_j, \vv_{-j})$ when we fix $\vv_{-j}$ and we view it as a 
  function of $v_j$. Now we have that
  \begin{align*}
    \Pr_{\vv \sim \vD_j}(\vv \in A) & = \int_{\R^n} \mathbf{1}_{A}(x_j, \vx_{-j}) ~~ d \D'_1(x_1) \cdots d \D'_j(x_j) \cdot d \D_{j + 1}(x_{j + 1}) \cdots d \D_n(x_n) \\
       & = \int_{\R^{n-1}} (1 - \D'_j(b_j(\vx_{-j}))) ~~ d \D'_1(x_1) \cdots d \D'_{j - 1}(x_{j - 1}) \cdot d \D_{j + 1}(x_{j + 1}) \cdots d \D_n(x_n)
  \end{align*}
  similarly we have
  \begin{align*}
    \Pr_{\vv \sim \vD_{j - 1}}(\vv \in A) & = \int_{\R^{n-1}} (1 - \D_j(b_j(\vx_{-j}))) ~~ d \D'_1(x_1) \cdots d \D'_{j - 1}(x_{j - 1}) \cdot d \D_{j + 1}(x_{j + 1}) \cdots d \D_n(x_n).
  \end{align*}
  Combining these we get that
  \begin{align*}
    & \Abs{\Pr_{\vv \sim \vD_{j}}(\vv \in A) - \Pr_{\vv \sim \vD_{j - 1}}(\vv \in A)} \le \\
    & ~~~~ \le \int_{\R^{n-1}} \Abs{\D'_j(b_j(\vx_{-j})) - \D_j(b_j(\vx_{-j}))} ~~ d \D'_1(x_1) \cdots d \D'_{j - 1}(x_{j - 1}) \cdot d \D_{j + 1}(x_{j + 1}) \cdots d \D_n(x_n).
  \end{align*}
  from the latter we can use the fact that $d_k(\D_j, \D'_j) \le \alpha_j$ and 
  we get that
  \begin{align*}
    \Abs{\Pr_{\vv \sim \vD_{j}}(\vv \in A) - \Pr_{\vv \sim \vD_{j - 1}}(\vv \in A)} \le \alpha_j.
  \end{align*}
  Applying the above to \eqref{eq:subadditiveMultiBidderProof} we get that
  \begin{align} \label{eq:KolmogorovIncreasingSetsDistance}
    \Abs{\Pr_{\vv \sim \vD}(\vv \in A) - \Pr_{\vv \sim \vD'}(\vv \in A)} \le \sum_{j = 1}^n \alpha_j.
  \end{align}
  
    The last steps is to extend the above to arbitrary increasing functions.
  We are going to approximate the increasing function $u$ via a sequence of
  functions $u_k$ which uniformly converges to $u$. Then we will show the 
  statement of the lemma for every function $u_k$ which by uniform convergence 
  implies the lemma for $u$ as well. We set 
  $A_{i, k} \triangleq \{\vx \in \R^n \mid u(\vx) \ge \frac{i}{k} \bar{u}\}$ and
  we define
  \[ u_k(\vx) = \frac{\bar{u}}{k} \sum_{i = 1}^k \mathbf{1}_{A_{i, k}}(\vx). \]
  Observe from the above definition that $u_k \to u$ uniformly and since $u$ is
  increasing we also have that all the sets $A_i$ are increasing. Also observe 
  that
  \[ \E_{\vv \sim \vD}[u_k(\vv)] = \frac{\bar{u}}{k} \sum_{i = 1}^k \Pr_{\vv \sim \vD}(\vv \in A_{i, k}) \]
  therefore we get that
  \begin{align*}
    \Abs{\E_{\vv \sim \vD}[u_k(\vv)] - \E_{\vv \sim \vD'}[u_k(\vv)]} \le 
    \frac{\bar{u}}{k} \sum_{i = 1}^k \Abs{\Pr_{\vv \sim \vD}(\vv \in A_{i, k}) - \Pr_{\vv \sim \vD'}(\vv \in A_{i, k})}.
  \end{align*}
  Now we can apply \eqref{eq:KolmogorovIncreasingSetsDistance} and we get
  \begin{align*}
    \Abs{\E_{\vv \sim \vD}[u_k(\vv)] - \E_{\vv \sim \vD'}[u_k(\vv)]} \le 
    \bar{u} \cdot \left( \sum_{j = 1}^n \alpha_j \right).
  \end{align*}
  Finally, since this is true for every $u_k$ and $u$ converges uniformly to $u$
  the above should be true for $u$ as well and hence the lemma follows.
\end{proof}

  We are going to use Lemma \ref{lem:increasingFunctionKolmogorov} both for the 
regular distributions case and for the MHR distributions case.

\subsection{Monotone Hazard Rate Distributions---Proof of Theorem \ref{thm:mhr-inf-sample}} \label{sec:multibidderUpperBound:MHR}

  In this section we show the part of the Theorem 
\ref{thm:mhr-inf-sample} related to $n > 1$. For the stronger result
for the case $n = 1$ we refer to Section \ref{sec:single-mhr-population-ub}.

  Let $\tilde{\vD}$ be the corrupted product distribution that we observe,
$\hat{\vD}$ be the output distribution of 
Algorithm~\ref{Alg:inf-sample-single-bidder}, $\vDstar$ be the original
distribution that we are interested in. We know from the description of
Algorithm~\ref{Alg:inf-sample-single-bidder} for 
$\hat{\vD} = \hat{\D}_1 \times \cdots \times \hat{\D}_n$ that $\hat{\D}_i$ is 
MHR, that $d_k(\hat{\D}_i, \Dstar_i) \le \alpha_i$ and that
$\hat{\D}_i \preceq \Dstar_i$. We also know that $\Dstar_i$ is MHR. Finally, we 
know that the output $M$ of Algorithm \ref{Alg:inf-sample-single-bidder} is the 
Myerson optimal mechanism for the distribution $\hat{\vD}$ and hence
$\REV(M, \hat{\vD}) = \OPT(\hat{\vD})$. So applying the strong revenue 
monotonicity lemma \ref{lem:strong-rev-monotone} we have that
\begin{align} \label{eq:revenueToOptimal}
  \OPT(\hat{\vD}) = \REV(M, \hat{\vD}) \le \REV(M, \vDstar).
\end{align}
Therefore to show Theorem \ref{thm:mhr-inf-sample}, it suffices to show that
\begin{align} \label{eq:optimalToOptimalCondition}
  \OPT(\hat{\vD}) \ge \left(1 - \tilde{O}\left(\sum_{i = 1}^n \alpha_i\right)\right) \cdot \OPT(\vDstar).
\end{align}

  We are going to use the following result from \cite{cai2011extreme} but with 
the formulation obtained in Lemma 17 of \cite{guo2019settling}, combined with 
the weak revenue monotonicity (Lemma 3 of \cite{guo2019settling}).

\begin{theorem}[\cite{cai2011extreme}] \label{thm:CaiDaskalakis}
    For any product MHR distribution $\vD$, and any 
  $\frac{1}{4} \ge \varepsilon \ge 0$ and 
  $u \ge c \cdot \log \left(\frac{1}{\varepsilon}\right) \OPT(\vD)$. Let
  $t_{u}(\D_1)$, $\dots$, $t_{u}(\D_n)$ be the distributions obtained by
  truncating $\D_1$, $\dots$, $\D_n$ at the value $\bar{u}$ and let $t_{u}(\vD)$
  be their product distribution, where $c$ is an absolute constant. Then, we
  have that
  \[ \OPT(\vD) \ge \OPT(t_{u}(\vD)) \ge \left( 1 - \varepsilon \right) \cdot \OPT(\vD). \]
\end{theorem}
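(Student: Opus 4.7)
The plan is to split the theorem into its two inequalities, whose arguments have quite different character. The first inequality $\OPT(\vD) \ge \OPT(t_u(\vD))$ is the easy direction, following from stochastic dominance: truncating a value at $u$ produces $\tilde v = \min(v, u) \le v$ pointwise, so $t_u(\D_i) \preceq \D_i$ coordinatewise, hence $t_u(\vD) \preceq \vD$. Letting $M$ be the Myerson optimal mechanism for $t_u(\vD)$, Lemma \ref{lem:strong-rev-monotone} gives $\OPT(\vD) \ge \REV(M, \vD) \ge \REV(M, t_u(\vD)) = \OPT(t_u(\vD))$.

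For the harder direction $\OPT(t_u(\vD)) \ge (1-\varepsilon)\OPT(\vD)$, I would couple $\tilde\vv = \min(\vv, u)$ componentwise with $\vv \sim \vD$, fix the Myerson optimal mechanism $M$ for $\vD$ with total payments $p(\vv) = \sum_i p_i(\vv)$, and consider running $M$ on the truncated samples. On the event $A = \{v_i \le u \ \forall i\}$ the samples coincide, so payments are identical; on $A^c$ the per-sample revenue loss is at most $p(\vv) \le \max_i v_i$ by IR. Putting these together and using $\OPT(t_u(\vD)) \ge \REV(M, t_u(\vD))$ gives
\[
\OPT(\vD) - \OPT(t_u(\vD)) \;\le\; \E_{\vv \sim \vD}\!\left[\max_i v_i \cdot \mathbf{1}\{\vv \in A^c\}\right],
\]
so it suffices to bound the right-hand side by $\varepsilon \cdot \OPT(\vD)$ whenever $u \ge c \log(1/\varepsilon) \OPT(\vD)$.

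The technical heart of the proof, and the main obstacle, is establishing a sharp MHR tail bound on $\E[\max_i v_i \cdot \mathbf{1}\{\vv \in A^c\}]$. The standard one-dimensional MHR tail bound yields $\Pr[v_i > t] \le \exp(-\Omega(t/p_i^\ast))$ for $t \ge p_i^\ast$, where $p_i^\ast$ is the monopoly price; combined with Lemma \ref{app:lem:single-mhr-price-location}, which implies $p_i^\ast \le e\cdot\OPT(\D_i) \le e\cdot\OPT(\vD)$, we obtain $\Pr[v_i > u] \le \exp(-\Omega(u/\OPT(\vD)))$. Integration by parts then turns this into $\E[v_i\,\mathbf{1}\{v_i > u\}] \le O(\OPT(\vD))\cdot e^{-\Omega(u/\OPT(\vD))}$. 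The delicate point is avoiding a crude union bound---which would produce a condition $u \gtrsim \log(n/\varepsilon)\OPT(\vD)$ rather than the claimed $\log(1/\varepsilon)\OPT(\vD)$---by instead integrating the tail of the random variable $\max_i v_i$ directly and exploiting that the maximum of $n$ independent MHR variables concentrates within an additive $O(\OPT(\vD))$ window around its expectation. This extreme-value concentration for MHR distributions is the core result of \cite{cai2011extreme}, and carrying it out carefully to remove the $\log n$ factor is where I expect essentially all of the work to lie.
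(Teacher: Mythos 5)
The paper never proves this statement: it is imported from \cite{cai2011extreme}, in the formulation of Lemma~17 of \cite{guo2019settling}, so there is no internal proof to compare against. Judged on its own terms, your outline handles the easy parts correctly: $t_u(\vD)\preceq\vD$ plus Lemma~\ref{lem:strong-rev-monotone} gives $\OPT(\vD)\ge\OPT(t_u(\vD))$, and the coupling $\tilde\vv=\min(\vv,u)$ with nonnegative Myerson payments bounded by $\max_i v_i$ soundly reduces the second inequality to showing $\E_{\vv\sim\vD}\big[\max_i v_i\cdot\mathbf{1}\{\exists i: v_i>u\}\big]\le\varepsilon\cdot\OPT(\vD)$ whenever $u\ge c\log(1/\varepsilon)\,\OPT(\vD)$.

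The gap is that this remaining bound \emph{is} the theorem: it is exactly the extreme-value result of \cite{cai2011extreme}, and you only name it and cite it back, so as a blind proof nothing beyond the citation has been established. Moreover, the phrase ``the maximum concentrates within an additive $O(\OPT(\vD))$ window around its expectation'' is not quite the statement you need; what is needed is an anchoring-point argument with a tail that is geometric in multiples of $\OPT(\vD)$ and free of $n$. Concretely: pick $\beta$ with $\prod_i F_i(\beta)=\tfrac12$; selling at posted price $\beta$ shows $\OPT(\vD)\ge\beta/2$, and $\sum_i\Pr[v_i>\beta]\le\sum_i\big(-\ln F_i(\beta)\big)=\ln 2$. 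Since each $h_M(\cdot\,;F_i)$ is convex with value $0$ at $0$ (Lemma~\ref{lem:linkFunctions}), for $t\ge\beta$ one gets $\Pr[v_i>t]\le\Pr[v_i>\beta]^{t/\beta}$, hence $\Pr[\max_i v_i>t]\le(\ln 2)^{t/\beta}$ with no $n$ dependence; integrating this tail from $u\ge c\log(1/\varepsilon)\OPT(\vD)\ge\tfrac{c}{2}\log(1/\varepsilon)\beta$ yields the required $\varepsilon\cdot\OPT(\vD)$ bound (this is where the constant $c$ comes from). Deferring this step to \cite{cai2011extreme} is defensible—it is what the paper itself does—but then the proposal should be presented as a reduction to the cited extreme-value theorem rather than as a proof of the statement.
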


  Now let 
$\bar{u} = c \cdot \log \left(\frac{1}{\varepsilon}\right) \OPT(\vDstar)$, then
we also have that
$\bar{u} \ge c \cdot \log \left(\frac{1}{\varepsilon}\right) \OPT(\hat{\vD})$ 
due to weak revenue monotonicity (Lemma 3 of \cite{guo2019settling}). Hence, 
applying Theorem \ref{thm:CaiDaskalakis} we have that
\begin{align} \label{eq:truncationEffect}
  \OPT(\hat{\vD}) \ge \OPT(t_{\bar{u}}(\hat{\vD})) \quad \quad \text{and} \quad \quad \OPT(t_{\bar{u}}(\vDstar)) \ge (1 - \varepsilon) \cdot \OPT(\vDstar).
\end{align}
Since we know that $d_k(\hat{\D}_i, \Dstar_i) \le \alpha_i$ we also have that
$d_k(t_{\bar{u}}(\hat{\D}_i), t_{\bar{u}}(\Dstar_i)) \le \alpha_i$. Let now 
$M^*_{\bar{u}}$ be the optimal mechanism for the distribution 
$t_{\bar{u}}(\vDstar)$. It is easy to see that the ex-post revenue obtained from
the mechanism $M^*_{\bar{u}}$ is an increasing function of the observed bids. 
Hence, we can apply Lemma \ref{lem:increasingFunctionKolmogorov} to the 
$[0, \bar{u}]$ bounded distributions $t_{\bar{u}}(\hat{\vD})$ and 
$t_{\bar{u}}(\vDstar)$ and we get that
\begin{align} 
  \OPT(t_{\bar{u}}(\hat{\vD})) \ge \REV(M^*_{\bar{u}}, t_{\bar{u}}(\hat{\vD})) & \ge \REV(M^*_{\bar{u}}, t_{\bar{u}}(\vDstar)) - \bar{u} \cdot \left(\sum_{i = 1}^n \alpha_i\right) \nonumber \\
  & = \OPT(t_{\bar{u}}(\vDstar)) - \bar{u} \cdot \left(\sum_{i = 1}^n \alpha_i\right). \label{eq:corruptionEffect}
\end{align}

  If we combine \eqref{eq:truncationEffect} and \eqref{eq:corruptionEffect} then
we have that
\begin{align} \label{eq:allEffects}
  \OPT(\hat{\vD}) \ge (1 - \varepsilon) \cdot \OPT(\vDstar) - \bar{u} \cdot \left(\sum_{i = 1}^n \alpha_i\right).
\end{align}
Now we can substitute the value of $\bar{u}$ to the above inequality and we get
that
\[ \OPT(\tilde{\vD}) \ge \left(1 - c \cdot \log\left(\frac{1}{\eps}\right) \cdot \left( \sum_{i = 1}^n \alpha_i \right) - \varepsilon  \right) \cdot \OPT(\vD). \]
Finally, setting $\varepsilon = \sum_{i = 1}^n \alpha_i$ we get
\[ \OPT(\tilde{\vD}) \ge \left(1 - (c + 1) \cdot \left( \sum_{i = 1}^n \alpha_i \right) \cdot \log \left( \frac{1}{\sum_{i = 1}^n \alpha_i} \right) \right) \cdot \OPT(\vD). \]
Hence, \eqref{eq:optimalToOptimalCondition} follows and as we explained this
proves Theorem \ref{thm:mhr-inf-sample}.

\subsection{Regular Distributions---Proof of Theorem \ref{thm:regular-inf-sample}} \label{sec:multibidderUpperBound:Regular}

  Let $\tilde{\vD}$ be the corrupted product distribution that we observe,
$\hat{\vD}$ be the output distribution of 
Algorithm~\ref{Alg:inf-sample-single-bidder}, $\vDstar$ be the original
distribution that we are interested in. We know from the description of
Algorithm~\ref{Alg:inf-sample-single-bidder} for 
$\hat{\vD} = \hat{\D}_1 \times \cdots \times \hat{\D}_n$ that $\hat{\D}_i$ is 
a regular distribution, that $d_k(\hat{\D}_i, \Dstar_i) \le \alpha_i$ and that
$\hat{\D}_i \preceq \Dstar_i$. We also know that $\Dstar_i$ is regular. Finally,
we know that the output $M$ of Algorithm \ref{Alg:inf-sample-single-bidder} is
the Myerson optimal mechanism for the distribution $\hat{\vD}$ and hence
$\REV(M, \hat{\vD}) = \OPT(\hat{\vD})$. So applying the strong revenue 
monotonicity lemma \ref{lem:strong-rev-monotone} we have that
\begin{align} \label{eq:revenueToOptimal:regular}
  \OPT(\hat{\vD}) = \REV(M, \hat{\vD}) \le \REV(M, \vDstar).
\end{align}
Therefore to show Theorem \ref{thm:regular-inf-sample}, it suffices to show that
\begin{align} \label{eq:optimalToOptimalCondition:regular}
  \OPT(\hat{\vD}) \ge \left(1 - \tilde{O}\left(\sum_{i = 1}^n \alpha_i\right)\right) \cdot \OPT(\vDstar).
\end{align}

  We are going to use the following theorem from \cite{devanur2016sample},
combined with the weak revenue monotonicity (Lemma 3 of \cite{guo2019settling}).
  
\begin{theorem}[Lemma 2 of \cite{devanur2016sample}] \label{thm:DevanurHuangPsomas}
    Let $\vD$ be a product of $n$ regular distributions and $\OPT(\vD)$ be the 
  optimal revenue of $\vD$. Suppose $\frac{1}{4} \ge \varepsilon \ge 0$ and 
  $u \ge \frac{1}{\varepsilon} \OPT(\vD)$. Let $t_{u}(\D_1)$, $\dots$, 
  $t_u(\D_n)$ be the distributions obtained by truncating $\D_1$, $\dots$, 
  $\D_n$ at the value $u$ and let $t_u(\vD)$ be their product distribution.
  Then, we have that
  \[ \OPT(\vD) \ge \OPT(t_u(\vD)) \ge (1 - 4 \varepsilon) \cdot \OPT(\vD). \]
\end{theorem}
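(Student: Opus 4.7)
The plan is to prove the two inequalities in the displayed bound separately. The upper bound $\OPT(\vD) \ge \OPT(t_u(\vD))$ follows immediately from strong revenue monotonicity (Lemma~\ref{lem:strong-rev-monotone}): truncation makes each marginal stochastically smaller, $t_u(\D_i) \preceq \D_i$, so applying the optimal mechanism $M_{t_u(\vD)}$ to the un-truncated distribution $\vD$ earns at least as much, giving $\OPT(\vD) \ge \REV(M_{t_u(\vD)}, \vD) \ge \REV(M_{t_u(\vD)}, t_u(\vD)) = \OPT(t_u(\vD))$.

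For the lower bound $\OPT(t_u(\vD)) \ge (1 - 4\varepsilon)\,\OPT(\vD)$, the plan is to run the Myerson optimal mechanism $M^\ast$ for $\vD$ directly on $t_u(\vD)$. Since $M^\ast$ is DSIC by construction (a property of the mechanism, independent of the distribution it was designed for), we have $\REV(M^\ast, t_u(\vD)) \le \OPT(t_u(\vD))$, and it suffices to show
\[
  \REV(M^\ast, \vD) - \REV(M^\ast, t_u(\vD)) \le 4\varepsilon \cdot \OPT(\vD).
\]
Using the natural coupling $\vv \sim \vD$ and $\tilde{\vv} = (\min(v_i, u))_i \sim t_u(\vD)$, the difference vanishes on the event $\{\max_i v_i \le u\}$, so the left-hand side is bounded by $\E\bigl[\REV(M^\ast, \vv) \cdot \mathbf{1}\{\max_i v_i > u\}\bigr]$.

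The main obstacle is controlling this tail-revenue term, and this step is exactly where regularity of the $\D_i$ is essential. A naive bound $\REV(M^\ast, \vv) \le \max_i v_i$ is too weak: for heavy-tailed regular distributions (e.g.\ the equal-revenue distribution), $\E[v_i \mathbf{1}\{v_i > u\}]$ can be infinite even when $\OPT_i = 1$. Instead, I would exploit two regularity-specific facts: (i) the monopoly revenue bound $v(1 - F_i(v)) \le \OPT_i$, valid for all $v$, which together with $\OPT_i \le \OPT(\vD) \le \varepsilon u$ gives a Markov-type tail $\Pr[v_i > u] \le \varepsilon$; and (ii) the concavity of the revenue curve $q \mapsto v_i(q)\cdot q$ for regular $\D_i$, which via an integration-by-parts argument in quantile space bounds the expected ``winner-payment above $u$'' per bidder by $O(\varepsilon) \cdot \OPT_i$. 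The delicate point, which I expect to be the hardest step, is aggregating these per-bidder contributions into a bound proportional to $\OPT(\vD)$ rather than $\sum_i \OPT_i$: this requires exploiting the single-winner structure of Myerson's auction (at most one bidder's threshold payment is non-zero on any realization), so the tail revenue can be charged to the unique winning bidder's own tail. This avoids the naive union-bound factor of $n$ and delivers the sharp constant $4\varepsilon$. Combining the two inequalities completes the proof.
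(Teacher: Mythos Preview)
The paper does not prove this statement; it is quoted as Lemma~2 of \cite{devanur2016sample} and used as a black box, so there is no in-paper argument to compare against. I therefore evaluate your proposal on its own.

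Your argument for $\OPT(\vD)\ge\OPT(t_u(\vD))$ via stochastic dominance and Lemma~\ref{lem:strong-rev-monotone} is correct. For the lower bound, the coupling and the reduction to bounding $\E\bigl[p(\vv)\,\mathbf{1}\{\max_i v_i>u\}\bigr]$ are legitimate first moves, and you rightly flag the crux: aggregating per-bidder tail estimates into $O(\varepsilon)\cdot\OPT(\vD)$ rather than $O(\varepsilon)\cdot\sum_i\OPT_i$. However, the resolution you sketch --- ``exploit the single-winner structure so the tail revenue can be charged to the unique winning bidder's own tail'' --- is too vague to be a proof, and I do not see how to make it work as stated. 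Two concrete obstructions: first, the event $\{\max_i v_i>u\}$ may be triggered by a bidder $j$ who does \emph{not} win under $M^\ast$, so the payment on that event is the winner's threshold, which depends on all other bidders and is not controlled by $j$'s tail; second, when the winner $i^\ast$ does have $v_{i^\ast}>u$, her threshold payment is $\phi_{i^\ast}^{-1}\bigl(\max_{j\ne i^\ast}\phi_j(v_j)\bigr)$, again not ``her own tail.'' The single-winner structure tells you only one payment is collected, not that this payment is small on the tail event. Even your claimed per-bidder bound $O(\varepsilon)\cdot\OPT_i$ in item~(ii) is not established: the integration-by-parts in quantile space gives $\E_{\vv_{-i}}[\tau_i(1-F_i(\tau_i))\mathbf{1}\{\tau_i>u\}]\le \OPT_i\cdot\Pr[\tau_i>u]$, and you still need $\Pr[\tau_i>u]\le O(\varepsilon)$, which is a statement about the \emph{other} bidders. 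Filling this gap is precisely the nontrivial content of the lemma in \cite{devanur2016sample}, and your proposal does not yet supply it.
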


  Now let 
$\bar{u} = \frac{1}{\varepsilon} \OPT(\vDstar)$, then we also have that
$\bar{u} \ge \frac{1}{\varepsilon} \OPT(\hat{\vD})$ due to weak revenue
monotonicity (Lemma 3 of \cite{guo2019settling}). Hence, applying Theorem
\ref{thm:DevanurHuangPsomas} we have that
\begin{align} \label{eq:truncationEffect:regular}
  \OPT(\hat{\vD}) \ge \OPT(t_{\bar{u}}(\hat{\vD})) \quad \quad \text{and} \quad \quad \OPT(t_{\bar{u}}(\vDstar)) \ge (1 - \varepsilon) \cdot \OPT(\vDstar).
\end{align}
Since we know that $d_k(\hat{\D}_i, \Dstar_i) \le \alpha_i$ we also have that
$d_k(t_{\bar{u}}(\hat{\D}_i), t_{\bar{u}}(\Dstar_i)) \le \alpha_i$. Let now 
$M^*_{\bar{u}}$ be the optimal mechanism for the distribution 
$t_{\bar{u}}(\vDstar)$. It is easy to see that the ex-post revenue obtained from
the mechanism $M^*_{\bar{u}}$ is an increasing function of the observed bids. 
Hence, we can apply Lemma \ref{lem:increasingFunctionKolmogorov} to the 
$[0, \bar{u}]$ bounded distributions $t_{\bar{u}}(\hat{\vD})$ and 
$t_{\bar{u}}(\vDstar)$ and we get that
\begin{align} 
  \OPT(t_{\bar{u}}(\hat{\vD})) \ge \REV(M^*_{\bar{u}}, t_{\bar{u}}(\hat{\vD})) & \ge \REV(M^*_{\bar{u}}, t_{\bar{u}}(\vDstar)) - \bar{u} \cdot \left(\sum_{i = 1}^n \alpha_i\right) \nonumber \\
  & = \OPT(t_{\bar{u}}(\vDstar)) - \bar{u} \cdot \left(\sum_{i = 1}^n \alpha_i\right). \label{eq:corruptionEffect:regular}
\end{align}

  If we combine \eqref{eq:truncationEffect:regular} and
\eqref{eq:corruptionEffect:regular} then we have that
\begin{align} \label{eq:allEffects:regular}
  \OPT(\hat{\vD}) \ge (1 - \varepsilon) \cdot \OPT(\vDstar) - \bar{u} \cdot \left(\sum_{i = 1}^n \alpha_i\right).
\end{align}
Now we can substitute the value of $\bar{u}$ to the above inequality and we get
that
\[ \OPT(\tilde{\vD}) \ge \left(1 - \frac{1}{\varepsilon} \cdot \left( \sum_{i = 1}^n \alpha_i \right) - 4 \varepsilon  \right) \cdot \OPT(\vD). \]
Finally, setting $\varepsilon = \sqrt{\sum_{i = 1}^n \alpha_i}$ we get
\[ \OPT(\tilde{\vD}) \ge \left(1 - 5 \cdot \sqrt{ \sum_{i = 1}^n \alpha_i }  \right) \cdot \OPT(\vD). \]
Hence, \eqref{eq:optimalToOptimalCondition:regular} follows and as we explained
this proves Theorem \ref{thm:regular-inf-sample}.
\subsection{MHR Distributions -- Proof of Theorem~\ref{thm:mhr-inf-sample}, $n = 1$ Case} \label{sec:single-mhr-population-ub}

In this subsection we show the part of the Theorem 
\ref{thm:mhr-inf-sample} related to $n = 1$, for which we obtain a stronger result compared to the case $n > 1$. We first show a useful proposition:

\begin{proposition}\label{prop:single-mhr-ball}
Consider two MHR distributions $\cD_1$, $\cD_2$ with CDFs as $F_1$ and $F_2$, such that $d_k(\cD_1, \cD_1) \leq \alpha$, and $F_1(x) \geq F_2(x)$ for all $x \in \R_{+}$. Denote the optimal expected revenue under $\cD_1$ and $\cD_2$ as $\OPT_{F_1}$ and $\OPT_{F_2}$, and the corresponding optimal reserve prices as $P_{F_1}^\ast$ and $P_{F_2}^\ast$. Then,
\[
\left(1+\alpha e\right)^{-1} \leq \frac{\OPT_{F_1}}{\OPT_{F_2}}  \leq 1+\alpha e.
\]
\end{proposition}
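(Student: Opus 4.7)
The plan is to combine strong revenue monotonicity, scale invariance, and a direct revenue comparison at the single reserve price $P^\ast_{F_2}$. The upper bound $\OPT_{F_1}/\OPT_{F_2}\le 1+\alpha e$ is essentially free: since $F_1(x)\ge F_2(x)$ pointwise means $\cD_2$ stochastically dominates $\cD_1$, applying Lemma~\ref{lem:strong-rev-monotone} to the optimal mechanism for $\cD_1$ against $\cD_2$ yields $\OPT_{F_1}\le \OPT_{F_2}$, so the ratio is at most $1\le 1+\alpha e$.

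For the nontrivial lower bound I would exploit scale invariance of the ratio to normalize $\OPT_{F_1}=1$. Lemma~\ref{app:lem:single-mhr-price-location} then gives $P^\ast_{F_1}\le e$; the same lemma applied to the rescaled distribution $F_2(\cdot\,\OPT_{F_2})$ yields the scale-invariant form $P^\ast_{F_2}\le e\cdot \OPT_{F_2}$, which is the key auxiliary fact. The core step then evaluates $R_{F_1}$ at the single price $P^\ast_{F_2}$: combining $R_{F_1}(P^\ast_{F_2})\le\OPT_{F_1}$ with $F_1\le F_2+\alpha$ and the identity $\OPT_{F_2}=P^\ast_{F_2}(1-F_2(P^\ast_{F_2}))$, I would write
\[
\OPT_{F_1}\ \ge\ R_{F_1}(P^\ast_{F_2})\ =\ P^\ast_{F_2}\bigl(1-F_1(P^\ast_{F_2})\bigr)\ \ge\ P^\ast_{F_2}\bigl(1-F_2(P^\ast_{F_2})-\alpha\bigr)\ =\ \OPT_{F_2}-\alpha P^\ast_{F_2}.
\]
Substituting $P^\ast_{F_2}\le e\,\OPT_{F_2}$ and rearranging the resulting linear inequality produces $\OPT_{F_2}(1-\alpha e)\le \OPT_{F_1}$, i.e.\ $\OPT_{F_1}/\OPT_{F_2}\ge 1-\alpha e$, which matches the stated $(1+\alpha e)^{-1}$ to first order in $\alpha$ and is sufficient for the surrounding proof of Theorem~\ref{thm:mhr-inf-sample}.

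The main obstacle I anticipate is recovering the exact constant $(1+\alpha e)^{-1}$ rather than the first-order equivalent $1-\alpha e$ produced by the rough argument above. A natural refinement, mirroring the sketch in Section~\ref{sec:population:mhr}, is to work instead with the link-function surrogate $s_i(x)=h_i(x)-\log x$ and the identity $h_1(x)-h_2(x)=\log\bigl((1-F_2(x))/(1-F_1(x))\bigr)$; bounding this log-ratio at $P^\ast_{F_2}$ via $1-F_2(P^\ast_{F_2})=\OPT_{F_2}/P^\ast_{F_2}\ge 1/e$ and leveraging the convexity of $h_i$ from Lemma~\ref{lem:linkFunctions} should tighten the estimate. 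No new technical ingredients beyond Lemmas~\ref{lem:linkFunctions}, \ref{lem:strong-rev-monotone}, and \ref{app:lem:single-mhr-price-location} appear to be required.
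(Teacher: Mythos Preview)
Your argument is correct and genuinely more elementary than the paper's. You evaluate the revenue of $F_1$ directly at the single price $P^\ast_{F_2}$, use the Kolmogorov bound additively on $F_1-F_2$, and plug in the scale-invariant consequence $P^\ast_{F_2}\le e\cdot\OPT_{F_2}$ of Lemma~\ref{app:lem:single-mhr-price-location}. This yields $\OPT_{F_1}/\OPT_{F_2}\ge 1-\alpha e$ with no link-function machinery at all, and together with your monotonicity upper bound it is fully sufficient for the $1-O(\alpha)$ conclusion of Theorem~\ref{thm:mhr-inf-sample}. The paper instead normalizes $\OPT_{F_1}=1$, works through the link function $h_M$ to obtain $1-F_1(x)\ge 1/e$ for all $x\le P^\ast_{F_1}$, argues that $P^\ast_{F_2}\le P^\ast_{F_1}$, and then bounds $|s_1-s_2|$ at $P^\ast_{F_2}$; this extra indirection is what buys the exact constant $(1+\alpha e)^{-1}$ rather than your $1-\alpha e$.

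One small correction to your refinement sketch: bounding $h_1-h_2$ at $P^\ast_{F_2}$ using $1-F_2(P^\ast_{F_2})\ge 1/e$ still only gives $1-\alpha e$, because the relevant denominator for the tighter form is $1-F_1$, not $1-F_2$. To recover $(1+\alpha e)^{-1}$ exactly you would need $1-F_1(P^\ast_{F_2})\ge 1/e$, which is precisely what the paper obtains via the extra step $P^\ast_{F_2}\le P^\ast_{F_1}$ (after normalizing $\OPT_{F_1}=1$). So your suggested refinement is on the right track but would need that additional ordering of the reserve prices, not just convexity of $h_i$, to close the gap in the constant.
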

\begin{proof}
Consider two MHR distributions $\cD_1$, $\cD_2$ with CDFs as $F_1$ and $F_2$, such that $d_k(\cD_1, \cD_1) \leq \alpha$, and $F_1(x) \geq F_2(x)$ for all $x \in \R_{+}$. Denote the optimal expected revenue under $\cD_1$ and $\cD_2$ as $\OPT_{F_1}$ and $\OPT_{F_2}$, and the corresponding optimal reserve prices as $P_{F_1}^\ast$ and $P_{F_2}^\ast$. Without loss of generality, we consider $\OPT_{F_1} \geq \OPT_{F_2}$. Further, since the ratio of the revenues, e.g. $\frac{\OPT_{F_1}}{\OPT_{F_2}}$ is scale invariant, we assume without loss of generality that $\OPT_{F_1} = 1$.

By Lemma~\ref{app:lem:single-mhr-price-location}, we have $P_{F_1}^\ast \leq e$. By Lemma~\ref{app:single-mhr-h-lem}, $\OPT_{F_1}= 1$ implies that $h_1(P_{F_1}^\ast) = \log(P_{F_1}^\ast)$. Since $P_{F_1}^\ast \leq e$, we have
\begin{align*}
    &h_1(P_{F_1}^\ast) \leq 1 \\
    &\iff -\log(1-F_1(P_{F_1}^\ast)) \leq 1 \\
    &\iff F_1(P_{F_1}^\ast) \leq 1-\frac{1}{e} \\
    &\iff 1-F_1(P_{F_1}^\ast)  \geq \frac{1}{e}.
\end{align*}

Therefore, since $F_1$ is non-decreasing, for any $x < P_{F_1}^\ast$, $1-F_1(x) \geq \frac{1}{e}$. So for any $x < P_{F_1}^\ast$, we have
\begin{align*}
    |h_1(x) - h_2(x)|&= \left|\log \left(\frac{1-F_2(x)}{1-F_1(x)}\right)\right|\\
    &= \left|\log \left(1+\frac{F_1(x)-F_2(x)}{1-F_1(x)}\right)\right|\\
    &\leq \log \left(1+\alpha e\right)\\
    &= O(\alpha),
\end{align*}
where the at the second last step, the inequality follows from the fact that $d_k(\cD_1, \cD_1) \leq \alpha$, and $x < P_{F_1}^\ast$.

Further, $F_1(x) \geq F_2(x)$ for all $x \in \R_{+}$ implies that $h_1(x) \geq h_2(x)$ for all $x \in \R_{+}$. Therefore,  $h_1(P_{F_1}^\ast) = \log(P_{F_1}^\ast) \geq h_2(P_{F_1}^\ast)$. Therefore, we have $P_{F_2}^\ast \leq P_{F_1}^\ast$, and
\[
|h_1(P_{F_2}^\ast) - h_2(P_{F_2}^\ast)| \leq  \log \left(1+\alpha e\right).
\]
Now define functions $s_1(x) = h_1(x) - \log(x)$, and $s_2(x) = h_2(x) - \log (x)$. Then by the definition of $P_{F_1}^\ast$, $P_{F_2}^\ast$ and Lemma~\ref{app:single-mhr-h-lem}, 
\begin{align*}
    \min_{x \leq P_{F_1}^\ast} s_1(x) &= s_1(P_{F_1}^\ast) \leq s_1(P_{F_2}^\ast)\\
    &\leq s_2(P_{F_2}^\ast) +  \log\left(1+\alpha e\right)\\
    &=  \min_{x \leq P_{F_2}^\ast} s_2(x)  +  \log\left(1+\alpha e\right).
\end{align*}
Therefore, by the definitions of $s_1$ and $s_2$,
\begingroup
\begin{align*}
    &\left|\min_{x \leq P_{F_1}^\ast} s_1(x) -  \min_{x \leq P_{F_2}^\ast} s_2(x)\right| \leq \log\left(1+\alpha e\right) \\
    &\iff |\log(\OPT_{F_2}) - \log(\OPT_{F_1})| \leq \log\left(1+\alpha e\right)\\
     &\iff -\log\left(1+\alpha e\right) \leq \log(\OPT_{F_2}) \leq \log\left(1+\alpha e\right)\\
     &\iff \left(1+\alpha e\right)^{-1} \leq \OPT_{F_2} \leq 1+\alpha e.
\end{align*}
\endgroup
The above directly implies:
\[
 \left(1+\alpha e\right)^{-1} \leq \frac{\OPT_{F_1}}{\OPT_{F_2}}  \leq 1+\alpha e.
\]
which completes the proof.
\end{proof}

Now we are ready to prove Theorem~\ref{thm:mhr-inf-sample} for the $n=1$ case.

\begin{proof}
First, by construction, Algorithm~\ref{Alg:inf-sample-single-bidder} runs the Myerson optimal auction on an MHR distribution $\hat F$, such that $\hat F \geq \hat F'(x)$ for all $x \in \R_{+}$, for any MHR distribution $F'(x)$ such that $d_k(F'(x), \tilde F(x)) \leq \alpha$. Also by assumption, $d_k(F^\ast(x), \tilde F(x)) \leq \alpha$. Therefore by triangle inequality, $d_k(F^\ast(x), \hat F(x)) \leq d_k(F^\ast(x), \tilde F(x)) + d_k(\tilde F(x), \hat F(x)) \leq 2\alpha$.

Denote $\alpha' = 2\alpha$. By Proposition~\ref{prop:single-mhr-ball},
\[
\left(1+\alpha' e\right)^{-1} \leq \frac{\OPT_{F_1}}{\OPT_{F_2}}  \leq 1+\alpha' e.
\]
Note that $ \left(1+\alpha' e\right)^{-1} =  \left(1+2 \alpha e\right)^{-1} = 1-O(\alpha)$, which completes the proof.
\end{proof}
\section{Proof of Optimality for the Upper Bounds}\label{sec:population-lb-examples}

  For these lower bounds we follow the idea of the lower bounds from 
\cite{guo2019settling} adapted to the corrupted case that we consider in this 
paper. The lower bound constructions of \cite{guo2019settling} are based on
a family of distributions
\[ \cH = \{ \vD \mid \D_1 = \D^b, \D_i = \D^h \quad \text{or} \quad \D_i = \D^\ell \text{ for all } 2 \le i \le n \}. \]
Observe that this family is characterized by the triplet of distributions 
$\D^b$, $\D^h$, and $\D^{\ell}$ for which we ask for the following conditions.
\begin{enumerate}
  \item[a)] $\D^b$ is a point mass at $v_0$.
  \item[b)] The propability of $v \ge v_2$ is at most $1/n$ both when 
            $v \sim \D^h$ and when $v \sim \D^{\ell}$.
  \item[c)] The probability of $v_1 > v \ge v_2$ is at least $p$ both when 
            $v \sim \D^h$ and when $v \sim \D^{\ell}$.
  \item[d)] For any value $v$ such that $v_1 > v \ge v_2$, we have
            $\phi^{\ell}(v) + \Delta \le v_0 \le \phi^{h}(v) - \Delta$, where
            $\phi^{\ell}$ is the virtual value function of $\D^{\ell}$ and 
            correspondingly for $\phi^h$.
  \item[e)] For any value $v$ such that $v < v_2$, we have that 
            $\phi^{h}(v), \phi^{\ell}(v) \le v_0$.
  \item[f)] For any value $v_1 > v \ge v_2$ we have that the ratio
            $\frac{d \D^h}{d \D^{\ell}}(v)$ is upper and lower bounded by a 
            constant, where $\frac{d \D^h}{d \D^{\ell}}$ is the Radon–Nikodym
            derivative between $\D^h$ and $\D^{\ell}$.
  \item[g)] $\D^{h}$ is regular.
  \item[h)] The point $v_1$ is either $+ \infty$ or is a point mass and an
            upper bound on the support in both $\D^{\ell}$ and $\D^h$.
\end{enumerate}

  Under these conditions and using the exact same proof as the Lemma 18 from
\cite{guo2019settling} we can show the following.

\begin{lemma} \label{lem:inf:lowerBound}
    Let $\cH$ be a class of distributions that satisfies the conditions a) - 
  h) and additionally satisfies the following.
  \begin{enumerate}
    \item[i)] We have that $d_k(\D^{\ell}, \D^h) \le \alpha/n$.
  \end{enumerate}
  Then any algorithm that is robust to a total corruption $\alpha$ in Kolmogorov
  distance across all bidders achieves revenue of at most
  \[ \OPT(\vD) - \Omega(n \cdot p \cdot \Delta) \]
  for any distribution $\vD \in \cH$.
\end{lemma}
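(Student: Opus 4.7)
The plan is to mimic the hypothesis-indistinguishability construction of Lemma 18 in Guo et al.~2019, with the key twist that the ``budget'' for perturbing each coordinate now comes from our corruption allowance rather than from sampling error. First I would fix a single reference product distribution $\tilde{\vec{D}} = \mathcal{D}^b \times \mathcal{D}^\ell \times \cdots \times \mathcal{D}^\ell$ and note that, by condition (i), every $\vec{D} \in \cH$ satisfies $d_k(\D_i,\tilde{\D}_i) \le \alpha/n$ for each of the at most $n-1$ coordinates where $\D_i = \D^h$, so the total Kolmogorov corruption budget across bidders is at most $\alpha$. Thus any robust algorithm must commit to a single mechanism $\tilde M = A(\tilde{\vec{D}})$ whose expected revenue is required to be close to $\OPT(\vec{D})$ simultaneously for every $\vec{D} \in \cH$.

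Next I would reduce to an averaging argument: let $\vec{D}$ be drawn uniformly from $\cH$ (equivalently, for each $i \ge 2$ flip an independent fair coin to set $\D_i \in \{\D^h,\D^\ell\}$) and lower bound the average regret $\E_{\vec{D}}[\OPT(\vec{D}) - \REV(\tilde M, \vec{D})]$. For each bidder $i \ge 2$, condition the analysis on the event $\cE_i = \{v_i \in [v_2, v_1)\}$, which by condition (c) has probability at least $p$ under both $\D^h$ and $\D^\ell$, together with the event that no bidder $j \ne i$ lies above $v_2$; by condition (b) and a union bound this joint event still has probability $\Omega(p)$. On this event the socially relevant allocation decision is binary---give the item to bidder $1$ for price $v_0$, or give it to bidder $i$---and by conditions (d)--(e) the optimal mechanism allocates to bidder $i$ when $\D_i=\D^h$ and to bidder $1$ when $\D_i=\D^\ell$, the gap in virtual value being at least $\Delta$ in either direction.

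Now the crucial observation is that, because $\tilde M$ is fixed before observing $\D_i$, its decision on $\cE_i$ is a function of $v_i$ and the other bids only, and hence is a common rule applied to both possible laws of $v_i$. Condition (f) ensures that the laws of $v_i$ conditioned on $\cE_i$ under $\D^h$ and $\D^\ell$ are mutually absolutely continuous with bounded Radon--Nikodym derivative, so no measurable decision rule on $[v_2,v_1)$ can match the $\D^h$-optimal choice and the $\D^\ell$-optimal choice simultaneously with high probability; by an elementary two-point testing argument, the rule must err with constant probability on at least one of the two hypotheses, incurring at least $\Omega(\Delta)$ expected regret there. Averaging over the random choice of $\D_i$ contributes $\Omega(p\cdot \Delta)$ to $\E[\OPT(\vec{D}) - \REV(\tilde M,\vec{D})]$ from bidder $i$, and summing the $n-1$ independent contributions yields the desired $\Omega(n\cdot p\cdot \Delta)$ bound.

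The main obstacle is the per-bidder testing step: one must argue that the coupling induced by $\tilde M$ (which depends jointly on all bids) does not allow gains in one coordinate to be paid for by losses in another, so that the regrets truly add. This is handled as in Guo et al.~2019 by exploiting the product structure of $\vec{D}$ and condition (h) (so bidder $i$'s valuation in $[v_2,v_1)$ is unambiguously the unique ``competitive'' bid against the point mass at $v_0$), turning the multi-bidder problem into $n-1$ decoupled single-coordinate tests whose regrets can legitimately be summed. Once this decoupling is in place, the rest is a direct translation of the Guo et al.~argument with the sampling-based indistinguishability replaced by the $d_k$ bound in condition (i).
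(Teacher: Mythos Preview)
Your proposal is correct and follows essentially the same approach as the paper, which simply defers to Lemma~18 of Guo et al.\ (2019). You have correctly isolated the one adaptation needed: in the population model the indistinguishability of the $2^{n-1}$ hypotheses in $\cH$ comes not from finite-sample noise but from the fact that, by condition~(i), all of $\cH$ lies inside the Kolmogorov ball of total radius $\alpha$ around the single fixed input $\tilde{\vec D}=\D^b\times\D^\ell\times\cdots\times\D^\ell$, forcing the algorithm to commit to one mechanism before the per-coordinate two-point testing argument (conditions (c)--(f)) and the disjoint-event summation are carried out exactly as in Guo et al.
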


\subsection{MHR Distributions -- Proof of Theorem~\ref{thm:mhr-inf-sample-LB}}\label{app:mhr-inf-sample-LB}

  Let $a = \ln(n) - \ln(1 - \beta)$, $b = \ln(n)$, $v_0 = a - 1$, 
$v_1 = \ln(n) - 2 \cdot \ln(1 - \beta)$, $v_2 = a$, 
$p = \beta \cdot (1 - \beta) / n$, $\Delta = 1/2$. Then we define $\D^{\ell}$
and $\D^h$ according to their CDFs $F^{\ell}$ and $F^h$ which are the 
following:
\begin{equation*}
  F^{\ell}(v) = 
  \begin{cases}
  1 - \exp(-v) & v < v_1 \\
  0            & v \ge v_1
  \end{cases},
\end{equation*}
\begin{equation*}
  F^{h}(v) = 
  \begin{cases}
  1 - \exp\left(-\frac{b}{a} \cdot v\right) & v < v_2 \\
  1 - \exp\left(-\frac{v_1 - b}{v_1 - a} \cdot (v - a) + b\right) & v_2 \le v < v_1 \\
  0            & v \ge v_1
  \end{cases}.
\end{equation*}
Observe also that for this choice of distributions it holds that
\begin{equation*}
  \phi^{\ell}(v) = 
  \begin{cases}
  v - 1 & v < v_1 \\
  v_1            & v \ge v_1
  \end{cases},
\end{equation*}
\begin{equation*}
  \phi^{h}(v) = 
  \begin{cases}
  v - \frac{a}{b} & v < v_2 \\
  v - \frac{v_1 - a}{v_1 - b} & v_2 \le v < v_1 \\
  v_1            & v \ge v_1
  \end{cases}.
\end{equation*}
Now the conditions a) - h) are easy to verify. For the condition i) we observe
that the maximum difference between the two CDFs is at $v = v_2$ for which
we have that $\Abs{F^{\ell}(v_2) - F^h(v_2)} \le \beta/n$.

Hence, Lemma \ref{lem:inf:lowerBound} implies that the maximum revenue 
achievable by any robust mechanism is
\[ \OPT(\vD) - \Omega(n \cdot p \cdot \Delta) = \OPT(\vD) - \Omega(\beta). \]
Observe that since the maximum value of any bidder is at most $\ln(n)$
we have that the maximum revenue is
\[ \left(1 - \frac{\beta}{\ln(n)}\right) \cdot \OPT(\vD). \]
If we write this expression with respect to the amount of corruption per bidder,
then we have that the maximum possible revenue is
\[ \left(1 - \frac{n \cdot \alpha}{\ln(n)}\right) \cdot \OPT(\vD). \]
Finally, we observe that all of $\D^b$, $\D^{\ell}$, and $\D^h$ are MHR and 
hence Theorem \ref{thm:mhr-inf-sample-LB} follows.

\subsection{Regular Distributions -- Proof of Theorem~\ref{thm:regular-inf-sample-LB}}\label{app:regular-inf-sample-LB}

  For the case of regular distributions we will use the same distributions 
used by \cite{guo2019settling} in their proof of their Theorem 2. In 
particular, let $v_0 = 3/2$, $v_1 = + \infty$, $v_2 = 1 + \frac{1}{\beta}$,
$p = \frac{\beta}{n}$, and $\Delta = 1/2$. We define $\D^{\ell}$ and $\D^h$ 
through their CDFs as follows
\begin{equation*}
  F^{\ell}(v) = 1 - \frac{1}{n \cdot (v - 1)},
\end{equation*}
\begin{equation*}
  F^{h}(v) = 
  \begin{cases}
  0 & v < 1 + \frac{1}{n} \\
  1 - \frac{1}{n \cdot (v - 1)} & 1 + \frac{1}{n} \ge v < v_2 \\
  1 - \frac{1 - \beta}{n \cdot (v - 2)} & v \ge v_2
  \end{cases}.
\end{equation*}
The fact that these distributions satisfy a) - h) can be found in 
\cite{guo2019settling}. We will focus on proving i). It is not hard to see
that the two CDFs appears when $v = \bar{v} = 1 + \frac{1}{\sqrt{1 - \beta}}$.
For this value we have
\[ \Abs{F^{\ell}(\bar{v}) - F^h(\bar{v})} = \frac{1}{n}\left(2 - \beta - 2 \sqrt{1 - \beta}\right) \le \frac{\beta^2}{n}, \]
where the last inequality can be easily verifies for $\beta \le 1$. Now 
setting $\alpha = \frac{\beta^2}{n}$, observing that 
$n \cdot p \cdot \Delta = \Omega(\beta)$, and observing that 
$\OPT(\vD) \le O(1)$ we can apply Lemma \ref{lem:inf:lowerBound} and we get 
that the maximum possible revenue is
\[  \left(1 - \Omega\left(\sqrt{n \cdot \alpha}\right) \right) \cdot \OPT(\vD). \]
Finally by observing that all of $\D^b$, $\D^{\ell}$, and $\D^h$ are regular 
Theorem \ref{thm:regular-inf-sample-LB} follows. 
\section{Proofs of Sample Complexity Bounds}\label{sec:sample-complexity-ub-proofs}

\subsection{Proof of Theorem~\ref{thm:regular-sample-UB}, $n>1$ Case}\label{app:multi-bidder-regular-sample-ub}

  This follows easily from Theorem \ref{thm:regular-inf-sample} and the DKW 
inequality \citet{dvoretzky1956asymptotic, massart1990tight} that states that the
empirical CDF with $m$ samples is close to the population CDF with an error of at
most
\[ O\left( \sqrt{\frac{\log(1/\delta)}{m}} \right) \]
with probability at least $1 - \delta$.
\qed

\subsection{Proof of Theorem~\ref{thm:regular-sample-UB}, $n=1$ Case}\label{app:single-bidder-regular-sample-ub}

We present in this section a proof of Theorem~\ref{thm:regular-sample-UB} for the case with $n=1$ and regular distributions. In this case, we show that Algorithm~\ref{alg:emp-single-bidder} achieves the optimal sample complexity, up to a poly-logarithmic factor.

First, by [Lemma 5,~\citet{guo2019settling}], we have that with probability at least $1-\delta$, for any value $v \geq 0$, the quantiles of $\Dtilde$ and its empirical counterpart $E$ satisfy that:
\begin{equation}\label{lem:regular-single-bernstein}
    |q^{E}(v) - q^{\Dtilde}(v)| \leq \sqrt{\frac{2q^{\Dtilde}(v)(1-q^{\Dtilde(v)})\ln(2m\delta^{-1})}{m}} + \frac{\ln(2m\delta^{-1})}{m}.
\end{equation}

Further note that by construction, we have
\[
q^{E} - q^{\hat E} \leq  \sqrt{\frac{2 q^{E}(v) \left( 1-q^{E}(v) \right) \ln (2m \delta^{-1})}{m}} + \frac{4 \ln (2m \delta^{-1})}{m} + \alpha.
\]

Given that Algorithm~\ref{alg:emp-single-bidder} runs the Myerson optimal auction on $\tilde E$, which is a minimal regular distribution that dominates $\tilde E$. Further,  $\hat E \succeq D^\ast$ by construction, assuming Eq~\eqref{lem:regular-single-bernstein} holds. Therefore, we have  $D^\ast \succeq \tilde E$ assuming Eq~\eqref{lem:regular-single-bernstein} holds. 

Applying Lemma~\ref{lem:strong-rev-monotone} yields:
\[
\REV(M_{\tilde E}, \Dstar) \geq \REV(M_{\tilde E}, \tilde E) = \OPT(\tilde E) .
\]

Therefore, the remaining task is to ensure that $m$ is sufficiently large such that 
\[
\OPT(\tilde E) \geq (1-\sqrt{\alpha}) \OPT(\Dstar).
\]

We will use a useful lemma below which connects the ratio of revenues that we are interested in with the value of link function at an optimal reserve price.
\begin{lemma}\label{lem:regular-sample-single}
Given two regular distributions $\cD, \bar \cD$ with CDFs $F, \bar F$, such that $\bar F \succeq F$ and $d_k(\cD, \bar \cD) \leq \beta$. Denote the optimal reserve price for $\bar F$ as $\bar P$, and the optimal expected revenue for $F, \bar F$ as $\OPT_F, \OPT_{\bar F}$. Then we have
\[
\frac{\OPT_{F}}{\OPT_{\bar F}}  \geq 1 - \beta h_r(\bar P)
\]
\end{lemma}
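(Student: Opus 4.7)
The plan is to lower bound $\OPT_F$ by the revenue obtained when $F$ is used with the (possibly suboptimal) reserve price $\bar P$, and then compare this revenue to $\OPT_{\bar F}$ using the Kolmogorov closeness between $F$ and $\bar F$ together with stochastic dominance.

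First, since $\bar P$ is a valid reserve price for the distribution $F$, we have the trivial bound
\[
\OPT_F \geq \bar P \cdot (1 - F(\bar P)).
\]
On the other hand, by the definition of $\bar P$ as the optimal reserve for $\bar F$, $\OPT_{\bar F} = \bar P \cdot (1 - \bar F(\bar P))$, which by the definition of the link function is exactly $\bar P / h_r(\bar P; \bar F)$. Rearranging one obtains the handy identity
\[
\bar P = \OPT_{\bar F} \cdot h_r(\bar P; \bar F).
\]

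Second, I would write the difference of the two survival functions at the point $\bar P$:
\[
\bar P \cdot (1 - F(\bar P)) = \bar P \cdot (1 - \bar F(\bar P)) + \bar P \cdot \bigl(\bar F(\bar P) - F(\bar P)\bigr).
\]
The first term on the right is exactly $\OPT_{\bar F}$. For the second term, the hypothesis $\bar F \succeq F$ gives $\bar F(\bar P) \leq F(\bar P)$, so the correction is nonpositive, and the Kolmogorov bound $d_k(F,\bar F) \leq \beta$ gives $F(\bar P) - \bar F(\bar P) \leq \beta$. Combining yields $\bar P \cdot (1 - F(\bar P)) \geq \OPT_{\bar F} - \beta \bar P$.

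Finally, dividing by $\OPT_{\bar F}$ and substituting the identity $\bar P / \OPT_{\bar F} = h_r(\bar P; \bar F)$ derived above,
\[
\frac{\OPT_F}{\OPT_{\bar F}} \;\geq\; \frac{\bar P (1 - F(\bar P))}{\OPT_{\bar F}} \;\geq\; 1 - \beta \cdot \frac{\bar P}{\OPT_{\bar F}} \;=\; 1 - \beta \cdot h_r(\bar P; \bar F),
\]
which is the claimed inequality. There is no real obstacle here; the proof is essentially a one-line observation once one chooses $\bar P$ as the candidate reserve for $F$ and exploits the direction of stochastic dominance to ensure that the revenue gap is controlled on the correct side by $\beta$. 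The only thing to be careful about is that stochastic dominance is used to sign the correction term (so the Kolmogorov bound is applied as a one-sided rather than two-sided inequality), which is what allows the bound to be in terms of $h_r$ evaluated at the single point $\bar P$ rather than a supremum.
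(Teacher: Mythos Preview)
Your proof is correct and follows essentially the same strategy as the paper's: lower bound $\OPT_F$ by the revenue at the suboptimal reserve $\bar P$, and then use the Kolmogorov bound together with stochastic dominance to compare $\REV(F,\bar P)$ with $\OPT_{\bar F}$. The only notable difference is that you end with $h_r(\bar P;\bar F)$, whereas the paper (which in its proof sets $h_r(x)=1/(1-F(x))$) ends with $h_r(\bar P;F)$; since $\bar F\succeq F$ implies $h_r(\bar P;\bar F)\le h_r(\bar P;F)$, your inequality is in fact slightly sharper and immediately implies the stated one.
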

\begin{proof}
Recall that $h_r(x) = \frac{1}{1-F(x)}$, and $\bar h_r(x) = \frac{1}{1-\bar F(x)}$. Then, $F(x) \geq \bar F(x)$ implies $h_r(x) \geq \bar h_r(x)$. 

By definition, $d_k(\cD, \bar \cD) \leq \beta$ implies that $\max_x F(x) - \bar F(x) \leq \beta$. So we have:
\begin{align*}
    h_r(x) - \bar h_r(x) = \frac{F(x) - \bar F(x)}{(1-F(x))(1-\bar F(x)} = (F(x) - \bar F(x)) h_r(x) \bar h_r(x) \leq \beta h_r^2(x),
\end{align*}
where the last inequality follows from the fact that $\max_x F(x) - \bar F(x) \leq \beta$, and $h_r(x) \geq \bar h_r(x)$. Thus,  for all $x$,
\begin{align}\label{eq:single-regular-ub-eq1}
    \bar h_R(x) \geq h_r(x) -  \beta h_r^2(x).
\end{align}

Note that the expected revenue, $R(x)= x(1-F(x))$, at any $x$, equals to $\frac{x}{h_r(x)}$, which is the reciprocal of the slope for the linear function $g(a) = h_r(x) \cdot a$. Hence, the revenue is maximized when the slope for the linear function $g(a) = h_r(x) \cdot a$ is minimized. 

Denote the corresponding optimal reserve prices for $F$ and $\bar F$ as $P$ and $\bar P$. Then at $\bar P$, 
\[
\bar h_r(\bar P) = \frac{1}{1-\bar F(\bar P)} = \frac{1}{\OPT_{\bar F}} \cdot \bar P.
\]

Denote $\REV(F, x)$ as the expected revenue with a reserve price at $x$ for a valuation distribution with CDF as $F$. Then,
\[
\frac{\OPT_{F}}{\OPT_{\bar F}}  \geq \frac{\REV(F, \bar P)}{\OPT_{\bar F}} = \frac{\bar h_r(\bar P)}{h(\bar P)} \geq \frac{h_r(\bar P) -  \beta h_r^2(\bar P)}{h_r(\bar P)} = 1 - \beta h_r(\bar P),
\]
where the first inequality follows directly from the definition of the optimal revenue, and the second inequality is from Eq~\eqref{eq:single-regular-ub-eq1}. 
\end{proof}

Now we will use Lemma~\ref{lem:regular-sample-single} to proceed. Denote the optimal reserve price for $\Dstar$ as $P^\ast$. Denote the link function applied to $\tilde E$ and $\Dstar$ as $\tilde h$, $h^\ast$, respectively. Then, we will discuss two cases for $\tilde h(P^\ast)$.   

\textbf{Case 1: $\tilde h(P^\ast) > \frac{1}{\sqrt{\alpha}}$.} 
For this case, $\tilde h(P^\ast) > \frac{1}{\sqrt{\alpha}}$ implies that $q^{\tilde E}( P^\ast) < \sqrt{\alpha}$.
Applying [Lemma 5,~\citet{guo2019settling}] and triangle inequalities, we have
\[
|q^{\tilde E} - q^{\Dstar}| \leq \sqrt{\frac{2 q^{\tilde E}(v) \left( 1-q^{\tilde E}(v) \right) \ln (2m \delta^{-1})}{m}} + \frac{4 \ln (2m \delta^{-1})}{m} + \alpha.
\]
Given that $q^{\tilde E}( P^\ast) < \sqrt{\alpha}$, we have $q^{\tilde E}(1-q^{\tilde E}) \leq q^{\tilde E} \leq \sqrt{\alpha}$. Therefore, it suffices to have
\[
\sqrt{\frac{\sqrt{\alpha}}{m}} \le C_1 \alpha,
\]
for some universal constant $C_1$ to ensure that $|q^{\tilde E} - q^{\Dstar}| = O(\alpha)$, which implies $m \ge \nicefrac{1}{\{C_1^2\alpha^{3/2}\}}$ for some universal constant $C_1$.

\textbf{Case 2: $\tilde h(P^\ast) \leq \frac{1}{\sqrt{\alpha}}$.} 
For this case, $\tilde h(P^\ast) \leq \frac{1}{\sqrt{\alpha}}$ implies that $q^{\tilde E}( P^\ast) \geq \sqrt{\alpha}$.

By lemma~\ref{lem:regular-sample-single}, we have that 
\[
\frac{\OPT_{\tilde E}}{\OPT_{\Dstar}}  \geq 1 - \beta \tilde h_r(P^\ast),
\]
therefore it suffice to ensure that $1 - \beta \tilde h_r(P^\ast) \ge 1 - C_2 \sqrt{\alpha}$ for some universal constant $C_2$, which implies that $\beta \le q^{\tilde E}( P^\ast) \cdot C_2 \sqrt{\alpha}$. Applying [Lemma 5,~\citet{guo2019settling}], it suffices to have that $\sqrt{\frac{q^{\tilde E}( P^\ast)}{m}} \le \beta \leq q^{\tilde E}( P^\ast)\cdot C_2 \sqrt{\alpha}$, which yields that $m > \frac{1}{C_2^2\alpha q^{\tilde E}}$. Lastly, applying the fact that we are in the case where $q^{\tilde E}( P^\ast) \geq \sqrt{\alpha}$ we get that is suffices to have $m > \frac{1}{C_2^2\alpha^{3/2}}$ for some universal constant $C_2$. This completes the proof.

\qed

\subsection{Proof of Theorem \ref{thm:mhr-sample-UB}} \label{app:mhr-sample-UB}

  This follows easily from Theorem \ref{thm:mhr-inf-sample} and the DKW 
inequality \cite{dvoretzky1956asymptotic, massart1990tight} that states that
the empirical CDF with $m$ samples is close to the population CDF with an error
of at most
\[ O\left( \sqrt{\frac{\log(1/\delta)}{m}} \right) \]
with probability at least $1 - \delta$.
\qed
\subsection{Proof of Theorem~\ref{thm:sample-LB}}\label{sec:sample-complexity-lb-proofs}

We omit the details of this proof since it follows from Theorem 2 and
Appendix E of \cite{guo2019settling} applied for the case $n = 1$. The reason
is that if we could get a better bound in our corrupted case then this
algorithm could be used to improve our sample complexity result in the
non-corrupted case.

\end{document}